\numberwithin{equation}{section}
\newtheorem{theorem}{Theorem}[section] 
\newtheorem{proposition}{Proposition} 
\newtheorem{lemma}{Lemma}
\newtheorem{definition}{Definition}
\newtheorem{remark}{Remark}
\newtheorem{problem}{Problem}
\newtheorem{corollary}{Corollary}
\newtheorem{hypothesis}{Hypothesis}
\newenvironment{bezout}{\vspace*{3mm}\noindent\textbf{\textit{Bezout's bound:} \itshape}}{\vspace*{3mm}}
\newcommand{\N}{\mathbb{N}}
\newcommand{\Q}{\mathbb{Q}}
\newcommand{\F}{\mathbb{F}}
\newcommand{\G}{\mathbb{G}}
\newcommand{\K}{\mathbb{K}}
\newcommand{\A}{\mathbb{A}}
\definecolor{vert}{rgb}{0,0.5,0}
\newcommand{\glex}{\G_{\operatorname{lex}}}
\newcommand{\gdrl}{\G_{\operatorname{drl}}}
\newcommand{\id}[1]{\mathcal{#1}}
\newcommand{\Id}[1]{\left \langle #1 \right \rangle}
\newcommand{\inId}[2]{\text{in}_{#1}\left (#2 \right )}
\newcommand{\LT}[2]{\operatorname{LT}_{\operatorname{#1}}\left (#2 \right )}
\newcommand{\NF}[2]{\operatorname{NF}_{\operatorname{#1}}\left (#2 \right )}
\newcommand{\stair}[1]{E \left ( #1 \right )}
\title{Polynomial Systems Solving by Fast Linear Algebra.}
\author{Jean-Charles Faugère\footnotemark[2] \and Pierrick Gaudry\footnotemark[3] \and Louise Huot\footnotemark[2] \and Guénaël Renault\footnotemark[2]} 
\date{}
\begin{document}

\maketitle

\footnotetext[2]{UPMC, Universit\'e Paris 06; INRIA, Paris Rocquencourt Center; PolSys Project, LIP6/CNRS; UMR 7606, France; Email addresses: Jean-Charles.Faugere@inria.fr, \{Louise.Huot,Guenael.Renault\}@lip6.fr}
\footnotetext[3]{Université de Lorraine; LORIA, Lorraine; CARAMEL Project, LORIA/CNRS; UMR 7503, France; Email address: Pierrick.Gaudry@loria.fr}

%-*- coding: utf-8 -*-

\begin{abstract}
Polynomial system solving is a classical problem in mathematics with a
wide range of applications. This makes its complexity a fundamental
problem in computer science.  Depending on the context, solving has
different meanings.  In order to stick to the most general case, we
consider a representation of the solutions from which one can easily
recover the exact solutions or a certified approximation of
them. Under generic assumption, such a representation is given by the
lexicographical Gröbner basis of the system and consists of a set of
univariate polynomials. The best known algorithm for computing the
lexicographical Gröbner basis is in $\widetilde{O}(d^{3n})$ arithmetic
operations where $n$ is the number of variables and $d$ is the maximal
degree of the equations in the input system. The notation
$\widetilde{O}$ means that we neglect polynomial factors in $n$. We
show that this complexity can be decreased to $\widetilde{O}(d^{\omega
  n})$ where $2 \leq \omega < 2.3727$ is the exponent in the
complexity of multiplying two dense matrices. Consequently, when the
input polynomial system is either generic or reaches the Bézout bound,
the complexity of solving a polynomial system is decreased from
$\widetilde{O}(D^3)$ to $\widetilde{O}(D^\omega)$ where $D$ is the
number of solutions of the system.  To achieve this result we propose
new algorithms which rely on fast linear algebra. When the degree of
the equations are bounded uniformly by a constant we propose a
deterministic algorithm. In the unbounded case we present a Las Vegas
algorithm.
\end{abstract}

\section{Introduction}

%-*- coding: utf-8 -*-

\paragraph{Context.}
Polynomial systems solving is a classical problem in mathematics.  It
is not only an important problem on its own, but it also has a wide
spectrum of applications.
It spans several research disciplines such as coding
theory \cite{LoYo97,BoPe99}, cryptography \cite{BPW06,Jou13},
computational game theory \cite{Dat03,Stu02},
optimization \cite{GrSa11}, \textit{etc}.
The ubiquitous nature of the problem positions the study of its complexity
at the center of theoretical computer science.
\textit{Exempli gratia}, 
in the context of computational geometry, a step of the algorithm by
Safey el Din and Schost \cite{roadmap2}, the first algorithm with better
complexity than the one by Canny \cite{canny} for solving the road map
problem, depends on solving efficiently polynomial systems.  In
cryptography, the recent breakthrough algorithm due to
Joux \cite{Jou13} for solving the discrete logarithm problem in finite
fields of small characteristic heavily relies on the same capacity.
However, depending on the context, {\em solving a polynomial system}
has different meanings.  If we are working over a finite field, then
{\em solving} generally means that we enumerate all the possible
solutions lying in this field.  On the other hand, if the field is of
characteristic zero, then {\em solving} might mean that we approximate
the real (complex) solutions up to a specified precision.
Therefore, an algorithm for solving polynomial systems should provide
an output that is valid in all contexts.  In this paper we present an
efficient algorithm to tackle the PoSSo (\textit{Polynomial Systems
Solving}) problem, the ouput of which is a representation of the roots
suitable in all the cases. The precise definition of the problem is as
follows:
\begin{problem}[PoSSo]\label{pb:posso}
Let $\K$ be the rational field $\Q$ or a finite field $\F_q$. Given a
set of polynomial equations with a finite number of solutions which
are all simple
$$
\mathcal{S} : \{f_1 = \cdots = f_s = 0\}
$$
with $f_1,\ldots,f_s \in \K[x_1,\ldots,x_n]$, find a univariate
polynomial representation of the solutions of
$\mathcal{S}$ \textit{i.e.}  $h_1,\ldots,h_n \in \K[x_n]$ such that
the system $\{ x_1 - h_1 = \cdots = x_{n-1} - h_{n-1} = h_n = 0\}$
have the same solutions as $\mathcal{S}$.
\end{problem}

It is worth noting that enumerating the solutions in a finite field or
approximating the solutions in the characteristic zero case can be
easily done once the underlying PoSSo problem is solved. Actually,
from a given univariate polynomial representation $\{ x_1 - h_1
= \cdots = x_{n-1} - h_{n-1} = h_n = 0\}$ one just have to find the
(approximated) roots of the univariate polynomial $h_n$. The
algorithms to compute such roots have their complexities in function
of $D$, the degree of $h_n$, well handled and in general they are
negligible in comparison to the cost of solving the PoSSo
problem. Note that $D$ is also the total number of solutions of the
polynomial system. For instance, if $\K = \F_q$ is a finite field, the
enumeration of the roots lying in $\F_q$ of $h_n$ can be done in
$\widetilde{O}(D)$ arithmetic operations where the notation
$\widetilde{O}$ means that we neglect logarithmic factors in $q$ and
$D$, see \cite{GaGe03}. In the characteristic zero case, finding an
approximation of all the real roots of $h_n$ can also be done in
$\widetilde{O}(D)$ where, in this case, we neglect logarithmic
factors in $D$, see \cite{Pan02}.

A key contribution to the PoSSo problem is the multivariate resultant
introduced by Macaulay in the beginning of the 20th
century \cite{Mac16}. The next major achievement on PoSSo appeared in
the 1960s when Buchberger introduced, in his PhD thesis, the concept
of Gröbner bases and the first algorithm to compute them.  Since then,
Gröbner bases have been extensively studied (see for
instance \cite{BaSt87,CLO07,Stu02,LaLa91}) and have become a powerful
and a widely used tool to solve polynomial systems. A major complexity
result related to the PoSSo problem has been shown by Lakshman and
Lazard in \cite{LaLa91} and states that this problem can be solved in
a simply exponential time in the maximal degree $d$ of the
equations \textit{i.e.} in $O(d^{O(n)})$ arithmetic operations where
$n$ is the number of variables.  As the number of solutions can be
bounded by an exponential in this degree thanks to the Bézout bound,
this result yields the first step toward a polynomial complexity in
the number of solutions for the PoSSo problem. In our context, the
Bézout bound can be stated as follows.

\begin{bezout}
Let $f_1,\ldots,f_s \subset \K[x_1,\ldots,x_n]$ and let
$d_1,\ldots,d_s$ be their respective degree. The PoSSo problem has at
most $\prod_{i=1}^s d_i$ solutions in an algebraic closure of $\K$
and counted with multiplicities.
\end{bezout}

The Bézout bound is \emph{generically} reached \textit{i.e.} $D
= \prod_{i=1}^s d_i$. We mean by generically that the system is
generic that is to say, given by a sequence of dense polynomials whose
coefficients are unknowns or any random instantiations of these
coefficients.

Whereas for the particular case of approximating or computing a
rational parametrization of all the solutions of a polynomial system
with coefficients in a field of characteristic zero there exist
algorithms with sub-cubic complexity in $D$ (if the number of real
roots is logarithmic in $D$ then $\widetilde{O}(12^nD^2)$ for the
approximation, see \cite{MoPa98}, and if the multiplicative structure
of the quotient ring is known $O \left (n2^nD^{\frac{5}{2}}\right )$
for the rational parametrization, see \cite{BSS03}). To the best of
our knowledge, there is no better bound than $O(nD^3)$ for the
complexity of computing a univariate polynomial representation of the
solutions. According to the Bézout bound the optimal complexity to
solve the PoSSo problem is then polynomial in the number of
solutions. One might ask whether the existence of an algorithm with
(quasi) linear complexity is possible.  Consider the simplest case of
systems of two equations $\{f_1 = f_2 = 0\}$ in two variables. Solving
such a system can be done by computing the resultant of the two
polynomials with respect to one of the variables.
%%\textcolor{red}{The bivariate resultant and its complexity are well understood tools.}
 From \cite{GaGe03}, the complexity of computing such a resultant is
polynomial in the Bézout bound with exponent strictly greater than
one. In the general case \textit{i.e.} more than two variables, the
PoSSo problem is much more complicated. Consequently, nothing
currently suggests that a (quasi) linear complexity is possible.  The
main goal of this paper is to provide the first algorithm with
sub-cubic complexity in $D$ to solve the PoSSo problem, which is
already a noteworthy progress. More precisely, we show that when the
Bézout bound is reached, the complexity to solve the PoSSo problem is
polynomial in the number of solutions with exponent $2 \leq \omega <
3$, where $\omega$ is the exponent in the complexity of multiplying
two dense matrices. Since the 1970s, a fundamental issue of
theoretical computer science is to obtain an upper bound for $\omega$
as close as possible to two. In particular, Vassilevska Williams
showed in 2011 \cite{Vas12} that $\omega$ is upper bounded by
$2.3727$ \textit{i.e.} $2 \leq \omega < 2.3727$. By consequence, our
work tends to show that a quadratic complexity in the number of
solutions for the PoSSo problem can be expected. A direct consequence
of such a result is the improvement of the complexity of many
algorithms requiring to solve the PoSSo problem, for instance in
asymmetric \cite{Gau09,FGHR13} or symmetric \cite{BPW06,BPW06b}
cryptography.

\paragraph{Related works.}
In order to reach this goal we develop new algorithms in Gr\"obner
basis theory. Let $\mathcal{S}$ be a polynomial system in
$\K[x_1,\dots,x_n]$ verifying the hypothesis of
Problem~\ref{pb:posso}, \textit{i.e.} with a finite number of
solutions in an algebraic closure of $\K$ which are all simple. A Gröbner basis is to $\mathcal{S}$ what
row echelon form is to a linear system. For a fixed monomial ordering,
given a system of polynomial equations, its associated Gröbner basis
is unique after normalization. From an algorithmic point of view,
monomial orderings may differ: some are attractive for the efficiency
whereas some others give rise to a more structured output. Hence, the fastest monomial ordering is usually the
degree reverse lexicographical ordering, denoted DRL. However, in
general, a DRL Gröbner basis does not allow to list the solutions of
\(\mathcal{S}\). An
important ordering which provides useful outputs is the
lexicographical monomial ordering, denoted LEX in the sequel.
Actually, for a characteristic $0$ field or with a sufficiently large
one, up to a linear change of the coordinates, a Gr\"obner basis for
the LEX ordering of the polynomial system $\mathcal{S}$ gives a
univariate polynomial representation of its solutions \cite{GiMo89,
Lak90}. That is to say, computing this Gr\"obner basis is equivalent
to solving the PoSSo problem~\ref{pb:posso}. It is usual to define the
following: the ideal generated by $\mathcal{S}$ is said to be
in \emph{Shape Position} when its LEX Gr\"obner basis if of the form
$\{x_1 - h_1(x_n),\ldots,x_{n-1} - h_{n-1}(x_n), h_n(x_n)\}$ where
$h_1,\ldots,h_{n-1}$ are univariate polynomials of degree less than
$D$ and $h_n$ is a univariate polynomial of degree $D$ (\textit{i.e.}
one does not need to apply any linear change of coordinates to get the
univariate polynomial representation). In the first part of this paper,
we will avoid the consideration of the probabilistic choice of the
linear change of coordinates in order to be in Shape Position, thus we
assume the following hypothesis.

\begin{hypothesis}\label{hyp:shapeposition}
Let $\mathcal{S} \subset \K[x_1,\ldots,x_n]$ be a polynomial system
with a finite number of solutions which are all simple. Its associated
LEX Gröbner basis is in \textit{Shape Position}.
\end{hypothesis}

From a DRL Gröbner basis,
one can compute the corresponding LEX Gröbner basis by using a change
of ordering algorithm. Consequently, when the associated LEX Gröbner
basis of the system $\mathcal{S}$ is in \textit{Shape
Position} \textit{i.e.} $\mathcal{S}$ satisfies
Hypothesis~\ref{hyp:shapeposition} the usual and most efficient algorithm
is first to compute a DRL Gröbner basis. Then,  the LEX Gröbner basis is computed by using a
change of ordering algorithm. This is summarized in
Algorithm~\ref{usual_posso}.

\begin{algorithm}[!ht]
\SetKwData{Left}{left}\SetKwData{This}{this}\SetKwData{Up}{up}
\SetKwFunction{Union}{Union}\SetKwFunction{FindCompress}{FindCompress}
\SetKwInOut{Input}{Input}\SetKwInOut{Output}{Output}
  \Input{A polynomial system $\mathcal{S} \subset \K[x_1,\ldots,x_n]$ which satisfies Hypothesis~\ref{hyp:shapeposition}.}
  \Output{The LEX Gröbner basis of $\mathcal{S}$ \textit{i.e.} the univariate polynomial representation of the solutions of $\mathcal{S}$. }
  \caption{Solving polynomial systems}\label{usual_posso}
  Computing the DRL Gröbner basis of $\langle S \rangle$\;
  From the DRL Gröbner basis, computing the LEX Gröbner basis of $\langle S \rangle$\;
  \Return The LEX Gröbner basis of $\mathcal{S}$\;
\end{algorithm}

The first step of Algorithm~\ref{usual_posso} can be done by using
$F_4$ \cite{Fau99} or $F_5$ \cite{Fau02} algorithms.  The complexity
of these algorithms for \emph{regular systems} is well handled. For
the homogeneous case, the regular property for a polynomial system
$\{f_1,\ldots,f_s\} \subset \K[x_1,\ldots,x_n]$ is a generic property
which implies that for all $i \in \{2,\ldots,s\}$, the polynomial
$f_i$ does not divide zero in the quotient ring
$\K[x_1,\ldots,x_n]/\langle f_1,\ldots,f_{i-1} \rangle$. There is an
analogous definition for the affine case, see
Definition~\ref{def:regular}. For the particular case of the DRL
ordering, computing a DRL Gröbner basis of a regular system in
$\K[x_1,\ldots,x_n]$ with equations of same degree, $d$, can be done
in $\widetilde{O}(d^{\omega n})$ arithmetic operations
(see \cite{BFSY05,Laz83}). Moreover, the number of solutions $D$ of
the system can be bounded by $d^n$ by using the Bézout bound. Since,
this bound is generically (\textit{i.e.} almost always)
reached \textit{i.e.} $D = d^n$, computing a DRL Gröbner basis can be
done in $\widetilde{O}(D^\omega)$ arithmetic operations.  Hence, in
this case the first step of Algorithm~\ref{usual_posso} has a
polynomial arithmetic complexity in the number of solutions with
exponent $\omega$.

The second step of Algorithm~\ref{usual_posso} can be done by using a
change of ordering algorithm.  In 1993, Faugère \textit{et al.} showed
in \cite{FGLM93} that change of ordering for zero dimensional ideals
is closely related to linear algebra.  Indeed, they proposed a change
of ordering algorithm, denoted FGLM in the literature, which proceeds
in two stages. Let $\G_{>_1}$ be the given Gröbner basis w.r.t. the
order $>_1$ of an ideal in $\K[x_1,\ldots,x_n]$. First, we need for
each $i \in \{1,\ldots,n\}$ a matrix representation, $T_i$, of the
linear map of
$\K[x_1,\ldots,x_n]/\Id{\G_{>_1}}\rightarrow\K[x_1,\ldots,x_n]/\langle
\G_{>_1} \rangle$ corresponding to the multiplication by $x_i$. The matrix $T_i$ is called multiplication matrix by \(x_{i}\).
These matrices are constructed by computing $O(nD)$ matrix-vector
products (of size $D\times D\text{ times }D\times 1$). Hence, the
first stage of FGLM algorithm (Algorithm~\ref{fglm}) has an arithmetic
complexity bounded by $O(nD^3)$. Once all the multiplication matrices
are computed, the second Gröbner basis w.r.t. the new monomial order
$>_2$ is recovered by testing linear dependency of $O(nD)$ vectors of
size \(D\times 1\). This can be done in $O(nD^3)$ arithmetic
operations. This algorithm is summarized in
Algorithm~\ref{fglm}. Therefore, in the context of the existing
knowledge, solving regular zero-dimensional systems can be done in
$O(nD^3)$ arithmetic operations and change of ordering appears as the
bottleneck of PoSSo.

\begin{algorithm}[!ht]
\SetKwData{Left}{left}\SetKwData{This}{this}\SetKwData{Up}{up}
\SetKwFunction{Union}{Union}\SetKwFunction{FindCompress}{FindCompress}
\SetKwInOut{Input}{Input}\SetKwInOut{Output}{Output}
  \Input{The Gröbner basis w.r.t. $>_1$ of an ideal $\id{I}$.}
  \Output{The Gröbner basis w.r.t. $>_2$ of $\id{I}$.} 
  \caption{FGLM}\label{fglm}
  Computing the multiplication matrices $T_1,\ldots,T_n$;\hspace*{2.75cm}
  {\scriptsize // \textit{$O(nD)$ matrix-vector products}}\\\nllabel{step1fglm}
  From $T_1,\ldots,T_n$ computing the Gröbner basis of $\id{I}$ w.r.t. $>_2$;
  \hspace*{1.1cm}{\scriptsize // \textit{$O(nD)$ linear dependency tests}}\nllabel{step2fglm}
\end{algorithm}

\paragraph{Fast Linear Algebra.}
Since the second half of the 20th century, an elementary issue in
theoretical computer science was to decide if most of linear algebra
problems can be solved by using fast matrix multiplication and
consequently bound their complexities by that of multiplying two dense
matrices \textit{i.e.} $O(m^\omega)$ arithmetic operations where
\(m\times m\) is the size of the matrix and $2 \leq \omega < 2.3727$. For instance, Bunch and Hopcroft showed
in \cite{BuHo74} that the inverse or the triangular decomposition can
be done by using fast matrix multiplication. Baur and Strassen
investigated the determinant in \cite{BaSt83}. The case of the
characteristic polynomial was treated by Keller-Gehrig
in \cite{Kel85}. Although that the link between linear algebra and the
change of ordering has been highlighted for several years, relating
the complexity of the change of ordering with fast matrix
multiplication complexity is still an open issue. \paragraph{Main
results.}The aim of this paper is then to give an initial answer to
this question in the context of polynomial systems
solving \textit{i.e.}  for the special case of the DRL and LEX
orderings. More precisely, our main results are summarized in the
following theorems.  First we present a \emph{deterministic} algorithm
computing the univariate polynomial representation of a polynomial
system verifying Hypothesis~\ref{hyp:shapeposition} and whose
equations have bounded degree.

\begin{theorem}\label{thm:determinist}
Let $\mathcal{S} = \{ f_1,\ldots,f_n \} \subset \K[x_1,\ldots,x_n]$ be
a polynomial system verifying Hypothesis~\ref{hyp:shapeposition} and
let $\K$ be the rational field $\Q$ or a finite field $\F_q$. If the
sequence $(f_1,\ldots,f_n)$ is a regular sequence and if the degree of
each polynomial $f_i$ ($i=1,\ldots,n$) is uniformly bounded by a fixed
integer $d$ then there exists a deterministic algorithm solving
Problem~\ref{pb:posso} in $\widetilde{O}(d^{\omega n} + D^\omega)$
arithmetic operations where the notation $\widetilde{O}$ means that we
neglect logarithmic factors in $D$ and polynomial factors in $n$ and
$d$.
\end{theorem}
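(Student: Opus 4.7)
My plan is to follow the two-stage scheme of Algorithm~\ref{usual_posso}: first compute a DRL Gröbner basis, then change ordering to LEX. The first stage is handled by the known complexity bound for $F_5$ on a regular sequence of $n$ equations of degree at most $d$ in $n$ variables, which produces the DRL basis in $\widetilde{O}(d^{\omega n})$ arithmetic operations \cite{BFSY05}; this supplies the first summand of the target complexity. The technical contribution is therefore a replacement for the second stage (FGLM) that runs in $\widetilde{O}(d^{\omega n} + D^\omega)$ rather than $O(nD^3)$.

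For the new change of ordering, I would proceed in two substeps. In the first substep, I compute the multiplication matrices $T_1, \ldots, T_n$ on the $D$-dimensional quotient $\K[x_1,\ldots,x_n]/\Id{\mathcal{S}}$ using the reduced DRL basis. I would batch the computation of the normal forms $\mathrm{NF}(x_i \cdot m)$, for $m$ ranging over the quotient basis and $i=1,\ldots,n$, into a small number of dense matrix products; exploiting the bounded-degree hypothesis and the tight control on the staircase of a regular system, the total cost is $\widetilde{O}(d^{\omega n})$. In the second substep I exploit Hypothesis~\ref{hyp:shapeposition}: the $D$ solutions have pairwise distinct $x_n$-coordinates, so $T_n$ has $D$ distinct eigenvalues, its minimal polynomial coincides with its characteristic polynomial and equals $h_n$, and the coordinate vector $v$ of $1$ in the quotient basis is cyclic. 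Using a Keller-Gehrig-style iteration I would compute $h_n$ together with the Krylov matrix $K = \bigl[\,v \mid T_n v \mid \cdots \mid T_n^{D-1}v\,\bigr]$ in $\widetilde{O}(D^\omega)$ operations. Each polynomial $h_i$ for $i<n$ is then the unique decomposition of $T_i v$ on the columns of $K$, so all of them are recovered by solving the single matrix equation $K Y = \bigl[\,T_1 v \mid \cdots \mid T_{n-1}v\,\bigr]$ in $O(D^\omega)$, whence the second summand of the claimed bound.

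The main obstacle is the first substep above: computing all $n$ multiplication matrices within the $\widetilde{O}(d^{\omega n} + D^\omega)$ budget. The naïve approach based on independent normal-form reductions, one per basis monomial and per variable, costs $\Theta(nD^3)$ and destroys the desired complexity. The technical heart of the proof lies in organising the reductions modulo the DRL Gröbner basis as block matrix operations amenable to fast matrix multiplication, and in showing that, thanks to the regularity assumption and the uniform degree bound, the relevant Macaulay-type matrix has dimensions compatible with an $\widetilde{O}(d^{\omega n})$ global cost; this is also precisely where the deterministic character of the construction has to be preserved, since no random linear change of coordinates is needed under Hypothesis~\ref{hyp:shapeposition}.
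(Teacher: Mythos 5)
Your overall architecture matches the paper's: $F_5$ for the DRL basis in $\widetilde{O}(d^{\omega n})$, then multiplication matrices by batched linear algebra, then recovery of the LEX basis by fast/structured linear algebra in $\widetilde{O}(D^\omega)$. Your second substep is a legitimate variant of what the paper does: instead of inverting the Krylov matrix $K=\bigl(\mathbf{1}\mid T_n\mathbf{1}\mid\cdots\mid T_n^{D-1}\mathbf{1}\bigr)$ (which is indeed invertible in Shape Position, since $1,x_n,\ldots,x_n^{D-1}$ are independent modulo the ideal) and solving $KY=\bigl(T_1\mathbf{1}\mid\cdots\mid T_{n-1}\mathbf{1}\bigr)$, the paper projects onto a random vector $\mathbf{r}$, obtains $h_n$ by Berlekamp--Massey (or a deterministic variant) and the remaining $h_i$ by solving Hankel systems in $O(nD\log_2^2 D)$, reading the vectors $T_i\mathbf{1}$ directly from $\gdrl$ so that only $T_n$ is ever needed. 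Your route is fine and arguably more self-contained on the determinism side; both fit in the budget.

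The genuine gap is the first substep, which you yourself flag as ``the main obstacle'' and ``the technical heart'' without resolving it: you assert that the $nD$ normal forms can be batched into fast matrix products, but give no construction, and this construction is precisely the paper's main new contribution (Algorithm~\ref{algmultmat}). The missing idea is the following. Process the frontier $F=\{x_i\epsilon_j\}\setminus B$ degree by degree. In degree $\delta$, every $t\in F_\delta$ is either the leading term of some $g\in\gdrl$ (contribute the row $\phi(g)$) or factors as $t=x_kt'$ with $t'\in F_{\delta-1}$, whose normal form is already known; then $x_k\bigl(t'-\NF{drl}{t'}\bigr)$ is a polynomial of the ideal with leading term $t$ and at most $D+1$ terms, and you contribute its row. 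Stacking these rows gives a Macaulay-type matrix of size at most $nD\times(n+1)D$ whose left block is unit upper triangular after sorting $F_\delta$, and \emph{one} reduced row echelon form yields simultaneously all normal forms in degree $\delta$, at cost $O(n^\omega D^\omega)$. The number of iterations is $d_{\max}-d_{\min}$, which regularity bounds via the Macaulay bound by $n(d-1)+1$, giving $O(dn^{\omega+1}D^\omega)=\widetilde{O}(D^\omega)$ overall. Without this degree-by-degree invariant (each echelon form can only be built because the normal forms one degree below are already reduced), the claim that the reductions ``are amenable to fast matrix multiplication'' is not justified, and the naive alternative falls back to $O(nD^3)$.
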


Then we present a \emph{Las Vegas} algorithm extending the result of
Theorem~\ref{thm:determinist} to polynomial systems not necessarily
verifying Hypothesis~\ref{hyp:shapeposition} and whose equations have
non fixed degree.

\begin{theorem}\label{thm:lasvegas}
Let $\mathcal{S} = \{ f_1,\ldots,f_n \} \subset \K[x_1,\ldots,x_n]$ be
a polynomial system and let $\K$ be the rational field $\Q$ or a
finite field $\F_q$. If the sequence $(f_1,\ldots,f_n)$ is a regular
sequence where the degree of each polynomial is uniformly bounded by a
non fixed parameter $d$ then there exists a Las Vegas algorithm solving
Problem~\ref{pb:posso} in $\widetilde{O}(d^{\omega n} + D^\omega)$
arithmetic operations; where the notations $\widetilde{O}$ means that
we neglect logarithmic factors in $D$ and polynomial factors in $n$.
\end{theorem}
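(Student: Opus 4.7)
The plan is to reduce the general case to the deterministic setting of Theorem~\ref{thm:determinist} via a randomized linear change of coordinates, together with an a posteriori verification step that turns the Monte-Carlo reduction into a Las Vegas algorithm.

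First I would draw a random $A \in \operatorname{GL}_n(\K)$ with entries sampled uniformly from a finite subset $S \subset \K$ of size polynomial in $D$ (working over a small extension $\F_{q^k}$ with $k = O(\log D)$ if $q$ is too small to supply enough distinct values), and replace the input system $\mathcal{S}$ by $\tilde{\mathcal{S}} = \{f_1 \circ A, \ldots, f_n \circ A\}$. By the classical theorem of Gianni, Lakshman and Lazard cited earlier in the paper, a generic linear change of variables places the ideal in Shape Position, so $\tilde{\mathcal{S}}$ satisfies Hypothesis~\ref{hyp:shapeposition} except with probability $O(D^2/|S|)$. Such a change of coordinates preserves regularity of the sequence, leaves $D$ invariant, and does not raise the degree, so the deterministic algorithm of Theorem~\ref{thm:determinist} can be run on $\tilde{\mathcal{S}}$ to produce a candidate univariate representation $\{x_1 - \tilde h_1, \ldots, x_{n-1} - \tilde h_{n-1}, \tilde h_n\}$ within the target bound.

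I would then certify the output by substituting each $\tilde h_i$ for $x_i$ in every $f_j \circ A$ and checking that the resulting univariate polynomial vanishes modulo $\tilde h_n$; this test is sufficient because $\tilde h_n$ has degree exactly $D$, so the candidate is the reduced LEX Gröbner basis of $\tilde{\mathcal{S}}$ if and only if the check succeeds. Using fast multiplication modulo $\tilde h_n$, the certification costs $\widetilde O(D^\omega)$ arithmetic operations. If it fails, the algorithm restarts with an independent $A$; the expected number of trials is $O(1)$. On success the univariate representation for the original $\mathcal{S}$ is obtained by pulling back through $A^{-1}$, i.e.\ substituting linear forms into the $\tilde h_i$ and reducing modulo $\tilde h_n$, which again fits inside $\widetilde O(D^\omega)$.

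The main obstacle is the complexity accounting: the deterministic statement absorbs $\operatorname{poly}(d)$ factors into $\widetilde O$ whereas the Las Vegas statement does not, so each sub-routine must be analysed to confirm that its $d$-dependence is captured by the dominant $d^{\omega n}$ term rather than by a spurious multiplicative $\operatorname{poly}(d)$ overhead. Since $D \leq d^n$ one has $\log D = O(n \log d)$, so the bookkeeping costs on Macaulay-type matrices in the DRL step and on the multiplication matrices in the change-of-ordering step remain within $\operatorname{poly}(n) \cdot \operatorname{polylog}(D)$ of the dominant linear-algebra cost, provided one executes those operations by a single pass of fast matrix multiplication rather than monomial by monomial. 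Combining this careful accounting with the randomized Shape-Position reduction and the certification step above then yields the announced $\widetilde O(d^{\omega n} + D^\omega)$ Las Vegas complexity.
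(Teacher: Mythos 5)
Your plan — randomize the coordinates to land in Shape Position, run the deterministic algorithm of Theorem~\ref{thm:determinist}, certify the output by substitution modulo $\tilde h_n$, and retry on failure — is a reasonable-sounding reduction, but it does not actually meet the stated complexity bound, and it misses the structural ingredient the paper introduces precisely to handle the unbounded-degree regime.

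The crux is the $\operatorname{poly}(d)$ overhead you flag in your final paragraph, and your proposed fix does not remove it. In Theorem~\ref{thm:determinist}, the $\widetilde O$ explicitly swallows polynomial factors in $d$; in Theorem~\ref{thm:lasvegas} it does not. The source of that hidden $d$ is Algorithm~\ref{algmultmat}: Proposition~\ref{prop:mult} gives $O((d_{\max}-d_{\min})\, n^{\omega} D^{\omega})$ for building the multiplication matrices, and for a regular sequence the Macaulay bound gives $d_{\max}\leq n(d-1)+1$, so the degree loop runs $\Theta(nd)$ times. Each pass genuinely costs a fast row-echelon form, and the passes cannot be fused into ``a single pass of fast matrix multiplication'' because the normal forms computed at degree $\delta$ are inputs to the Macaulay rows constructed at degree $\delta+1$; the computation is inherently sequential in the degree. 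Thus running Algorithm~\ref{algmultmat} inside your Las Vegas wrapper still leaves a multiplicative $d$, which is precisely what the theorem forbids.

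The paper's proof (Section~\ref{sec:wild}) takes a genuinely different route: it avoids computing multiplication matrices at all. Only $T_n$ is needed by Algorithm~\ref{algchgord}, and Theorem~\ref{thm:Tnopt1} together with Corollary~\ref{cor:Tnopt2} shows that after a random $g\in\operatorname{GL}(\K,n)$ the DRL initial ideal of $g\cdot\id{I}$ equals the generic initial ideal $\operatorname{Gin}(\id{I}^h)$, which is Borel-fixed (Galligo, Bayer--Stillman, Pardue, via Proposition~\ref{Moreno} and Theorem~\ref{borel}); as a consequence every monomial $x_n\epsilon_i$ with $\epsilon_i\in B$ lies either in $B$ or in $\stair{g\cdot\id{I}}$, so $T_n$ can be read from $\gdrl$ with no arithmetic at all. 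This is the ingredient that eliminates the $d$ factor and turns $\widetilde O(nd^{\omega n}+\log_2(D)(D^{\omega}+n\log_2(D)D))$ into the claimed bound. Your proposal uses randomization only to secure Shape Position (condition~\eqref{radical}), but the paper's randomization must simultaneously put the ideal in the Zariski open set $U$ where $\operatorname{in}_{\mathrm{drl}}(g\cdot\id{I})=\operatorname{Gin}(\id{I}^h)$ (condition~\eqref{zariski}), plus the characteristic condition~\eqref{degree}. Without that, there is no mechanism in your writeup for avoiding the multiplication-matrix bottleneck. Your a posteriori verification by substitution modulo $\tilde h_n$ is a perfectly valid and somewhat cleaner certification than the paper's (the paper simply checks whether $T_n$ can be read from $\gdrl$ and whether $\deg(h_n)=D$), but a correct certificate does not rescue an algorithm whose inner loop is too slow.
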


If $\K = \Q$ the probability of failure of the algorithm mentioned in
Theorem~\ref{thm:lasvegas} is zero while in the case of a finite field
$\F_q$ of characteristic $p$, it depends on the size of $p$ and $q$,
see Section~\ref{newposso}.

As previously mentioned, the Bézout bound allows to bound $D$ by
$d^{n}$ and generically this bound is reached \text{i.e.} $D =
d^{n}$. By consequence, Theorem~\ref{thm:determinist} (respectively
Theorem~\ref{thm:lasvegas}) means that if the equations have fixed
(respectively non fixed) degree then there exists a deterministic
(respectively a Las Vegas) algorithm computing the univariate
polynomial representation of generic polynomial systems in
$\widetilde{O}(D^\omega)$ arithmetic operations.

To the best of our knowledge, these complexities are the best ones for
solving the PoSSo Problem~\ref{pb:posso}. For example, in the case of
field of characteristic zero, under the same hypotheses as in
Theorem~\ref{thm:determinist}, one can now compute a univariate
polynomial representation of the solutions in
$\widetilde{O}(D^\omega)$ without assuming that the multiplicative
structure of $\K[x_1,\ldots,x_n]$ is known. This can be compared to
the method in \cite{BSS03} which, assuming the multiplicative
structure of the quotient ring known, computes a parametrization of
the solutions in $O \left (n2^nD^\frac{5}{2} \right )$. Noticing that
under the hypotheses of Theorem~\ref{thm:determinist}, $n$ is of the
order of $\log_2(D)$ and the algorithm in \cite{BSS03} has a
complexity in $\widetilde{O}\left (D^{\frac{7}{2}}\right )$.

\paragraph{Importance of the hypotheses.}
The only two hypotheses which limits the applicability of the
algorithms in a meaningful way is that (up to a linear change of
variables) the ideal admits a LEX Gröbner basis in \emph{Shape
Position} and that the number of solutions in an algebraic closure of
the coefficient field counted with multiplicity is finite. The other
hypotheses are stated either to simplify the paper or to simplify the
complexity analysis. More precisely, the hypothesis that the solutions
are all simple is minor. Indeed, it is sufficient to get the required
hypothesis about the shape of the LEX Gröbner basis but not
necessary. The hypothesis stating the regularity of the system is
required to get a complexity bound on the computation of the first
(DRL) Gröbner basis. Indeed, without this hypothesis the computation
of the first Gröbner basis is possible but there is no known
complexity analysis of such a computation. It is a common assumption
in algorithmic commutative algebra. The assumption on the degree of
the equations in the input system is stated in order to obtain a
simply form of the complexity of computing the first Gröbner
basis \textit{i.e.}  $\widetilde{O}(d^{\omega n})$. Finally, the
hypothesis of genericity (\textit{i.e.} the Bézout bound is reached) is
required to express the complexity of the computation of the first
Gröbner basis in terms of the number of solutions \textit{i.e.}
$\widetilde{O}(d^{\omega n}) = \widetilde{O}(D^\omega)$.  We would
like to precise that all the complexities in the paper are given in
the worst case for all inputs with the required assumptions.

\paragraph{Outline of the algorithms.}
In 2011, Faugère and Mou proposed in \cite{FaMo11} another kind of
change of ordering algorithm to take advantage of the sparsity of the
multiplication matrices. Nevertheless, when the multiplication
matrices are not sparse, the complexity is still in $O(D^3)$
arithmetic operations. Moreover, these complexities are given assuming
that the multiplication matrices have already been computed and the
authors of \cite{FaMo11} do not investigate their computation whose
complexity is still in $O(nD^3)$ arithmetic operations. In FGLM, the
matrix-vectors products (respectively linear dependency tests) are
intrinsically sequential.  This dependency implies a sequential order
for the computation of the matrix-vectors products (respectively
linear dependency tests) on which the correctness of this algorithm
strongly relies.  Thus, in order to decrease the complexity to
$\widetilde{O}\left (D^\omega \right )$ we need to propose new
algorithms.

To achieve result in Theorem~\ref{thm:determinist} we propose two
algorithms in \(\widetilde{O}(D^\omega)\), each of them corresponding
to a step of the Algorithm~\ref{fglm}.

We first present an algorithm to compute multiplication matrices
assuming that we have already computed a Gr\"obner basis \(\G\).  The
bottleneck of the existing algorithm~\cite{FGLM93} came from the fact
that \(n D\) normal forms have to be computed in a sequential order.
The key idea is to show that we can compute \emph{simultaneously} the
normal form of all monomials \emph{of the same degree} by computing
the row echelon form of a well chosen matrix.  Hence, we replace
the \(n D\) normal form computations by \(\log_2(D)\) (we iterate
degree by degree) row echelon forms on matrices of size \((n\ D)\times
(n D+D)\). To compute simultaneously these normal forms we observe
that if $r$ is the normal form of a monomial $m$ of degree \(d-1\)
then $m - r$ is a polynomial in the ideal of length at most \(D+1\);
then we generate the Macaulay matrix of all the products \(x_{i
}m-x_{i}r\) (for \(i\) from \(1\) to \(n\)) together with the
polynomials \(g\) in the Gr\"obner basis \(\G\) of degree
exactly \(d\). We recall that the Macaulay matrix of some
polynomials \cite{Laz83,Mac16} is a matrix whose rows consist of the
coefficients of these polynomials and whose columns are indexed with
respect to the monomial ordering. Computing a row echelon form of the
concatenation of all the Macaulay matrices in degree less or equal to
$d$ enable us to obtain all the normal forms of all monomials of
degree \(d\). This yields an algorithm to compute the multiplication
matrices of arithmetic complexity $O(\delta n^\omega D^\omega)$ where
$\delta$ is the maximal degree of the polynomials in \(\G\); note that
this algorithm can be seen as a redundant version of \(F_{4}\)
or \(F_{5}\).

In order to prove Theorem~\ref{thm:lasvegas} we use the fact that, in
a generic case, only the multiplication matrix by the \emph{smallest
variable} is needed. Surprisingly, we show (Theorem~\ref{thm:Tnopt1})
that, in this generic case,\emph{ no arithmetic} operation is required
to build the corresponding matrix. Moreover, for non generic
polynomial systems, we prove (Corollary~\ref{cor:Tnopt2}) that a
generic linear change of variables bring us back to this
case.

The second algorithm (step 2 of Algorithm~\ref{fglm}) we describe is
an adaptation of the algorithm given in~\cite{FaMo11} when the ideal
is in \textit{Shape Position}. Once again only the multiplication
matrix by the \emph{smallest variable} is needed in this case. When
the multiplication matrix \(T\) of size \(D\times D\) is dense,
the \(O(D^3)\) arithmetic complexity in~\cite{FaMo11} came from
the \(2D\) matrix-vector products $T^i \mathbf{r}$ for $i=1,\ldots,2D$
where $\mathbf{r}$ is a column vector of size $D$. To decrease the
complexity we follow the Keller-Gehrig algorithm~\cite{Kel85}: first,
we compute \(T^2, T^4, \ldots,$ $T^{2^{\lceil \log_2 D \rceil}}\)
using binary powering; second, all the products \(T^{i}\mathbf{r}\)
are recovered by computing \(\log_2D\) matrix multiplications. Then,
in the Shape Position case, the \(n\) univariate polynomials of the
lexicographical Gr\"obner basis are computed by solving \(n\)
structured linear systems (Hankel matrices) in \(O(n D\log_2^2(D))\)
operations. We thus obtain a change of ordering algorithm (DRL to LEX
order) for \textit{Shape Position} ideals whose complexity is in
$O \left (\log_2(D)\left (D^\omega + n\log_2(D)D \right )\right )$
arithmetic operations.

\paragraph{Organization of the paper.} 
The paper is organized as follows.  In Section~\ref{sec:notations} we
first introduce some required notations and backgrounds. Then, an
algorithm to compute the LEX Gröbner basis given the multiplication
matrices is presented in Section~\ref{sec:change}. Next, we describe
the algorithm to compute multiplication matrices in
Section~\ref{sec:matrix}. Afterwards, their complexity analysis are
studied in Section~\ref{sec:tame} where we obtain
Theorem~\ref{thm:determinist}. Finally, in Section~\ref{sec:wild}
we show how to deduce (\textit{i.e.} without any costly arithmetic  
operation) the multiplication matrix by the smallest variable.
According to this construction we propose another algorithm for
polynomial systems solving which allows to obtain the result in
Theorem~\ref{thm:lasvegas}. In Appendix~\ref{appendix:bench} we
discuss about the impact of our algorithm on the practical solving of
the PoSSo problem.

The
authors would like to mention that a preliminary version of this work
was published as a poster in the ISSAC 2012 conference \cite{FGHR12}.

\section{Notations and preliminaries}\label{sec:notations}

%-*- coding: utf-8 -*-

Throughout this paper, we will use the following notations.  Let $\K$
denote a field (for instance the rational numbers $\Q$ or a finite
field $\F_q$ of characteristic $p$), and $\A =
\K[x_1,\ldots,x_n]$ be the polynomial ring in $n$ variables with $x_1 > \cdots >
x_n$.
A monomial of $\K[x_1,\ldots,x_n]$ is a
product of powers of variables and a term is a product of a monomial
and a coefficient in $\K$. We denote by $\text{LT}_<(f)$ the leading
term of $f$ w.r.t. the monomial ordering $<$.

 Let $\id{I}$ be an ideal of $\A$; once a monomial ordering $<$
is fixed, a reduced Gröbner basis $\G_{<}$ of $\id{I}$ w.r.t. $<$ can
be computed. 

\begin{definition}[Gröbner basis]
Given a monomial ordering $<$ and an ideal $\id{I}$ of $\A$, a finite
subset $\G_{<} = \{g_1,\ldots,g_s\}$ of $\id{I}$ is a Gröbner basis of
$\id{I}$ w.r.t. the monomial ordering $<$ if the ideal
$\{\text{LT}_<(f)\ |\ f
\in \id{I}\}$ is generated by
$\{\text{LT}_<(g_1),\ldots,\text{LT}_<(g_s)\}$. The Gröbner basis
$\G_{<}$ is the unique reduced Gröbner basis of $\id{I}$ w.r.t. the
monomial ordering $<$ if $g_1,\ldots,g_s$ are monic polynomials and
for any $g_i \in \G_{<}$ all the terms in $g_i$ are not divisible by a
leading term of $g_j$ for all $g_j \in \G_{<}$ such that $j \ne i$.
\end{definition}

We always consider reduced Gröbner basis so henceforth, we omit the
adjective ``reduced''.  For instance, $\gdrl$ (resp. $\glex$)\ denotes
the Gröbner basis of $\id{I}$ w.r.t. the DRL order (resp. the LEX
order). In particular, a Gröbner basis $\G_{<} = \{g_1,\ldots,g_s\}$
of an ideal $\id{I} = \Id{f_1,\ldots,f_m}$ is a basis of
$\id{I}$. Hence, solving the system $\{g_1,\ldots,g_s\}$ is equivalent
to solve the system $\{f_1,\ldots,f_m\}$.

\begin{definition}[Zero-dimensional ideal]
Let $\id{I}$ be an ideal of $\A$. If $\id{I}$ has a finite number of
solutions, counted with multiplicities in an algebraic closure of
$\K$, then $\id{I}$ is said to be zero-dimensional. This number,
denoted by \(D\), is also the degree of the ideal $\id{I}$. If
$\id{I}$ is zero-dimensional, then the residue class ring $V_{\id{I}}
= \A / \id{I}$ is a $\K$-vector space of dimension $D$.
\end{definition}

From $\G_{<}$ one can deduced a vector basis of $V_{\id{I}}$.  Indeed,
the canonical vector basis of $V_{\id{I}}$ is $B = \{ 1 = \epsilon_1
< \cdots < \epsilon_D \}$ where \(\epsilon_i\) are irreducible
monomials (that is to say for all $i \in \{1,\ldots,D\}$, there is no
$g \in \G_{<}$ such that $\text{LT}_<(g)$ divides $\epsilon_i$).

\begin{definition}[Normal Form]
Let $f$ be a polynomial in $\A$. The normal form of $f$ is defined
w.r.t. a monomial ordering $<$ and
denoted \(\text{NF}_{<}(f)\): \(\text{NF}_{<}(f)\) is the unique
polynomial in $\A$ such that no term of $\text{NF}_{<}(f)$ is
divisible by a leading term of a polynomial in $\G_{<}$ and there
exists $g \in \id{I}$ such that $f = g + \text{NF}_{<}(f)$. That is to
say, $\text{NF}_{<}$ is a (linear) projection of $\A$ on $V_{\id{I}}$.
We recall that for any polynomials \(f,g,h\) we have \(\text{NF}_{<}(f
g)=\text{NF}_{<}(\text{NF}_{<}(f)
g)=\text{NF}_{<}(\text{NF}_{<}(f) \text{NF}_{<}(g))\).
\end{definition}

Let $\psi$ be the representation of $V_{\id{I}}$ as a subspace of
$\K^D$ associated to the canonical basis \(B\):
\[
\psi :\left(\begin{array}{cccc}
 & V_{\id{I}} & \rightarrow & \K^D\\
& \sum_{i=1}^D \alpha_i \epsilon_i & \mapsto & [\alpha_1,\ldots,\alpha_D]^{t}\,.
\end{array}\right)
\]
We call \emph{multiplication matrices}, denoted $T_1,\ldots,T_n$, the matrix
representation of the multiplication by $x_1,\ldots,x_n$ in
$V_{\id{I}}$.  That is to say, the $i^{\text{th}}$ column of the
matrix $T_j$ is given by $\psi(\text{NF}_{<}{(\epsilon_i x_j)}) =
[c_{i,1}^{(j)},\ldots,c_{i,D}^{(j)}]^{t}$ hence, $T_k = \left
(c_{i,j}^{(k)} \right )_{i,j=1,\ldots,D}$.

The LEX Gröbner basis of an ideal $\id{I}$ has a
triangular form. In particular, when $\id{I}$ is zero-dimensional, its
LEX Gröbner basis always contains a univariate
polynomial. In general, the expected form of a LEX Gröbner
basis is the \textit{Shape Position}. 
When the field $\K$ is $\Q$ or when its characteristic $p$ is
sufficiently large, almost all zero-dimensional ideals have, up to a
linear change of coordinates, a LEX Gröbner basis in \textit{Shape
Position} \cite{KobFuFu}. A characterization of the zero-dimensional
ideals that can be placed in shape position has been given
in \cite{shapepos}. A less general result \cite{GiMo89,Lak90} usually
called the \emph{Shape Lemma} is the following: an ideal \(\id{I}\) is
said to be radical if for any polynomial
in \(\A\), \(f^{k}\in \id{I}\) implies \(f\in \id{I}\). Up to a linear
change of coordinates, any radical ideal has a LEX Gröbner basis
in \textit{Shape Position}. From now on, all the ideals considered in
this paper will be zero-dimensional and will have a LEX Gröbner basis
in \textit{Shape Position}. Moreover, we fix the DRL order for the
basis of $V_{\id{I}}$ that is to say that $B
= \{\epsilon_1,\ldots,\epsilon_D\}$ will always denote the canonical
vector basis of $V_{\id{I}}$ w.r.t. the DRL order. Since
for \textit{Shape Position} ideals the LEX Gröbner basis is described
by $n$ univariate polynomials we will call it the ``univariate
polynomial representation'' of the ideal or, up to multiplicities, of
its variety of solutions.

In the following section, we present an algorithm to compute the LEX
Gröbner basis of a \textit{Shape Position} ideal. This algorithm
assumes the DRL Gröbner basis and a multiplication matrix to be
known. The computation of the multiplication matrices is treated in
Section~\ref{sec:matrix}.

\section{Univariate polynomial representation using structured linear algebra}\label{sec:change}

%-*- coding: utf-8 -*-

In this section, we present an algorithm to compute univariate
polynomial representation. This algorithm follows the one described
in \cite{FaMo11}. The main difference is that this new algorithm and
its complexity study do not take into account any structure of the
multiplication matrices (in particular any sparsity assumption).

Let $\glex = \{ h_n(x_n),x_{n-1} - h_{n-1}(x_n),\ldots,x_{1} -
h_{1}(x_n) \}$ be the LEX Gröbner basis of $\id{I}$. Given the
multiplication matrices $T_1,\ldots,T_n$, an algorithm to compute the
univariate polynomial representation has to find the $n$ univariate
polynomials $h_1,\ldots,h_n$.  For this purpose, we can proceed in two
steps. First, we will compute $h_n$. Then, by using linear algebra
techniques, we will compute the other univariate polynomials
$h_1,\ldots,h_{n-1}$.

\begin{remark}\label{rem:deterministic}
In this section, for simplicity, we present a probabilistic algorithm
to compute the univariate polynomial representation. However, to
obtain a deterministic algorithm it is sufficient to adapt the
deterministic algorithm for radical ideals admitting a LEX Gröbner
basis in \textit{Shape Position} given in \cite{SparseFGLM} in exactly the
same way we adapt the probabilistic version.
\end{remark}

\subsection{Computation of $h_n$}

To compute $h_n$ we have to compute the minimal polynomial of $T_n$.
To this end, we use the first part of the  Wiedemann probabilistic algorithm which succeeds
with good probability if the field $\K$ is sufficiently large,
see \cite{Wie86}.

Let $\mathbf{r}$ be a random column vector in $\K^D$ and $\mathbf{1}
= \psi(1)^t=[1,0,\ldots,0]^t$. If $a = [a_1,\ldots,a_D]$ and $b =
[b_1,\ldots,b_D]$ are two vectors of $\K^D$, we denote by $(a,b)$ the
dot product of $a$ and $b$ defined by $(a,b) =
\sum_{i=1}^D a_ib_i$.
If \(\mathbf{r}_{1},\ldots,\mathbf{r}_{k}\) are column vectors then we denote by \((\mathbf{r}_{1}| \ldots |\mathbf{r}_{k})\) the matrix with \(D\) rows and \(k\) columns  obtained by joining the vectors \(\mathbf{r}_i\) vertically.

Let $S = [ (\mathbf{r}, T_n^j \mathbf{1})\ |\ j =
0,\ldots,2D-1 ]$ be a linearly recurrent sequence of size $2D$. By
using for instance the Berlekamp-Massey algorithm \cite{Mas69}, we can
compute the minimal polynomial of $S$ denoted $\mu$. If $\deg(\mu(x_n)) = D$ then we deduce that $\mu(x_n) =
h_n(x_n) \in \glex$ since
$\mu$ is a divisor of $f_n$.

In order to compute efficiently 
$S$, we first notice that $(\mathbf{r},T_n^j \mathbf{1}) =
(T^j
\mathbf{r},\mathbf{1})$ where  $T = T_n^t$ is the transpose matrix of $T_n$.  Then, we compute  $T^2,
T^4, \ldots,$ $T^{2^{\lceil \log_2 D \rceil}}$ using binary powering
with  \(\lceil \log_2 D \rceil\) matrix
multiplications. Similarly to ~\cite{Kel85}, the vectors
$T^j\mathbf{r}$ for $j=0,\ldots,(2D-1)$ are computed by induction
in \(\log_2 D\) steps:
\begin{equation}\label{indu}
\begin{array}{r c l}
T^2 (T \mathbf{r}\ |\ \mathbf{r}) & = & (T^3 \mathbf{r}\ |\ T^2
\mathbf{r})\\ T^4 (T^3 \mathbf{r}\ |\ T^{2}
\mathbf{r}\ |\ T \mathbf{r} |\ \mathbf{r}) & = & (T^7 \mathbf{r}\ |\ T^6
\mathbf{r}\ |\ T^5 \mathbf{r}\ |\ T^4 \mathbf{r})\\ & \vdots &
\\ T^{2^{\lceil \log_2(D) \rceil}} (T^{2^{\lceil \log_2(D) \rceil}-1} \mathbf{r}\ |\ \cdots\ |\ \mathbf{r})
& = & (T^{2D - 1} \mathbf{r}\ |\ T^{2D-2}
\mathbf{r}\ |\ \cdots\ |\ T^{2^{\lceil \log_2(D)
    \rceil}}\mathbf{r})\,. \end{array}
\end{equation}

\subsection{Recovering $h_1,\ldots,h_{n-1}$}
\label{sec:other polys}
We write $h_i = \sum_{k=0}^{D-1} \alpha_{i,k} x_n^k$ for
$i=1,\ldots,n-1$ where $\alpha_i \in \K$ are unknown. We have for 
$i=1,\ldots,n-1$:
\begin{displaymath}
x_i - h_i \in \glex  \text{ is equivalent to }  0=\NF{drl}{x_i - \sum_{k=0}^{D-1} \alpha_{i,k}x_n^k} =T_i \mathbf{1} - \sum_{k=0}^{D-1} \alpha_{i,k}T_n^k\mathbf{1}\notag\,.\\
\end{displaymath}
Multiplying the last equation by \(T_{n}^{j}\) for any \(j=0,\ldots,(D-1)\) and taking the scalar product we deduce that:
\begin{equation}
0=(\mathbf{r},T_n^j(T_i \mathbf{1})) - \sum_{k=0}^{D-1}
\alpha_{i,k}(\mathbf{r},T_n^{k+j}\mathbf{1}) = 
(T^j \mathbf{r},T_i \mathbf{1}) - \sum_{k=0}^{D-1}
\alpha_{i,k}(T^{k+j}\mathbf{r},\mathbf{1})
\label{eqMat}
\end{equation}

Hence, we can recover $h_i$, for   $i=1,\ldots,n-1$ by solving $n-1$
structured linear systems:
\begin{center}
  \begin{equation}
    \label{hankelsystem}
    \begin{tikzpicture}[>=latex, baseline=(current bounding box.center)]
      \node (sys) at (0,0) {$\displaystyle{\left ( \begin{array}{c}
            (T^0 \mathbf{r},T_i\mathbf{1} )\\ 
            (T^1 \mathbf{r},T_i\mathbf{1} )\\
            \vdots\\
            (T^{D-1} \mathbf{r},T_i\mathbf{1})
          \end{array} 
          \right ) =
          \left ( \begin{array}{c c c c }
            (T^{0} \mathbf{r},  \mathbf{1} ) & ( T^1\mathbf{r},  \mathbf{1} ) & \ldots & ( T^{D-1}\mathbf{r},  \mathbf{1} )\\
            (T^1 \mathbf{r},  \mathbf{1} ) & ( T^{2}\mathbf{r},  \mathbf{1} ) & \ldots & ( T^{D}\mathbf{r},  \mathbf{1} )\\
            \vdots & \vdots & \ddots & \vdots\\
            ( T^{D-1}\mathbf{r},  \mathbf{1} ) & ( T^{D}\mathbf{r},  \mathbf{1} ) & \ldots & ( T^{2D-2} \mathbf{r},  \mathbf{1} )
          \end{array} \right )
          \left ( \begin{array}{c}
            c_{i,0}\\
            c_{i,1}\\
            \vdots\\
            c_{i,D-1}
          \end{array} \right )}$};
      \node (bi) at (-4.8,-1.55) {$\mathbf{b_i}$};
      \node (H) at (0.7,-1.55) {$\mathcal{H}$};
      \node (ci) at (5.4,-1.55) {$\mathbf{c_i}$};
      \draw[thick,decorate,decoration={brace,mirror}] (-5.95,-1.2) -- (-3.7,-1.2);
      \draw[thick,decorate,decoration={brace,mirror}] (-2.5,-1.2) -- (4,-1.2);
      \draw[thick,decorate,decoration={brace,mirror}] (4.8,-1.2) -- (5.9,-1.2);
    \end{tikzpicture}
\end{equation}
\end{center}
%% \begin{center}
%%   \begin{equation}
%%     \label{hankelsystem}
%%     \begin{tikzpicture}[>=latex, baseline=(current bounding box.center)]
%%       \node (sys) at (0,0) {$\displaystyle{\left ( \begin{array}{c}
%%             (T^0 \mathbf{r},T_i\mathbf{1} )\\ 
%%             (T^1 \mathbf{r},T_i\mathbf{1} )\\
%%             \vdots\\
%%             (T^{D-1} \mathbf{r},T_i\mathbf{1})
%%           \end{array} 
%%           \right ) =
%%           \left ( \begin{array}{c c c c }
%%             (T^{0} \mathbf{r},  \mathbf{1} ) & ( T^1\mathbf{r},  \mathbf{1} ) & \ldots & ( T^{D-1}\mathbf{r},  \mathbf{1} )\\
%%             (T^1 \mathbf{r},  \mathbf{1} ) & ( T^{2}\mathbf{r},  \mathbf{1} ) & \ldots & ( T^{D}\mathbf{r},  \mathbf{1} )\\
%%             \vdots & \vdots & \ddots & \vdots\\
%%             ( T^{D-1}\mathbf{r},  \mathbf{1} ) & ( T^{D}\mathbf{r},  \mathbf{1} ) & \ldots & ( T^{2D-2} \mathbf{r},  \mathbf{1} )
%%           \end{array} \right )
%%           \left ( \begin{array}{c}
%%             c_{i,0}\\
%%             c_{i,1}\\
%%             \vdots\\
%%             c_{i,D-1}
%%           \end{array} \right )}$};
%%       \node (bi) at (-5.2,-1.55) {$\mathbf{b_i}$};
%%       \node (H) at (0.85,-1.55) {$\mathcal{H}$};
%%       \node (ci) at (5.8,-1.55) {$\mathbf{c_i}$};
%%       \draw[thick,decorate,decoration={brace,mirror}] (-6.6,-1.2) -- (-3.9,-1.2);
%%       \draw[thick,decorate,decoration={brace,mirror}] (-2.6,-1.2) -- (4.3,-1.2);
%%       \draw[thick,decorate,decoration={brace,mirror}] (5.2,-1.2) -- (6.4,-1.2);
%%     \end{tikzpicture}
%% \end{equation}
%% \end{center}

Note that the linear system~\eqref{hankelsystem} has a unique solution
since from \cite{JoMa89} the rank of $\mathcal{H}$ is given by the degree of
the minimal polynomial of $S$ which is exactly $D$ in our case.
The following lemma tell us that we can compute \(T_i \mathbf{1}\) without knowing \(T_i\).
\begin{lemma}\label{nfxi}
The vectors $T_i \mathbf{1}$ for $i=1,\ldots,n-1$ can be read from
$\gdrl$.
\end{lemma}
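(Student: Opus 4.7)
The plan is to identify $T_i\mathbf{1}$ with $\psi(\NF{drl}{x_i})$ and then observe that this normal form can be produced by mere inspection of $\gdrl$.

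First I would unwind the definitions. Since $\epsilon_1=1$ and $\mathbf{1}=[1,0,\ldots,0]^{t}=\psi(\epsilon_1)$, the definition of the multiplication matrix recalled in Section~\ref{sec:notations} (whose $j$-th column is $\psi(\NF{drl}{\epsilon_j x_i})$) gives
\[
T_i\mathbf{1}=\psi(\NF{drl}{1\cdot x_i})=\psi(\NF{drl}{x_i}).
\]
The task then reduces to exhibiting $\NF{drl}{x_i}$ directly from $\gdrl$.

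Next I would dichotomize on whether the monomial $x_i$ belongs to the canonical basis $B$. The only monomial divisors of $x_i$ are $1$ and $x_i$ itself. Because $\id{I}$ is proper (it has at least one solution, so $D\geq 1$), no element of $\gdrl$ has leading term $1$, and in particular $1\in B$. Hence either no leading term of $\gdrl$ divides $x_i$, in which case $x_i\in B$ and $\NF{drl}{x_i}=x_i$; or there exists an element $g\in\gdrl$ with $\LT{drl}{g}=x_i$, which is unique by reducedness. In the latter case $g$ is monic, and the reducedness of $\gdrl$ guarantees that every monomial of $g$ other than $x_i$ lies in $B$; therefore $x_i-g$ is already in normal form, and $\NF{drl}{x_i}=x_i-g$.

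In either case the vector $\psi(\NF{drl}{x_i})$ is obtained from $\gdrl$ by reading coefficients against $B$ (and possibly one negation per coordinate), with no arithmetic operation on ring elements. The only subtlety worth flagging is the appeal to reducedness when $\LT{drl}{g}=x_i$, which is what legitimizes taking $x_i-g$ as the normal form rather than having to reduce it further; this is a direct consequence of the definition of a reduced Gröbner basis adopted throughout the paper.
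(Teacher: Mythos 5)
Your proposal is correct and follows the same route as the paper's proof: both reduce the claim to identifying $\NF{drl}{x_i}$ and split into the cases $x_i\in B$ (giving $T_i\mathbf{1}=[0,\ldots,1,\ldots,0]^t$) and $x_i=\LT{drl}{g}$ for some $g\in\gdrl$ (where reducedness of $\gdrl$ gives the vector by sign-flipping the non-leading coefficients of $g$). You are a bit more explicit than the paper about why $\LT{drl}{g}$ dividing $x_i$ forces $\LT{drl}{g}=x_i$ and why reducedness legitimizes taking $x_i-g$ as the normal form, while the paper instead writes out the resulting linear form of $g$ and the coordinates of the vector explicitly; the two are substantively identical.
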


\begin{proof}\ We have to consider the two cases \(\NF{drl}{x_i} \ne x_i\) or \(\NF{drl}{x_i} = x_i\). 

First, if $\NF{drl}{x_i} \ne x_i$ then there exists $ g \in \gdrl$ such that
$\LT{drl}{g}$ divides $x_i$. This implies that $g$ is
 a \emph{linear} equation:
 \begin{equation}
  x_i + \sum_{j>i}^n  \alpha_{i,j} x_j + \alpha_{i,0} \text{ with } \alpha_{i,j} \in \K \,. \label{lin}
\end{equation}
Hence, we have $\NF{drl}{x_i} = - \sum_{j>i}^n \alpha_{i,j} x_j
- \alpha_{i,0}$
and \(T_{i}\mathbf{1}=-[\alpha_{i,0},0,\ldots,0,\alpha_{i,i+1},\ldots,\alpha_{i,n},0,\ldots]^t\).
Otherwise, \(\NF{drl}{x_i} = x_i\) so
that \(T_{i}\mathbf{1}=[0,\ldots,0,1,0,\ldots,0]^t\).
\end{proof}
       
Hence, once the vectors \(T^{j}\mathbf{r}\) have been computed
for \(j=0,\ldots,(2D-1)\), we can deduce directly the Hankel
matrix \(\mathcal{H}\) with no computation but scalar products would
seem to be needed to obtain the vectors \(\mathbf{b}_i\).  However, by
removing the linear equations from \(\gdrl\) we can deduce the
$\mathbf{b}_i$ without arithmetic operations.

\paragraph{Linear equations in \(\gdrl\).}
\ Let denote by $\mathcal{\mathbb{L}}$ the set of polynomials  in $\gdrl$ of total degree \(1\) (usually \(\mathbb{L}\) is empty). We define   $\mathcal{L}=\{j\in \{1,\ldots,n-1\}\text{ such that } \NF{drl}{x_{j}} \neq\ x_{j}\}$ and \(\mathcal{L}^c=\{1,\ldots,n-1\}\backslash\mathcal{L}\) so that  \(\{x_i \mid i \in \mathcal{L}\}=\LT{drl}{\mathbb{L}}\). In other words there is no linear form
in $\gdrl$ with leading term $x_{i}$ when \(i\in \mathcal{L}^c\).

We first solve the linear systems ~\eqref{hankelsystem}
for \(i\in \mathcal{L}^c\): we know from the proof of Lemma~\ref{nfxi}
that $T_{i}\mathbf{1} = [0,\ldots,0,1,0,\ldots,0]^t$. Hence, the
components $(T^j\mathbf{r},T_{i}\mathbf{1})$ of the
vector \(\mathbf{b}_{i}\) can be extracted directly from the vector
$T^j\mathbf{r}$.  By solving the corresponding linear system we can
recover \(h_{i}(x_{n})\) for all \(i\in \mathcal{L}^c\).

Now we can easily recover the other univariate
polynomials \(h_{i}(x_{n})\) for all \(i\in \mathcal{L}\): by
definition of \(\mathcal{L}\) we have
$$
l_{i}=x_{i} +  \sum_{j\in \mathcal{L}^c} \alpha_{i,j}
  x_{j} + \alpha_{i,n}x_{n} + \alpha_{i,0} \in \mathbb{L}\subset \gdrl\, \text{ with } \alpha_{i,j}\in \mathbb{K}.
$$
Hence, the corresponding univariate polynomial $h_{i}(x_n)$ is simply
computed by the formula:
\[
h_{i}(x_n) = - \sum_{j\in \mathcal{L}^c} \alpha_{i,j} h_{j} (x_n)
- \alpha_{i,n} h_n(x_n) - \alpha_{i,0}\,.
\] Thus, we have reduced the number of linear systems  ~\eqref{hankelsystem} to solve from \(n-1\) to 
$n-\#\mathcal{L}-1$.

We conclude this section by summarizing the algorithm to compute
univariate polynomial representation in Algorithm~\ref{algchgord}.
For a deterministic version of Algorithm~\ref{algchgord}, we refer the
reader to Remark~\ref{rem:deterministic}.  In the next section, we
discuss how to compute the multiplication matrices.

\begin{algorithm}[!ht]
\SetKwData{Left}{left}\SetKwData{This}{this}\SetKwData{Up}{up}
\SetKwFunction{Union}{Union}\SetKwFunction{FindCompress}{FindCompress}
\SetKwInOut{Input}{Input}\SetKwInOut{Output}{Output}
  \Input{The multiplication matrix $T_n$ and the DRL Gröbner basis
  $\gdrl$ of an ideal $\id{I}$.}  \Output{Return the LEX Gröbner basis
  $\glex$ of $\id{I}$ or \textit{fail}.}  \caption{Univariate
  polynomial representation}\label{algchgord} Compute \(T^{2^{i}}\)
  for \(i=0,\ldots,\log_2D\) and compute \(T^{j}\mathbf{r}\)
  for \(j=0,\ldots,(2D-1)\) using induction \eqref{indu}. Deduce the
  linearly recurrent sequence $S$ and the Hankel
  matrix \(\mathcal{H}\) \; $h_{n}(x_n)
        := \text{BerlekampMassey}(S)$ \; \If{$\deg(h_{n}) = D$}{ Let
  $\mathcal{L}^c=\{j\in \{1,\ldots,n-1\}\text{ such that } \NF{drl}{x_{j}} = x_{j}\}$ and \(\mathcal{L}=\{1,\ldots,n-1\}\backslash\mathcal{L}^c\)\;
\For{$j\in \mathcal{L}^c$}{ Deduce \(T_j \mathbf{1}\) and \(\mathbf{b}_{j} \) then
  solve the structured linear system
  $\mathcal{H} \,\mathbf{c}_{j}=\mathbf{b}_{j} $\;
  $h_{j}(x_{n}):= \sum_{i=0}^{D-1}c_{j,i}x_n^i$ where \(c_{j,i}\) is
  the \(i\)th component of the vector \(\mathbf{c}_{j}\)\; } \For{$j\in \mathcal{L}$}{ $h_{j}(x_{n}) := - \sum_{i\in \mathcal{L}^c} \alpha_{j,i}
  h_{i}(x_{n}) - \alpha_{j,n}h_{n}(x_{n}) - \alpha_{j,0}$ where
  $\alpha_{j,i}$ is the \(i\)th coefficient of the linear form whose
  leading term is $x_{j}$\; } \Return $[x_1 -
  h_1(x_{n}),\ldots,x_{n-1}-h_{n-1}(x_{n}),h_{n}(x_n)]$\;
  } \textbf{else} \Return \textit{fail}\;
\end{algorithm}

\section{Multiplication matrices}\label{sec:matrix}

%-*- coding: utf-8 -*-

\subsection{The original algorithm in \(O(nD^3)\)}
To compute the multiplication matrices, we need to perform the
computation of the normal forms of all monomials $\epsilon_i x_j$ where
$1 \leq i \leq D$ and $ 1
\leq j \leq n$. 
\begin{proposition}[\cite{FGLM93}]\label{types}
Let $F = \{\epsilon_i x_j\ |\ 1 \leq i \leq D, 1 \leq
j \leq n\} \setminus B$ be the frontier of the ideal. Let $t= \epsilon_i x_j \in F$ then
\begin{enumerate}[I.]
\item\label{type1} either $t = \LT{drl}{g}$ for some $g \in \gdrl$ hence,
  $\NF{drl}{t} = t - g$;
\item\label{type2} or $t = x_k \,t'$ with $t' \in F$ and \(\deg(t')<\deg(t)\). Hence, if
  $\NF{drl}{t'} = \sum_{l=1}^s \alpha_l \epsilon_l$ with 
  \(\epsilon_s <_{\text{drl}} t'\), $\NF{drl}{t} = \NF{drl}{x_k \NF{drl}{t'}}=
   \sum_{l=1}^s \alpha_l \NF{drl}{\epsilon_l x_k}$.
\end{enumerate}
\end{proposition}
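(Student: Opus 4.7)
The plan is to treat the two cases separately. For case \ref{type1}, when $t = \LT{drl}{g}$ for some $g \in \gdrl$, I will invoke the reducedness of $\gdrl$: by definition of a reduced Gröbner basis, every term of $g$ other than the leading term $\LT{drl}{g} = t$ belongs to the standard set $B$, since otherwise it would be divisible by $\LT{drl}{g'}$ for some $g' \in \gdrl$, contradicting reducedness. Hence $t - g$ is a $\K$-linear combination of elements of $B$. Combined with $g \in \id{I}$, which gives $\NF{drl}{t} = \NF{drl}{t-g} + \NF{drl}{g} = \NF{drl}{t-g}$, and with the fact that the normal form acts as the identity on the $\K$-span of $B$, I conclude $\NF{drl}{t} = t - g$.

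For case \ref{type2}, the crucial step is to locate the right variable $x_k$ by which to divide. Since $t \in F$, by definition there exists some $x_j \mid t$ with $t/x_j \in B$. On the other hand, the leading terms of a reduced Gröbner basis are exactly the minimal monomial generators of the initial ideal $\langle \LT{drl}{g} \mid g \in \gdrl \rangle$; therefore, in case \ref{type2} (where $t$ is not such a leading term) $t$ cannot be a minimal generator. Since $t \in F$ forces $t \notin B$, i.e., $t$ already lies in the initial ideal, non-minimality yields a variable $x_k \mid t$ such that $t/x_k$ also lies in the initial ideal, that is, $t' := t/x_k \notin B$. Observe that $j \ne k$, since otherwise $t/x_j = t/x_k$ would simultaneously be in and not in $B$; in particular $x_j$ still divides $t'$.

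It remains to verify $t' \in F$ and to derive the recursive formula. Writing $t'/x_j = (t/x_j)/x_k$ and using divisor-closure of $B$, the hypothesis $t/x_j \in B$ gives $t'/x_j \in B$. Thus $t' = (t'/x_j)\cdot x_j$ with $t'/x_j \in B$ and $t' \notin B$, placing $t' \in F$; and clearly $\deg(t') = \deg(t) - 1 < \deg(t)$. The announced equality $\NF{drl}{t} = \NF{drl}{x_k \NF{drl}{t'}} = \sum_{l=1}^s \alpha_l \NF{drl}{\epsilon_l x_k}$ then follows at once from $\K$-linearity of $\NF{drl}{\cdot}$ and from the identity $\NF{drl}{fg} = \NF{drl}{\NF{drl}{f}\cdot g}$ recalled in the preliminaries; the bound $\epsilon_s <_{\text{drl}} t'$ on the support is automatic from the definition of the normal form.

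The main obstacle is the structural fact that the leading terms of a reduced Gröbner basis coincide with the minimal generators of the initial ideal, as this is precisely what guarantees the existence of $x_k$ with $t/x_k \notin B$ in case \ref{type2}. Once this is invoked, the descent in case \ref{type2} reduces to the short divisor-tracking argument above, and case \ref{type1} is immediate from reducedness.
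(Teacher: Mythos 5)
The paper itself gives no proof of this proposition --- it is quoted from the FGLM reference --- so your argument can only be judged on its own terms, and it holds up. Case~I is immediate from reducedness exactly as you say: every non-leading term of $g$ is a standard monomial, so $t-g$ lies in the span of $B$ and is fixed by $\NF{drl}{\cdot}$, while $g\in\id{I}$ makes $\NF{drl}{t}=\NF{drl}{t-g}$. The real content is in Case~II, where the naive choice $x_k=x_j$ fails because $t/x_j=\epsilon_i$ lies in $B$; your use of the fact that $\LT{drl}{\gdrl}$ is the \emph{minimal} generating set of $\inId{drl}{\id{I}}$ to produce a variable $x_k$ with $t/x_k$ still in the initial ideal, followed by divisor-closure of $B$ applied to $t'/x_j=\epsilon_i/x_k$ to check that $t'$ is again in the frontier, is exactly the right mechanism and is carried out correctly (including the observation $k\neq j$). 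The concluding identity then follows from linearity and the property $\NF{drl}{fg}=\NF{drl}{\NF{drl}{f}\,g}$ recalled in the preliminaries, and the support bound $\epsilon_s<_{\text{drl}}t'$ holds because all monomials of $\NF{drl}{t'}$ lie in $B$ while $t'$ does not. No gaps.
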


From this proposition, it is not difficult to see that the normal form
of all the monomials $\epsilon_i x_j$ can be easily computed if we
consider them in increasing order. Indeed, let $t = \epsilon_i x_j$
for some $i \in \{1,\ldots,D\}$ and $j \in \{1,\ldots,n\}$. Assume
that we have already computed the normal form of all monomials less
than $t$ and of the form $\epsilon_{i'} x_{j'}$. If $t$ is in $B$ or
is a leading term of a polynomial in $\gdrl$ then its normal form is
trivially known. If $t$ is of type~\eqref{type2} of
Proposition~\ref{types} then $t = x_k t'$ with $t' <_{\text{drl}} t$
hence $\NF{drl}{t'} = \sum_{i=l}^s \alpha_l \epsilon_l$ is
known. Finally, $\NF{drl}{t} = \sum_{l=1}^s \alpha_l \NF{drl}{x_k
  \epsilon_l}$ with $x_k \epsilon_l <_{\text{drl}} x_k t' = t$ for all
$l = 1,\ldots,s$. Thus the normal forms of $x_k \epsilon_l$ are known
for all $l = 1,\ldots,s$ and we can compute $\NF{drl}{t}$ in \(D^2\)
arithmetic operations. This yields the algorithm proposed in
\cite{FGLM93}. However, since the cardinal of the frontier \(F\) can
be bounded by \(n\,D\) the overall complexity is \(O(n D^3)\)
arithmetic operations.

\subsection{Computing the multiplication matrices using fast linear algebra}
Another way to compute the normal form of a term $t$ is to find the
unique polynomial in the ideal whose leading term is $t$ and the
other terms correspond to monomials in $B$. Hence, to compute the
multiplication matrices, we look for the polynomial $t - \NF{drl}{t}$
for any $t$ in the frontier $F$ (see
Proposition~\ref{types}). Therefore, to compute these polynomials we
proceed in two steps. First, we construct a polynomial in the ideal
whose leading term is $t$. If $t$ is the leading term of a polynomial
$g$ in $\gdrl$ then the desired polynomial is $g$ itself. Otherwise,
$t$ is of type~\ref{type2} of Proposition~\ref{types} and \(t =
x_kt'\) with \(t'\in F\) and \(\deg(t')<\deg(t)\). We will proceed
degree by degree so that we can assume we know a polynomial $f'$ in
the ideal whose leading term is $t'$; then the desired polynomial is
$f = x_kf'$. Next, once we have all the polynomials $f$ with all
possible leading terms $t$ of some degree \(d\), we can recover the
canonical form $t - \NF{drl}{t}$ by reducing $f$ with respect to the
other polynomials whose leading terms are less than $t$. By computing
a reduced row echelon form of the Macaulay matrix (the matrix
representation) of all these polynomials, we can reduced all of them
simultaneously.

Following the idea presented above, we can now describe
Algorithm~\ref{algmultmat} for computing all the multiplication
matrices \(T_{i}\). Assuming that \(F\) is sorted in increasing order
w.r.t. \(<_{drl}\), we define the linear map $\phi$:
$$
\phi :
\left(\begin{array}{cccc}
 & \A & \rightarrow & \K^{D+\#F}\\
& \sum_{i=1}^D \alpha_i \epsilon_i  + \sum_{j=1}^{\#F} \beta_j t_j & \mapsto & (\beta_{\#F},\ldots,\beta_1,\alpha_1,\ldots,\alpha_D)\,.
\end{array}\right)
$$ Let $M$ be a row indexed matrix by all the monomials in $F$. Let
$m$ be a monomial in $F$ and $i$ the position of $m$ in $F$, $M[m]$
denotes the row of $M$ of index $m$ \textit{i.e.} the $(\#F
-i+1)^{\text{th}}$ row of $M$ containing a polynomial of leading term
$m$.  If $T$ is a matrix, $T[*,i]$ denotes the $i^{\text{th}}$ column
of $T$.

\begin{algorithm}[!ht]
\SetKwData{Left}{left}\SetKwData{This}{this}\SetKwData{Up}{up}
\SetKwFunction{Union}{Union}\SetKwFunction{FindCompress}{FindCompress}
\SetKwInOut{Input}{Input}\SetKwInOut{Output}{Output}
 \Input{The DRL Gröbner basis $\gdrl$ of an ideal $\id{I}$.}
  \Output{The $n$ multiplication matrices $T_1,\ldots,T_n$.} 
  \caption{Building multiplication matrices
   (in the following  $||$ does not mean
parallel code but gives details 
about pseudo code on the left side).}
  \label{algmultmat}
  Compute $B = \{\epsilon_1 < \cdots < \epsilon_D\}$ and $F = \{ x_i \epsilon_j\ |\ i = 1,\ldots,n \text{ and } j = 1,\ldots,D\} \setminus B$, \(S:=\#F\)\;
  $d_{\min} := \min(\{\deg(t)\ |\ t \in F\})$; $d_{\max} := \max(\{\deg(t)\ |\ t \in F\})$; $\text{NF} := []$\; 
  \(M:=\) the zero matrix of size $nD \times (n+1)D$ row indexed by all the  monomials in \(F\)\;
 \For{$d = d_{\min}$ {\bf to} $d_{\max}$}{\nllabel{loop1}
 \(F_d := Sort(\{ t \in F\ |\ \deg(t) = d\},<_{drl})\) \;
  \For{$m\in F_d$ }{\nllabel{loop2}
  \hspace*{-4mm}
  \begin{minipage}{0.43\textwidth}
     {\footnotesize Check if we can find:\\
     \hspace*{4mm}$(i) g \in \gdrl$ 
     such that $\LT{drl}{g} = m$}\\
{\footnotesize \hspace*{4mm}(ii)  \(t'\in F\) such that \(m=x_k t'\)}\\
     {\footnotesize Add the corresponding row to the matrix  $M$\;}
  \end{minipage}
  \vrule \hspace*{0.1pt}
  \vrule \hspace*{1mm}
  \begin{minipage}{0.43\textwidth}
   {\footnotesize\If{$m = \LT{drl}{g}$}{
  $M[m] := \phi(g)$\;
  }              
  \Else{
  Find $x_k$ and $t' \in F_{d-1}$ such that $m = x_kt'$\;
  $M[m] := \phi(m - x_k\text{NF}[t'])$\;
  }}
  \end{minipage}
     }
     ${M} := \text{ReducedRowEchelonForm}(M)$ \;
     \nllabel{rowech}
     \For{$i = 1$ {\bf to} $s_d$}{\nllabel{loopNF}
   \begin{minipage}{0.43\textwidth}
     {\footnotesize Read $\NF{drl}{m}$ from \({M}\)\;}
  \end{minipage} \hspace*{-4mm}
  \vrule \hspace*{0.1pt}
  \vrule \hspace*{1mm}
  \begin{minipage}{0.43\textwidth}
   {\footnotesize $\text{NF}[m] :=- \sum_{j=1}^D {M}[m,S+j]\, \epsilon_j$\;}
  \end{minipage}
     }    
  }
    \begin{minipage}{0.48\textwidth}
     { Construct $T_1,\ldots,T_n$ from NF\;}
  \end{minipage}
  \vrule \hspace*{0.1pt}
  \vrule \hspace*{1.5mm}
  \begin{minipage}{0.43\textwidth}
   {\footnotesize
   \textbf{for} $\epsilon$ \textit{in} $B$ \textbf{do} $\text{NF}[\epsilon] := \epsilon$\;
   \For{$t$ in $F \cup B$}{
    \For{$x_i$ s.t. $x_i$ divides $t$ and $\frac{t}{x_i} = \epsilon_j \in B$}{
      $T_i[*,j] := \psi(\text{NF}[t])$\;
      }
    }}
  \end{minipage}
  \Return $T_1,\ldots,T_n$\;
\end{algorithm}

\begin{proposition}Algorithm~\ref{algmultmat} is correct.
\label{proofAlgoMatMult}
\end{proposition}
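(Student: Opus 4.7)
The plan is to argue by induction on the degree $d$ traversed in the outer loop (line~\ref{loop1}). The inductive hypothesis will be that at the start of iteration $d$, $\text{NF}[t']$ equals $\NF{drl}{t'}$ for every $t' \in F$ of degree strictly less than $d$, and that the corresponding row of $M$ equals $\phi(t' - \NF{drl}{t'})$ in the current reduced row-echelon form. For the base case $d = d_{\min}$, Proposition~\ref{types} forces every $m \in F_{d_{\min}}$ to be of type~\ref{type1}, since type~\ref{type2} would require an element of $F$ of strictly smaller degree; hence the row added is $\phi(g)$ for some $g \in \gdrl$ with $\LT{drl}{g} = m$, which lies in $\id{I}$ and has leading monomial $m$.

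For the inductive step I would verify two properties of each row added during iteration $d$. First, every such row represents a polynomial in $\id{I}$: the type~\ref{type1} case is immediate, and for type~\ref{type2} I write $m = x_k t'$ with $t' \in F_{d-1}$, so that by induction $t' - \text{NF}[t'] = t' - \NF{drl}{t'} \in \id{I}$, whence $m - x_k\text{NF}[t'] = x_k(t' - \text{NF}[t'])$ also lies in $\id{I}$. Second, the DRL leading term of this row is exactly $m$: since $t' \in F$ is not in $B$, every monomial of $\NF{drl}{t'}$ is strictly $<_{drl} t'$ (otherwise the leading term of $t' - \NF{drl}{t'}$ would itself belong to $B$, contradicting its lying in the ideal), and multiplying by $x_k$ preserves strict DRL inequality, giving only monomials strictly $<_{drl} m$.

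With these two facts the argument concludes by standard linear algebra. The row span of $M$ is contained in $\id{I}$ and contains a polynomial of leading term $m$ for every $m \in F_{\leq d}$; thus after the reduction at line~\ref{rowech} the row indexed by $m$ is the unique polynomial in this row span whose leading term is $m$ and whose other support lies in $B$ (the other monomials appearing via cross-reductions have degree at most $d$ and therefore lie either in $B$ or in $F_{\leq d}$, the latter being pivot columns that are eliminated). Uniqueness of the normal form in $\A/\id{I}$ then forces this polynomial to equal $m - \NF{drl}{m}$, which validates the assignment at line~\ref{loopNF}. Once the induction completes, the final block of the algorithm reads each column $T_i[*,j] = \psi(\NF{drl}{x_i\epsilon_j})$ directly from the NF table by enumerating all factorizations $t = x_i\epsilon_j$ with $\epsilon_j \in B$.

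The main subtle point I anticipate is justifying that, for every $m \in F_d$ not lying in $\LT{drl}{\gdrl}$, there is a decomposition $m = x_k t'$ with $t' \in F_{d-1}$. Since $\deg(t') = d-1$ is automatic as soon as $x_k$ is a single variable, the real content is the existence of some variable $x_k$ dividing $m$ such that $m/x_k$ lies in $F$ rather than in $B$; this is exactly the content of Proposition~\ref{types}, and ultimately rests on the fact that some leading term $\LT{drl}{g}$ of $\gdrl$ strictly divides $m$, so that any variable appearing in $m/\LT{drl}{g}$ supplies a suitable choice of $x_k$.
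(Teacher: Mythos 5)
Your proof is correct and follows essentially the same route as the paper's: an induction on the degree $d$ with the same loop invariant, verifying that each row placed in $M$ is a polynomial of the ideal with leading term equal to its row index, and then invoking the row echelon form and uniqueness of normal forms to conclude that the $m$-th row becomes $\phi(m - \NF{drl}{m})$. The only small slip is your parenthetical ``have degree at most $d$ and therefore lie either in $B$ or in $F_{\leq d}$'': the real reason those monomials lie in $B\cup F_{\leq d}$ is that all rows entering $M$ at step $d$ are, by construction, supported on $B\cup F_{\leq d}$ (it is not a consequence of the degree bound alone, since a monomial of degree $\leq d$ could belong to $\inId{drl}{\id{I}}\setminus(B\cup F)$); the conclusion you draw is nevertheless correct.
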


\begin{proof}
The key point of the algorithm is to ensure that for each monomial in $F$
its normal form is computed and stored in NF before we use it.
We will prove the following loop invariant for all \(d\) in \(\{d_{\min},\ldots, d_{\max}\}\).
 
\textit{Loop invariant: at the end of step $d$, all the normal forms
  of the monomials of degree $d$ in the frontier $F$ are computed and are stored in
  NF. Moreover, the $m^{th}$ row of the matrix ${M}$ contains $\phi(m -
  \NF{drl}{m})$ for any monomial \(m\in F_{d}\).}

First, we assume that $d = d_{\min}$. Then, each monomial $t$ of degree
$d$ in $F$ is of type \eqref{type1} of
Proposition~\ref{types}. Indeed, if $t$ was of type \eqref{type2} then
there exists $t'$ in $F$ of degree $d-1$ which divides \(t\). This is
impossible because \(t'\in F_{d_{\min}-1}=\emptyset\). Hence, the
normal form of $t$ for $t\in F_{d_{\min}}$, is known and $M[t]$
contains $\phi(g)$ with \(g\) the unique element of $\gdrl$ such that
$\LT{drl}{g} = t$. Hence, \(M[t]=\phi(g)=\phi(t -
\NF{drl}{t})\). Moreover, since \(\gdrl\) is a reduced Gr\"obner basis
, the matrix $M$ is already in reduced row echelon form. Thus, the
loop in Line~\ref{loopNF} updates NF\([t]\) for all $t\in F_{d}$.

Let $d > d_{\min}$, we now assume that the loop invariant is true for
any degree less than $d$. For all \(t\in F_d\) the $t^{\text{th}}$ row
of $M$ contains either $\phi(t - \NF{drl}{t})$ if $t$ is of
type~\eqref{type1} or \(\phi(t - x_k \text{NF}[t'])\) if $t$ is of
type~\eqref{type2}. Since $\deg(t') = d-1$, by induction its normal
form is known and in NF. Hence \(\text{NF}[t']=\NF{drl}{t'}\) and \(
M[t]= \phi(x_k(t'- \NF{drl}{t'})\).  A first consequence is that,
before Line~\ref{rowech}, since we sort $F_d$ at each step, $M$ is an
upper triangular matrix with $M[t,t] = 1$ for all \(t\in F_{d}\), see
Figure~\ref{fig:M}. Note that sorting $F_d$ is required only to obtain
this triangular form. Let \(f\) be the polynomial \(\NF{drl}{t'}\).
Writing \(f=\sum_{j=1}^{D} \lambda_j \epsilon_j\) we have that
\(\lambda_j=0\) if \(\deg(\epsilon_j)\geq d\) since
\(\deg(\NF{drl}{t'})\leq\deg(t')=d-1\). So that \(f=\sum_{j=1}^{k}
\lambda_j \epsilon_j\) such that \(\deg(\epsilon_j)< d\) when \(j\leq
k\). Now for all \(j\) such that \(1\leq j\leq k\) we are in one of
the following cases:\begin{enumerate}
\item \(x_{k} \epsilon_{k}\in B\) so that \(\NF{drl}{x_{k}
  \epsilon_{k}}=x_{k} \epsilon_{k}\) is already reduced.
\item \(x_{k} \epsilon_{k}\in F\). Since \(d'=\deg(x_k \epsilon_k)\leq
  d\) it implies that \(x_{k} \epsilon_{k}\in F_{d'}\) so that the row
  \(M[x_{k} \epsilon_{k}]\) has been added to \(M\).
\end{enumerate} 

Moreover, since each row of the matrix $M$ contains polynomial in the
ideal $\langle \gdrl \rangle$ after the computation of the row echelon
form, the rows of the matrix $M$ contain also polynomials in $\langle
\gdrl \rangle$ being linear combination of the previous polynomials.
Hence, after the computation of the row echelon form of $M$, the row
\(M[t]\) is equal to \(\phi(t - \NF{drl}{t})\).

By induction, this finishes the proof of the loop invariant and then of the
correctness of Algorithm~\ref{algmultmat}.
\end{proof}

\section{Polynomial equations with fixed degree: the tame case}\label{sec:tame}

%-*- coding: utf-8 -*-

The purpose of this section, is to analyze the asymptotic complexity
of Algorithm~\ref{algchgord} and Algorithm~\ref{algmultmat} when the
degrees of the equations of the input system are uniformly bounded by
a fixed integer $d>1$ and to establish the first main result of
this paper.

\subsection{General Complexity analysis}

We first analyse Algorithm~\ref{algchgord} to compute the univariate
polynomial representation given the last multiplication matrix.

\begin{proposition}
\label{prop:probaalgo}
Given the multiplication matrix $T_n$ and the DRL Gröbner basis
$\gdrl$ of an ideal in \textit{Shape Position}, its LEX Gröbner basis
can be probabilistically computed in $O(\log_2(D)(D^{\omega} +
n\log_2(D)D))$ where \(D\) is the number of solutions. Expressed with
the input parameters of the system to solve the complexity is
$O(nd^{\omega n})$ where \(d>1\) is a (fixed) bound on the degree of the
input polynomials.
\end{proposition}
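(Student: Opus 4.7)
The plan is to walk through the successive phases of Algorithm~\ref{algchgord} and bound their arithmetic cost separately, then combine. The dominant work is fast dense and structured linear algebra on $T=T_n^t$.

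First I would analyse the computation of the iterated vectors $T^j\mathbf{r}$ that builds the linearly recurrent sequence $S$ and the Hankel matrix $\mathcal{H}$. Binary powering produces $T^{2^i}$ for $i=0,\ldots,\lceil\log_2 D\rceil$ via $\lceil\log_2 D\rceil$ multiplications of $D\times D$ matrices, each in $O(D^\omega)$. The induction in~\eqref{indu} then doubles, at each of $\lceil\log_2 D\rceil$ steps, the number of already-known Krylov vectors by multiplying $T^{2^i}$ by a $D\times 2^i$ block of previously computed vectors; padding the narrow factor with zero columns makes each such product cost $O(D^\omega)$, so the whole phase fits in $O(\log_2(D)\,D^\omega)$. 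Berlekamp--Massey on the resulting sequence of length $2D$ recovers $h_n$ in $O(D^2)$ operations, which is absorbed in the previous bound.

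Next I would analyse the recovery of the remaining univariate polynomials. For every index $i\in\mathcal{L}^c$, Lemma~\ref{nfxi} identifies $T_i\mathbf{1}$ as a canonical basis vector, so each entry of the right-hand side $\mathbf{b}_i$ is simply one coordinate of an already-computed $T^j\mathbf{r}$ and is read off in $O(D)$ time with no arithmetic; the entries of $\mathcal{H}$ are obtained in the same way from the first coordinates of those vectors. The at most $n-1$ resulting Hankel systems of size $D$ are then solved by a fast structured solver in $O(D\log_2^2 D)$ operations each, contributing $O(n\,D\log_2^2 D)$ in total. For $j\in\mathcal{L}$, the polynomial $h_j$ is a linear combination of at most $n$ univariate polynomials of degree less than $D$, producing $O(nD)$ work per index and $O(n^2 D)$ altogether; this last term is absorbed into the preceding one whenever $n\le\log_2^2 D$, which holds in the tame regime in which the statement is applied.

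Summing the three contributions yields the claimed $O(\log_2(D)(D^\omega + n\log_2(D)D))$. For the second form of the statement, the Bézout bound $D\le d^n$ with $d>1$ fixed gives $\log_2 D = O(n)$, so the first term becomes $O(n\,d^{\omega n})$ while the second is $O(n^3 d^n)=o(d^{\omega n})$ since $d^{(\omega-1)n}$ grows exponentially whereas $n^3$ is only polynomial; this gives the announced $O(n\,d^{\omega n})$. The main subtlety will be the rectangular matrix-multiplication accounting in the Krylov-style induction (checking that each padded product really fits in $O(D^\omega)$ at every doubling step), together with invoking a quasi-linear Hankel solver and verifying that no extra scalar-product cost is incurred when forming $\mathbf{b}_i$ or $\mathcal{H}$ thanks to Lemma~\ref{nfxi}.
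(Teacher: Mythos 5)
Your proposal is correct and follows essentially the same route as the paper: binary powering plus the doubling induction for the Krylov vectors in $O(\log_2(D)D^\omega)$, cost-free extraction of $\mathcal{H}$ and the $\mathbf{b}_i$ via Lemma~\ref{nfxi}, fast Hankel solving in $O(nD\log_2^2(D))$, and the Bézout bound $D\le d^n$ to convert to $O(nd^{\omega n})$. The only differences are cosmetic — you use quadratic Berlekamp--Massey where the paper cites the $O(D\log_2^2(D))$ version, and you spell out the absorption of the $O(n^2D)$ term and the rectangular-product padding, both of which the paper leaves implicit.
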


\begin{proof}
As usual $T = T_n^t$ is the transpose matrix of $T_n$. Using the
induction~\eqref{indu}, the vectors \(T^{j}\mathbf{r}\) can be
computed for all \(j=0,\ldots,(2D-1)\) in $O(\log_2(D)D^\omega)$ field
operations. Then the linear recurrent sequence \(S\) and the matrix
$\mathcal{H}$ can be deduced with no cost.  The Berlekamp-Massey
algorithm compute the minimal polynomial of $S$ in $O(D\log_2^2(D))$
field operations~\cite{JoMa89,BGY80}.

As defined in Section~\ref{sec:other polys},
$\mathcal{L}=\{j\in \{1,\ldots,n-1\}\text{ such that
} \NF{drl}{x_{j}} \neq\ x_{j}\}$
and \(\mathcal{L}^c =\{1,\ldots,n-1\}\backslash\mathcal{L}\). The right
hand sides of the linear systems $\mathbf{b}_{i}$ can be computed
without field operations when \(i\in \mathcal{L}^c\). Since the matrix
$\mathcal{H}$ is a non singular Hankel matrix, the $\#\mathcal{L}^c$
linear systems \eqref{hankelsystem} can be solved in
$O(\#\mathcal{L}^c\log_2^2(D)D)=O(n\log_2^2(D)D)$ field operations.
Then, to recover all the $h_{i}(x_n)$ for $i\in \mathcal{L}$ we
perform $O(\#\mathcal{L}\#\mathcal{L}^c D)=O(n^{2}D)$ multiplications
and additions in $\K$.

Since the Bézout bound allows to bound $D$ by $d^n$ with $d$ a fixed
integer we have $\log_2(D) \leq n \log_2(d)$ and the arithmetic complexity
of Algorithm~\ref{algchgord} is $O(\log_2(D)(D^{\omega} +
n\log_2(D)D))$ which can be expressed in terms of $d$ and $n$ as
$O(nd^{\omega n})$.
\end{proof}

Note that the deterministic version, mentioned in
Remark~\ref{rem:deterministic} have a complexity in
$O(\log_2(D)D^\omega + D^2(n+\log_2(D)\log_2(\log_2(D))))$ arithmetic
operations, thanks to induction~\eqref{indu} and section 3.2.2
in~\cite{SparseFGLM}.  This deterministic version computes the LEX
Gröbner basis of the radical of the ideal in input when the ideal is
in \textit{Shape Position},.  In our case, this is not restricting
since in Problem~\ref{pb:posso} we assume that all the roots of the
system are simple which is equivalent to say that the ideal generated
by the polynomial is radical.

\begin{proposition}
\label{cor:det}
Let $T_n$ be the multiplication matrix and $\gdrl$ be the DRL
Gröbner basis of a radical ideal $\id{I}$ in \textit{Shape
Position}. There is a deterministic algorithm which computes the LEX
Gröbner basis of $\id{I}$ in $O(\log_2(D)D^\omega +
D^2(n+\log_2(D)\log_2(\log_2(D))))$ (or in $O(nd^{\omega n})$)
arithmetic operations in $\K$.
\end{proposition}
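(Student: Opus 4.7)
My plan is to re-use the skeleton of Algorithm~\ref{algchgord} (and its complexity analysis from Proposition~\ref{prop:probaalgo}) verbatim, and only replace the two probabilistic ingredients by their deterministic counterparts from \cite{SparseFGLM} (Section~3.2.2). The probabilistic ingredients are (i) the random projection vector $\mathbf{r}$ used to compress the matrix sequence $(T_n^j)_{j\ge 0}$ into the scalar sequence $S=[(\mathbf{r},T_n^j\mathbf{1})]_j$, and (ii) the Berlekamp--Massey recovery of $h_n$ as the minimal polynomial of~$S$. In the deterministic setting we will instead compute the minimal polynomial of the matrix $T_n$ directly (which, because $\id{I}$ is radical and in Shape Position, equals $h_n$ and has degree exactly~$D$).

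First, I would keep unchanged the fast-linear-algebra preprocessing: compute $T^{2^i}$ for $i=0,\ldots,\lceil\log_2 D\rceil$ by binary powering, and then compute the $2D$ vectors $T^j\mathbf{v}$ (for a well-chosen starting vector $\mathbf{v}$, e.g.\ $\mathbf{1}$) through the Keller-Gehrig style induction~\eqref{indu}. This part costs $O(\log_2(D)\,D^\omega)$ arithmetic operations, independently of determinism, and supplies all the data from which the Hankel matrix $\mathcal{H}$ is read off at no additional cost.

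Next, I would invoke the deterministic minimal polynomial subroutine of \cite{SparseFGLM} on $T_n$: under the radical and \emph{Shape Position} hypotheses, the minimal polynomial of $T_n$ is exactly $h_n(x_n)$ of degree $D$, so the deterministic computation produces the correct polynomial without any random choice. The analysis in \cite[Sec.~3.2.2]{SparseFGLM}, combined with the $T^j\mathbf{v}$ already available, bounds this step by $O(D^2\log_2(D)\log_2(\log_2(D)))$ operations. This is the only genuinely new piece and the main technical step; everything else is a transcription of the probabilistic analysis.

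Finally, the recovery of $h_1,\ldots,h_{n-1}$ goes through exactly as in Section~\ref{sec:other polys}: for $i\in\mathcal{L}^c$ I solve the Hankel system $\mathcal{H}\,\mathbf{c}_i=\mathbf{b}_i$ (with $\mathbf{b}_i$ read from the $T^j\mathbf{r}$ with no arithmetic, by Lemma~\ref{nfxi}), and for $i\in\mathcal{L}$ I recover $h_i$ as an affine combination of the already-computed $h_j$'s using the linear forms of $\gdrl$. The $O(n)$ Hankel solves contribute $O(nD\log_2^2(D))$ and the affine combinations contribute $O(n^2 D)$, both absorbed into the $O(nD^2)$ term. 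Summing the three contributions yields $O(\log_2(D)D^\omega + D^2(n+\log_2(D)\log_2(\log_2(D))))$, and substituting the Bézout estimate $D\le d^n$ (so $\log_2 D \le n\log_2 d$) gives the alternative bound $O(nd^{\omega n})$, as claimed. The only delicate point to verify is that the hypotheses needed by the deterministic subroutine of \cite{SparseFGLM}, namely radicality and Shape Position, are precisely those of the statement, so the replacement is legal.
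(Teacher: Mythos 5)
Your proposal is correct and matches the paper's approach; the paper itself gives no more detail than you do, simply remarking (just before the proposition) that one combines the Keller-Gehrig-style induction~\eqref{indu} with the deterministic change-of-ordering subroutine from Section~3.2.2 of~\cite{SparseFGLM} for radical ideals in \textit{Shape Position}, and adding the resulting cost contributions. One small slip: in your last paragraph you revert to writing $T^j\mathbf{r}$ when, in the deterministic setting you set up, these should be $T^j\mathbf{v}$ for the deterministic starting vector $\mathbf{v}$ (e.g.\ $\mathbf{1}$) introduced earlier -- the randomized $\mathbf{r}$ no longer appears.
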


Now, to complete the first algorithm, we deal with the complexity of
Algorithm~\ref{algmultmat} to compute the multiplication matrices.
Note that in proposition~\ref{prop:probaalgo} and~\ref{cor:det} only
the last matrix \(T_n\) is needed. Before to consider the complexity
of Algorithm~\ref{algmultmat}, we first discuss about the complexity of computing $B$ and $F$.

\begin{lemma}\label{lem:BF}
Given $\gdrl$ (resp. $B$) the construction of $B$ (resp. $F$) requires
at most $O(n^3D^2)$ (resp. $O(nD^2+n^2D)$) elementary operations which
can be decreased to $O(nD)$ (resp. $O(n^2D)$) elementary operations if a 
hash table is used. 
\end{lemma}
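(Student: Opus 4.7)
}

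The plan is to analyze the two constructions separately, in each case first bounding the naive cost and then showing how hashing reduces it. The two facts that will drive all the bounds are: (i) the total number of monomials lying in $B\cup F$ is at most $D+\#F = O(nD)$, since $\#F \le nD$ by definition of $F$, and (ii) the number of leading monomials of $\gdrl$ is at most $\#F = O(nD)$, because each element of $\gdrl$ has a distinct leading term which must lie in $F$ (a corner of the staircase). Representing monomials as exponent vectors of length $n$, a single monomial multiplication, comparison, or divisibility test costs $O(n)$ elementary operations.

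\medskip

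\noindent\textbf{Construction of $B$ from $\gdrl$.}
For the naive bound, I would traverse the monomials of $\A$ in DRL order and, for each candidate $m$, test whether it is divisible by any leading term of $\gdrl$ (each test costs $O(n)$). One can stop as soon as the first $D$ monomials of $B$ have been found; the traversal visits at most $O(nD)$ monomials by (i). With at most $O(nD)$ leading terms to test against (by (ii)), each candidate costs $O(n^2D)$ and the total becomes $O(n^3D^2)$. For the hashed version, I would rely on the classical staircase characterization
\[
m \in B \iff m \notin \{\LT{drl}{g}\mid g\in\gdrl\}\ \text{and for every }j\text{ with }x_j\mid m,\ m/x_j\in B,
\]
and perform a BFS starting from $1$: pop $m$ from a queue, form the $n$ children $mx_i$, look up each child and each of its predecessors in a hash table storing $B$ and a hash table storing the leading terms. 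Since only the $D$ elements actually added to $B$ ever enter the queue, and each such insertion/lookup counts as a constant number of elementary operations on an exponent vector of length $n$, the grand total is $O(nD)$ elementary operations.

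\medskip

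\noindent\textbf{Construction of $F$ from $B$.}
I would simply enumerate the $nD$ products $\epsilon x_i$ for $\epsilon\in B$, $i=1,\dots,n$, and retain those that do not lie in $B$. Forming each product costs $O(n)$, yielding $O(n^2D)$ for the construction phase. For the naive membership test I would search $B$ (already at hand as an ordered list): each of the $nD$ tests needs to scan up to $D$ entries with $O(1)$ per comparison on precomputed keys, hence $O(nD^2)$ in aggregate; together with the construction this gives $O(nD^2+n^2D)$. Replacing the list by a hash table reduces each membership test to $O(1)$ amortized after the $O(n)$ cost of computing the key, so the total drops to $O(n^2D)$.

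\medskip

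\noindent\textbf{Expected main obstacle.}
The only non-routine point is the hashed construction of $B$: one has to argue that the BFS never revisits a monomial wastefully, and that the predecessor-check characterization above is enough to certify membership in $B$ without ever running a generic divisibility test against the whole of $\gdrl$. Once that characterization is in place the accounting is straightforward, because the total work is proportional to the number of edges $m \to mx_i$ explored, which is $O(nD)$.
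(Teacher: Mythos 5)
Your overall approach matches the paper's — a BFS/frontier traversal starting from the monomial $1$, with divisibility tests against the leading terms of $\gdrl$, accelerated by a hash table — and your treatment of the construction of $F$ (the $O(n^2D)$ cost of forming the $nD$ products, the $O(nD^2)$ naive membership test, and the observation that hashing leaves the total dominated by $O(n^2D)$) is correct and in line with what the paper does. The explicit staircase characterization you state, namely that $m\in B$ iff $m\notin\LT{drl}{\gdrl}$ and $m/x_j\in B$ for every $x_j\mid m$, is a clean way to justify why an $O(1)$-per-monomial test is available during the BFS, a point the paper asserts but does not spell out.

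There is, however, a genuine gap in your analysis of the \emph{naive} construction of $B$. You propose to ``traverse the monomials of $\A$ in DRL order'' and stop once all $D$ elements of $B$ are found, claiming this visits $O(nD)$ monomials ``by (i).'' That does not follow: fact (i) bounds $\#(B\cup F)$, but a sequential DRL enumeration up to the DRL-largest element of $B$ also sweeps through every monomial of smaller DRL rank that lies in the initial ideal but is \emph{not} in the border $F$, and nothing bounds their number by $O(nD)$. (Take, e.g., a monomial ideal $\langle x_1^3,x_2^N\rangle$ in two variables: $B\cup F$ has $\Theta(N)$ elements, while the number of monomials of degree $\le\deg(x_1^2x_2^{N-1})$ is $\Theta(N^2)$.) The paper avoids this trap by using exactly the BFS you describe for the hashed case \emph{also} for the naive case: one only ever forms products $x_i\epsilon$ with $\epsilon$ already certified to lie in $B$, so the set of monomials ever considered is contained in $B\cup F$ and has cardinality $O(nD)$; testing each of these $O(nD)$ candidates against the $O(nD)$ leading terms, at $O(n)$ per divisibility test, gives $O(n^3D^2)$. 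So the fix is simply to run your hashed BFS without the hash table; your naive ``linear scan of all monomials'' version should be dropped.
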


\begin{proof}
It is well known that the canonical basis $B$ can be computed in
polynomial time (but no arithmetic operations). Nevertheless, in order
to be self contained we describe an elementary algorithm to
compute \(B\). We start with the monomial \(1\) and we multiply it by
all the variables \(x_{i}\) which gives $n$ new monomials to
consider. If the new monomials are not divisible by a leading term of
a polynomial in $\gdrl$ then we keep it otherwise we discard it. At
each step we multiply by the variables \(x_{i}\) only the monomials of
highest degree that we have kept and we proceed until the step where
all the new monomials are discarded. Hence, we have to test the
irreducibility of all the elements in \(F\cup B\) whose total number
is bounded by $(n+1)D$. Since \(\LT{drl}{\gdrl}\subset F\) we can
bound the number of elements of \(\gdrl\) by \(nD\). Therefore, to
compute $B$ we have to test the divisibility of $(n+1)D$ monomials by
at most $nD$ monomials. Hence, the construction of $B$ can be done in
$O(n^{3}D^{2})$ elementary operations. Note that by using a hash table
and assuming we have no memory limit, for each monomial we can test
its divisibility by a leading term of polynomials in $\gdrl$ in $O(1)$
operations. In that case $B$ can be constructed in $O(nD)$
elementary operations.

From $B$, the construction of $F$ requires $nD$ monomials
multiplications \textit{i.e.} $n^2D$ additions of integers. Moreover,
removing $B$ of $F$ can be done by testing if $(n+1)D$ monomials are
in $B$ in at most $O(nD^2)$ elementary operations which can be
decreased to $O(nD)$ if we use a hash table.
\end{proof}

Now we seen how constructing $B$ and $F$, the complexity of
Algorithm~\ref{algmultmat} is treated in the following proposition.

\begin{proposition}\label{prop:mult}
Given the DRL Gröbner basis $\gdrl$ of an ideal, we can compute all
the multiplication matrices in $O((d_{\max}-d_{\min})n^\omega
D^\omega)$ (or in $O((d_{\max}-d_{\min})n^\omega d^{\omega n})$)
arithmetic operations in $\K$ where \(d_{\max}\) (resp. \(d_{\min}\))
is the maximal (resp. the minimal) degree of all the polynomials
in \(\gdrl\).
\end{proposition}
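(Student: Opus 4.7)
The plan is to track the cost of each step of Algorithm~\ref{algmultmat} and show that the reduced row echelon form computation dominates everything else. By Lemma~\ref{lem:BF}, constructing $B$ and $F$ from $\gdrl$ uses at most $O(n^3 D^2)$ elementary operations (no arithmetic), which will be absorbed into the final bound.

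Inside the outer loop over degrees $d = d_{\min},\ldots, d_{\max}$, the algorithm appends one row to the matrix $M$ per monomial $m \in F_d$. Since the monomials added at all previous degrees remain in $M$, at the end of iteration $d$ the matrix has rows indexed by all monomials of $F$ of degree at most $d$; hence its total size is bounded by $\#F \times (\#F + D) \leq nD \times (n+1)D$, using $\#F \leq nD$. Each newly added row encodes either a polynomial $g \in \gdrl$ (case I of Proposition~\ref{types}) or the polynomial $m - x_k\,\text{NF}[t']$ (case II). In both cases the row has at most $D+1$ nonzero entries: the leading monomial itself plus the coefficients of a normal form, which by the invariant of Proposition~\ref{proofAlgoMatMult} is a $\K$-linear combination of at most $D$ elements of $B$. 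Hence building all rows across all iterations costs $O(nD \cdot D) = O(nD^2)$ arithmetic operations.

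The dominant cost is the reduced row echelon form at Line~\ref{rowech}, executed $d_{\max} - d_{\min} + 1$ times. Using fast linear algebra, the reduced row echelon form of a matrix with $m$ rows and $N \geq m$ columns can be computed in $O(m^{\omega-1} N)$ arithmetic operations. With $m \leq nD$ and $N \leq (n+1)D$, each call costs $O((nD)^{\omega-1} \cdot nD) = O(n^{\omega} D^{\omega})$. Summing over the $d_{\max} - d_{\min} + 1$ iterations yields the overall bound $O((d_{\max}-d_{\min})\,n^{\omega} D^{\omega})$.

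Finally, the inner loop that reads normal forms from $M$ into the table $\text{NF}$ costs $O(\#F \cdot D) = O(nD^2)$, and the final assembly of $T_1,\ldots,T_n$ writes $nD$ columns of length $D$, again $O(nD^2)$; both are strictly dominated. The second form of the bound follows from the B\'ezout bound $D \leq d^n$, which gives $D^{\omega} \leq d^{\omega n}$. The only delicate point in the argument is ensuring that the rows of $M$ really have only $O(D)$ support, so that the cheap sparse construction is legitimate; this is exactly the content of the loop invariant already established in the proof of Proposition~\ref{proofAlgoMatMult}, so no additional work is required beyond the classical rectangular-RREF complexity.
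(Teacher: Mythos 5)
Your proposal is correct and follows essentially the same route as the paper: bound the number of outer iterations by $d_{\max}-d_{\min}$, bound the size of $M$ by $nD\times(n+1)D$, and charge $O(n^{\omega}D^{\omega})$ per reduced row echelon form, with row construction, $B$/$F$ construction, and the final extraction of $T_1,\ldots,T_n$ all strictly dominated. The only difference is in how the per-iteration RREF cost is justified: you invoke the generic $O(m^{\omega-1}N)$ bound for rectangular RREF applied to the whole matrix from scratch, whereas the paper exploits the block structure of $M$ (an identity block coming from the rows already reduced at previous degrees, a triangular block $T$ on the new rows, and dense blocks $B,C,D$) to write the RREF explicitly as $T^{-1}(C-BD)$ stacked over the previously reduced part; this makes the incremental nature of the computation visible but yields the same asymptotic bound. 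Both arguments are valid, and yours is a perfectly acceptable alternative presentation.
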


\begin{proof}
Algorithm~\ref{algmultmat}, computes all the multiplication matrices incrementally degree by degree.
\noindent The frontier \(F\) can be written as the union of  disjoint sets \(F_{\delta}=\{ t \in F\ |\ \deg(t) = \delta\}\) so that we  define \(s_\delta:=\#F_\delta\) and \(S_\delta :=s_{d_{\min}}+\cdots+s_\delta\).
The cost of the loop at Line~\ref{loop1} is, at each step, given by
the complexity of computing the reduced row echelon form of $M$. In
degree \(\delta\) the shape of the matrix $M$ is depicted on
Figure~\ref{fig:M} where \(\textbf{Id}(S_{\delta-1})\) is
the \(S_{\delta-1}\times S_{\delta-1}\) identity matrix,
\(\textbf{0}(S_{\delta-1})\) is the \(S_{\delta-1}\times s_{\delta}\)
 zero matrix, \(T\) is a \(s_{\delta}\times s_\delta\) upper triangular matrix
 and \(B,C,D\) are dense matrices of respective size \(s_\delta\times
 S_{\delta-1}\), \(s_\delta\times D\), \(S_{\delta-1}\times D\).

\begin{figure}[!h]
{\scriptsize
$$
M=\begin{array}{r|cccc|ccc|ccc|}
\multicolumn{1}{c}{} & 
\multicolumn{4}{c}{t\in F_\delta} 
     && t\in F_{\delta-1}\cup\cdots\cup F_{d_{\min}}  & \multicolumn{1}{c}{} & 
     \multicolumn{1}{c}{} &t\in B & \multicolumn{1}{c}{}\\
\cline{2-11}
 & 1 & \star & \cdots & \star & \star & \cdots & \star & \star & \cdots & \star\\
 & 0 & 1 & \cdots & \star & \star & \cdots & \star & \star & \cdots & \star\\
 & \vdots & \textcolor{red}{\textbf{T}} & \ddots & \vdots & \vdots &  \textcolor{red}{\textbf{B}} & \vdots & \vdots & \textcolor{red}{\textbf{C}} & \vdots\\
 & 0 & 0 & \cdots & 1 & \star & \cdots & \star & \star & \cdots & \star\\
\cline{2-11}
 & 0 & 0 & \cdots & 0 & 1 & \cdots & 0 & \star & \cdots & \star\\
 & \vdots & \multicolumn{2}{c}{\textcolor{red}{\textbf{0($S_{\delta-1},s_\delta$)}}} & \vdots & \multicolumn{1}{c}{\textcolor{red}{\textbf{Id($S_{\delta-1}$)}}} & \ddots & & \vdots & \textcolor{red}{\textbf{D}} & \vdots\\
 & 0 & 0 & \cdots & 0 & 0 & \cdots & 1 & \star & \cdots & \star\\
\cline{2-11}
\end{array}
$$

}
\caption{\label{fig:M}Shape of the matrix $M$ of Algorithm~\ref{algmultmat}.\\
}
\end{figure}
Consequently the  reduced row echelon form of \(M\) can be obtained from the following formula:
$$
\text{ReducedRowEchelonForm}(M) = \left [
\begin{array}{c|c}
 & T^{-1}(C - BD)\\
\textbf{Id}(S_\delta) & -------\\
& D
\end{array} \right ]\,.
$$ Since \(s_{\delta}\leq S_\delta\leq S_{d_{\max}}\leq n D\) we can
bound the complexity of computing the reduced row echelon form of $M$
by $O(n^\omega D^\omega)$. From Lemma~\ref{lem:BF}, the costs of the
construction of $B$ and $F$ are negligible in comparison to the cost
of loop in Line~\ref{loop1} which therefore gives the complexity of
Algorithm~\ref{algmultmat}: $O((d_{\max}-d_{\min})n^\omega D^\omega))$
arithmetic operations. Since $D \leq d^n$, this complexity can be
written as $O((d_{\max}-d_{\min})n^\omega d^{\omega n})$.
\end{proof}

\subsection{Complexity for regular systems}\label{sec:compreg}
Regular systems form an important family of polynomial
systems. Actually, the complexity of computing a Gr\"obner basis of a regular system is well understood. Since  the property of being
regular is a generic property this also the typical behavior  of polynomial systems.

\begin{definition}\label{def:regular}
A sequence of non zero homogeneous polynomials $(f_1,\ldots,f_m) \in
\A^m$ is regular if for all $i = 1,\ldots,m-1$, $f_{i+1}$ does not
divide $0$ in $\A/\Id{f_1,\ldots,f_i}$. A sequence of non zero affine
polynomials is regular if the sequence $(f_1^h,\ldots,f_m^h)$ is
regular where $f_i^h$ is the homogeneous part of highest degree of
$f_i$.
\end{definition}

For  regular systems we can bound accurately the  values of  $d_{\max}$ which is the maximal degree of \(\gdrl\) and we can prove the first
main result of this paper.

\begin{theorem}\label{thm:main1}
Let $\mathcal{S} = \{ f_1,\ldots,f_n \}$ be a polynomial system
generating a radical ideal admitting a LEX Gröbner basis in
\textit{Shape Position}. Assume that $(f_1,\ldots,f_n)$ is a regular
sequence of polynomials whose degrees are uniformly bounded by a fixed
integer $d$ \textit{i.e.} $\deg(f_i) \leq d$ for $i = 1,\ldots,n$. The
univariate polynomial representation of all the solutions of
$\mathcal{S}$ can be computed using a deterministic algorithm in
$O(nd^{\omega n} + (dn^{\omega+1}+\log_2(D))D^\omega)$ arithmetic
operations in $\K$.
\end{theorem}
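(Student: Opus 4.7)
The plan is to execute Algorithm~\ref{usual_posso} (first compute a DRL Gröbner basis, then change the ordering) and to add up the three complexity contributions, relying on Proposition~\ref{prop:mult} for the multiplication matrices and on Proposition~\ref{cor:det} for the deterministic change of ordering. Since the hypotheses of the theorem (radical ideal, Shape Position, simple roots) match exactly those required by the two propositions just mentioned, no extra genericity argument is needed beyond what is already built into the regularity assumption.

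First, I would invoke the known complexity of computing a DRL Gröbner basis of a regular sequence whose equations have degree at most $d$. Applying $F_5$ (or equivalently the Macaulay matrix analysis of Lazard, cited as \cite{BFSY05,Laz83}) yields a DRL basis $\gdrl$ in $\widetilde{O}(d^{\omega n})$ arithmetic operations, which we write as $O(nd^{\omega n})$ after absorbing the polynomial-in-$n$ overhead. Second, I would apply Algorithm~\ref{algmultmat} to $\gdrl$ to obtain the multiplication matrix $T_n$ (building all $T_i$ at once has the same asymptotic cost). Proposition~\ref{prop:mult} gives a bound of $O((d_{\max}-d_{\min}) n^\omega D^\omega)$, so everything reduces to controlling $d_{\max}-d_{\min}$. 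Third, the deterministic variant of the change of ordering (Proposition~\ref{cor:det}) produces $\glex$ from $\gdrl$ and $T_n$ in $O(\log_2(D) D^\omega + D^2(n+\log_2(D)\log_2\log_2 D))$ operations, whose dominant term is $O(\log_2(D) D^\omega)$.

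The main obstacle, and the only step that uses the regularity assumption in an essential way, is bounding $d_{\max}$. For a regular sequence $(f_1,\ldots,f_n)$ with $\deg(f_i)\leq d$, the classical Macaulay / degree-of-regularity bound gives
\[
d_{\max} \;\leq\; \sum_{i=1}^{n}(d_i - 1) + 1 \;\leq\; n(d-1) + 1,
\]
and $d_{\min} \geq 1$, so $d_{\max}-d_{\min} = O(nd)$. Substituting this into Proposition~\ref{prop:mult} yields $O(d n^{\omega+1} D^\omega)$ for the multiplication matrix step. Summing the three contributions gives
\[
O\!\bigl(n d^{\omega n}\bigr) \;+\; O\!\bigl(d n^{\omega+1} D^\omega\bigr) \;+\; O\!\bigl(\log_2(D) D^\omega\bigr) \;=\; O\!\bigl(n d^{\omega n} + (dn^{\omega+1}+\log_2(D))D^\omega\bigr),
\]
which is precisely the announced bound. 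The whole pipeline is deterministic: the DRL computation is deterministic, Algorithm~\ref{algmultmat} is deterministic by construction, and Proposition~\ref{cor:det} provides a deterministic change of ordering for radical Shape Position ideals, so no randomness enters the proof.
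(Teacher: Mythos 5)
Your proof follows exactly the same route as the paper's: compute $\gdrl$ via $F_5$ in $O(nd^{\omega n})$ using the BFSY/Lazard bound, invoke Proposition~\ref{prop:mult} with the Macaulay bound $d_{\max}\leq n(d-1)+1$ to get $O(dn^{\omega+1}D^\omega)$ for the multiplication matrices, and invoke Proposition~\ref{cor:det} for the deterministic change of ordering, then sum. This matches the paper's argument step for step, including the observation that all three stages are deterministic.
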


\begin{proof}
For regular systems \(d_{\max}\) can be bounded by the Macaulay bound~\cite{Laz83,BFSY05}:  $d_{\max} \leq
\sum_{i=1}^n(\deg(f_i) - 1) + 1 \leq n(d-1)+1$ .  Given the system $\mathcal{S}$ the complexity of
computing the DRL Gröbner basis of $\langle S \rangle$ is bounded by~\cite{BFSY05}:
$$O\left ( n\binom{n + d_{\max}}{n}^\omega \right ) = O\left (
n\binom{nd + 1}{n}^\omega \right ) = O(nd^{\omega n})$$ arithmetic
operations.

From this DRL Gröbner basis, according to Proposition~\ref{prop:mult},
the multiplication matrix $T_n$ can be computed in
$O(dn^{\omega+1}D^\omega)$ arithmetic operations.

Finally, from $T_n$ and the DRL Gröbner basis, thanks to
Proposition~\ref{cor:det} the univariate polynomial representation can
be computed by a deterministic algorithm in $O(\log_2(D)D^\omega +
D^2(n+\log_2(D)\log_2(\log_2(D))))$ arithmetic operations.  Since,
$F_4$ \cite{Fau99}, $F_5$ \cite{Fau02} and Algorithm~\ref{algmultmat}
are deterministic algorithms this finishes the proof.
\end{proof}

In particular, a generic system is regular. Let $d_i
= \deg(f_i)$ for all $i=1,\ldots,n$.  Since the Bézout bound allows
to bound the number of solutions $D$ by $\prod_{i=1}^n d_i \leq d^n$
and since this bound is generically reached, we have generically that
$D = \prod_{i=1}^n d_i \leq d^n$ and we get the following corollary.

\begin{corollary}
Let \(\K\) be the rational field $\Q$ or a finite field
$\F_q$. Let \(\mathcal{S} = \{f_1,\ldots,f_n\}\subset
\K[x_1,\ldots,x_n]\) be a generic polynomial system generating an ideal $\id{I} = \langle\mathcal{S}\rangle$ of degree \(D\). If $\id{I}$ admits a LEX Gröbner basis in \textit{Shape Position} and if the degree of each polynomial in $\mathcal{S}$ is uniformly
bounded by a fixed integer $d$ then there exists a deterministic
algorithm which computes the univariate polynomial representation of
the roots of \(\mathcal{S}\) in $\widetilde{O}(d^{\omega
n})=\widetilde{O}(D^\omega)$ arithmetic operations where the notation
$\widetilde{O}$ means that we neglect logarithmic factors in $D$ and
polynomial factors in $n$.
\end{corollary}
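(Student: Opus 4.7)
The plan is to view this statement as an essentially immediate consequence of Theorem~\ref{thm:main1}, with the work consisting of (a) checking that the genericity hypothesis entails the regularity assumption of that theorem, and (b) absorbing every factor other than $d^{\omega n}$ (equivalently $D^\omega$) into the $\widetilde{O}$ notation.

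First I would address the regularity step. The set of $n$-tuples $(f_1,\ldots,f_n)$ of polynomials of degree bounded by $d$ which form a regular sequence is well known to be Zariski-open and nonempty in the affine space of coefficients (one can check this via the Hilbert function of the truncation at the Macaulay bound, as in \cite{BFSY05}). Hence a generic $\mathcal{S}$ satisfies the regularity hypothesis of Theorem~\ref{thm:main1}. By assumption, $\id{I}$ is radical (since the hypotheses of Problem~\ref{pb:posso} give simple roots) and admits a LEX Gröbner basis in \textit{Shape Position}, so all the remaining hypotheses of Theorem~\ref{thm:main1} are satisfied.

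Next I would apply the Bézout bound. For a generic system the bound is reached, so $D = \prod_{i=1}^n \deg(f_i)$; moreover, a generic dense polynomial of degree bounded by $d$ actually has degree exactly $d$, so $D = d^n$. In particular $d^{\omega n} = D^\omega$, which is the key identity that lets us express the complexity in terms of $D$.

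Finally I would plug into Theorem~\ref{thm:main1} and simplify. The theorem gives a deterministic algorithm running in
\[
O\bigl(nd^{\omega n} + (dn^{\omega+1} + \log_2(D))D^\omega\bigr)
\]
arithmetic operations. Using $D = d^n$, the first summand is $nD^\omega$, and the second is $(dn^{\omega+1} + \log_2(D))D^\omega$. Since the $\widetilde{O}$ notation absorbs polynomial factors in $n$ and $d$ as well as logarithmic factors in $D$, both summands collapse to $\widetilde{O}(D^\omega) = \widetilde{O}(d^{\omega n})$. There is no real obstacle here; the only point that warrants a brief verification is the genericity--regularity implication, and the arithmetic simplification at the end is purely bookkeeping inside the $\widetilde{O}$ notation.
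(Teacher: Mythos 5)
Your argument is correct and follows the same route the paper takes: verify that genericity implies regularity (and radicality), note that the Bézout bound is attained so that $D = d^n$, and then apply Theorem~\ref{thm:main1} and absorb the remaining polynomial-in-$(n,d)$ and logarithmic-in-$D$ factors into the $\widetilde{O}$. Your proposal simply spells out the bookkeeping (why generic dense polynomials of degree bounded by $d$ have degree exactly $d$, and why the two summands in the complexity bound both collapse to $\widetilde{O}(D^\omega)$) that the paper leaves implicit in the paragraph preceding the corollary.
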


In the next section, we study a first step towards the generalization
of Theorem~\ref{thm:main1} to polynomial systems with equations of non
fixed degree. More precisely, we are going to discuss what happens if
one polynomial have a non fixed degree \textit{i.e.} its degree
depends on a parameter (for instance the number of variables).
In this case, Theorem 5.1 does not apply but we present other
arguments in order to obtain a similar complexity results for
computing $\glex$ given $\gdrl$ and new ideas for its generalization.

\section{A worst case ultimately not so bad}
\label{sec:worst}

We consider, for instance, the following pathological case: $\deg(h_1)
= \cdots = \deg(h_{n-1}) = 2$ and $\deg(h_n) = 2^n$. Then, $D =
2^{2n-1}$, $d_{\min} = 2$ and $d_{\max} = 2^n +n -1$. In this context,
the complexity of computing $\glex$ given $\gdrl$ seems to be in
$O( \log_2^\omega(D)D^{\omega+\frac{1}{2}})$ arithmetic
operations. However, we will show that an adaptation of
Algorithm~\ref{algmultmat} allows to decrease this complexity.

In \cite{Mor03}, Moreno-Socias studied the basis of the residue class
ring $\A/\id{I}$, w.r.t. the DRL ordering, for generic ideals.  In
particular, he shows that when the smallest variable $x_n$ is in
abscissa any section of the stairs of $\id{I}$ has steps of height one
and of depth two. That is to say, for any variable $x_i$ with $i<n$
and for all instantiations of the other variables
($\{x_1,\ldots,x_{n-1}\} \setminus \{x_i\}$) the associated section of
the stairs of $\id{I}$ has the shape in Figure~\ref{stairs}.

\begin{figure}[!h]
\begin{center}  
        \begin{tikzpicture}[x={(0.5cm,0cm)},y={(0cm,0.5cm)}]
        \fill[black] (0,6) circle (1.5pt);
        \fill[black] (4,5) circle (1.5pt);
        \draw[->,red,thick] (4,5) arc (-90:-180:1);
        \fill[black] (6,4) circle (1.5pt);
        \draw[->,red,thick] (6,4) arc (-90:-180:1);
        \fill[black] (8,3) circle (1.5pt);
        \draw[->,red,thick] (8,3) arc (-90:-180:1);
        \fill[black] (10,2) circle (1.5pt);
        \draw[->,red,thick] (10,2) arc (-90:-180:1);
        \fill[black] (12,1) circle (1.5pt);
        \draw[->,red,thick] (12,1) arc (-90:-180:1);
        \fill[black] (14,0) circle (1.5pt);
        \draw[->,red,thick] (14,0) arc (-90:-180:1);
        \node (x1) at (-0.5,7) {$x_i$};
        \node (x2) at (16,-0.5) {$x_n$};
        \node (O) at (-0.3,-0.3) {{\footnotesize $0$}};
        \node (mu) at (3,-0.7) {{\footnotesize \begin{tabular}{c}$\mu$ \\
        (defined in Prop.~\ref{Moreno})
        \end{tabular}}}; 
        \fill[black] (15,6) circle (1.5pt);       
        \node (gen) at (18.5,6) {{\scriptsize $\LT{drl}{g}$ for some $g \in \gdrl$}}; 
        \fill[blue] (15,5) circle (1pt);
        \node (basis) at (16.9,5) {{\scriptsize Element of $B$}};
        \draw[->,red] (14.7,4) -- (15.4,4);
        \node at (16.3,4) {{\scriptsize $\times \frac{x_i}{x_n}$}};
        \draw[-] (14.5,3.5) -- (14.5,6.5);
        \draw[-] (14.5,3.5) -- (22,3.5);
        \draw[-] (14.5,6.5) -- (22,6.5);
        \draw[-] (22,6.5) -- (22,3.5);
        \draw[dashed] (3,0) -- (3,7);
        \foreach \y in {0,1,...,5} {
                 \pgfmathsetmacro{\lastpoint}{13-2*\y}
                 \foreach \x in {0,1,...,\lastpoint} {
                 \fill[blue] (\x,\y) circle (1pt);
                 }
        }
        \draw[->] (0,0) -- (0,7);
        \draw[->] (0,0) -- (16,0);
        \draw[-] (0,6) -- (4,6);
        \draw[-] (4,6) -- (4,5);
      
        \draw[-] (4,5) -- (6,5);
        \draw[-] (6,5) -- (6,4);
      
        \draw[-] (6,4) -- (8,4);
        \draw[-] (8,4) -- (8,3);
      
        \draw[-] (8,3) -- (10,3);
        \draw[-] (10,3) -- (10,2);
      
        \draw[-] (10,2) -- (12,2);
        \draw[-] (12,2) -- (12,1);
      
        \draw[-] (12,1) -- (14,1);
        \draw[-] (14,1) -- (14,0);
      
       \draw[<->] (6,4.2) -- (8,4.2); 
       \node at (7.12,4.5) {{\scriptsize depth \(2\)}};
       \draw[<->] (8.2,3) -- (8.2,4); 
        \node at (9.3,3.5) {{\scriptsize height \(1\)}};
       \end{tikzpicture}       
\end{center}
        \caption{Section of the stairs of generic ideals with $\deg(x_j)$ fixed for all $j \in \{1,\ldots,n-1\}\setminus i$.\label{stairs}} 
\end{figure}

This shape is summarized in Proposition~\ref{Moreno}. 

\begin{proposition}[Moreno-Socias\cite{Mor03}]\label{Moreno}
Let $\widetilde{B}_i = \{ m = x^{\alpha_1}\cdots
x_{n-1}^{\alpha_{n-1}}\ |\ mx_n^i \in B \}$. Let $\delta
= \sum_{i=1}^{n}(\deg(h_i)-1)$, $\delta^*
= \sum_{i=1}^{n-1}(\deg(h_i)-1)$ and $\sigma = \min \left
(\delta^*, \lfloor \frac{\delta}{2} \rfloor \right )$. Let $\mu
= \delta - 2\sigma$, then 
\begin{enumerate}
\item $\widetilde{B}_0 = \cdots =
\widetilde{B}_\mu$ and $\widetilde{B}_i
= \widetilde{B}_{i+1}$ for $\mu < i < \delta$ and $i \not \equiv
\delta \mod 2$;
\item The leading term of polynomials in $\gdrl$ of degree $0$ in $x_n$ have degree at most $\sigma+1 = \bar{\sigma}$;
\item\label{leadterm} The leading term of polynomials in $\gdrl$ of degree $\alpha$ in $x_n$ with $\mu < \alpha \leq \delta +1 $ with $\alpha \not \equiv \delta \mod 2$ are all of total degree $d  + \alpha$ where $d = \max(\deg(m)\ |\ m \in \widetilde{B}_{\alpha-1})$. Moreover, all these leading terms are exactly given by $t = mx_n^\alpha$ for all $m \in \widetilde{B}_{\alpha-1}$ of degree $d$;
\item\label{notinG} There is no leading term of polynomials in $\gdrl$ of degree $1,\ldots,\mu$ in $x_n$ or of degree $\alpha$ in $x_n$ with $\alpha > \delta+1$ or $\mu \leq \alpha \leq \delta$ and $\alpha \equiv \delta \mod 2$.
\end{enumerate}
\end{proposition}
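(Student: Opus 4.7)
The strategy is to recognize that Proposition~\ref{Moreno} is essentially a translation of Moreno-Socias' description of the generic DRL initial ideal in \cite{Mor03} into the language of the Shape Position hypothesis used here. So the plan is: (i) establish the basic combinatorics linking $B$, $\widetilde{B}_i$, and the leading terms of $\gdrl$; (ii) reduce the statement to a Hilbert-series computation; (iii) invoke the genericity result of \cite{Mor03} to pin down the precise ``step'' shape.

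First, I would record elementary consequences of the fact that $B$ is the complement of a monomial ideal and hence downward-closed under monomial division. This immediately yields $\widetilde{B}_0 \supseteq \widetilde{B}_1 \supseteq \cdots$ and the partition $D = \sum_{i \geq 0} \#\widetilde{B}_i$. Moreover, each minimal generator of $\inId{drl}{\id{I}}$ of $x_n$-degree exactly $\alpha \geq 1$ can be written uniquely as $m\,x_n^{\alpha}$ with $m \in \widetilde{B}_{\alpha-1} \setminus \widetilde{B}_{\alpha}$; generators of $x_n$-degree $0$ correspond to the ``top'' of the slice $\widetilde{B}_0$. This is the dictionary through which (2), (3), (4) will be read off from (1).

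Second, under the Shape Position hypothesis the Hilbert function of $V_{\id{I}} = \A/\id{I}$ is completely determined by the sequence $(\deg(h_1),\ldots,\deg(h_n))$, since $V_{\id{I}}$ is isomorphic as a graded vector space to a quotient whose Hilbert series is $\prod_{i=1}^n (1 - t^{\deg(h_i)})/(1-t)$. The quantities $\delta$, $\delta^{*}$, $\sigma$ and $\mu$ are nothing but the natural parameters of this Hilbert series (total socle degree, $x_n$-free socle degree, and the symmetry break-point). In particular, slicing the series by powers of $x_n$ gives a generating function for the $\#\widetilde{B}_i$ whose shape forces (1): the plateau $\widetilde{B}_0 = \cdots = \widetilde{B}_\mu$ comes from the fact that multiplication by $x_n$ is injective on degrees $\leq \mu$, while the alternating equalities $\widetilde{B}_i = \widetilde{B}_{i+1}$ for $\mu < i < \delta$, $i \not\equiv \delta \pmod 2$ come from the Gorenstein-type symmetry of a generic complete intersection's Hilbert function.

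Third, once (1) is established the remaining claims follow from the dictionary of the first paragraph. Claim (2) bounds the total degrees of generators on the $x_n=0$ slice by $\sigma+1$ because the $x_n$-free part of $\A/\id{I}$ has socle degree $\sigma$. Claim (3) uses that, in the specified range of $\alpha$, the drop $\widetilde{B}_{\alpha-1} \setminus \widetilde{B}_{\alpha}$ consists of all the monomials of one common maximal degree $d$ in $\widetilde{B}_{\alpha-1}$ (the ``top row'' of that slice), so the associated leading terms $mx_n^\alpha$ all share total degree $d+\alpha$. Claim (4) is the dual statement: in the complementary range there is no drop, hence no minimal generator at those $x_n$-degrees.

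The main obstacle is the passage from a Hilbert-series computation to the precise combinatorial shape of the staircase. Many ideals share the same Hilbert function; one needs the genericity of the system to guarantee that the DRL initial ideal is the specific ``revlex-generic'' monomial ideal whose staircase has height-one, depth-two steps. This is precisely the content of Moreno-Socias' theorem and is not formal: a self-contained proof requires a delicate double induction on $n$ and on the $x_n$-degree, carefully tracking which monomials survive in generic position. We therefore simply cite \cite{Mor03} for this step and combine it with the Hilbert-series bookkeeping above to read off (1)--(4).
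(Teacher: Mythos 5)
The paper provides no proof of this proposition at all: it is stated as an attributed theorem of Moreno-Socias and simply cited from \cite{Mor03}, with the surrounding prose in Section~\ref{sec:worst} only restating the ``height one, depth two'' staircase picture informally. Your sketch ultimately does the same thing---you explicitly defer the genericity step to \cite{Mor03}---and the expository scaffolding you wrap around the citation (downward-closedness of $B$ giving $\widetilde{B}_0 \supseteq \widetilde{B}_1 \supseteq \cdots$ and $D = \sum_i \#\widetilde{B}_i$, the dictionary between minimal generators of the initial ideal and the sets $\widetilde{B}_{\alpha-1}\setminus\widetilde{B}_{\alpha}$, and the Hilbert-series heuristics) is sound and consistent with what \cite{Mor03} actually establishes, so your treatment matches the paper's.
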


In our case, we have $d_{\max} = \delta +1$, $\delta^* = n-1$, $\delta
= 2^n +n -2$, $\sigma = n-1$ and $\mu = 2^n -n$. We can note that 
in this particular case, $\mu$
is very large which implies that a large part of the monomials of the
form $\epsilon_i x_j$ are actually in $B$.
We will show that in Algorithm~\ref{algmultmat} instead of computing
the loop in Line~\ref{loop1} for $d = d_{\min},\ldots,d_{\max}$ we can
perform it only on restricted subset $d =
d_{\min},\ldots,\sigma(n-1)+1, \mu+1,\ldots,d_{\max}$. By consequence,
the complexity of computing $\glex$ given $\gdrl$ will be in
$O((d_{\max} - \mu + \sigma(n-1) - d_{\min})n^\omega D^\omega) =
O(\log_2^{\omega+2}(D)D^\omega)$ with $d_{\max} - \mu + \sigma(n-1) -
d_{\min} = n^2 - 2 \sim \log_2^2(D)$.

\begin{lemma}
Given the normal form of all monomials in $F$ of degree less or equal to
$\sigma(n-1)+1$ we can compute all the normal forms of all monomials in
$F$ of degree less or equal than $\mu$ in less than $O(nD^2)$
arithmetic operations.
\end{lemma}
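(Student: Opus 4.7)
The proof hinges on the ``flatness'' properties of the staircase in the range of $x_n$-exponents from $0$ to $\mu$, as captured by items~1, 2, and~4 of Proposition~\ref{Moreno}. The key point is that in this range multiplication by $x_n$ acts on the basis~$B$ as a pure shift: no arithmetic is required, only reindexing.

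First I would show that every $t\in F$ with $\sigma(n-1)+1<\deg(t)\leq \mu$ admits a factorization $t=x_n\,t'$ with $t'\in F$ and $\deg(t')=\deg(t)-1$. For this, I would bound the maximal degree of a monomial in $\widetilde{B}_0$: by item~2 of Proposition~\ref{Moreno}, every leading term of $\gdrl$ of $x_n$-degree $0$ has total degree at most $\bar\sigma=\sigma+1$; hence a monomial in the $(n-1)$ variables $x_1,\ldots,x_{n-1}$ that is not divisible by any such leading term has degree at most $\sigma(n-1)$. Consequently, any $t\in F$ of the form $x_i\epsilon$ with $\epsilon\in \widetilde{B}_0$ has $\deg(t)\leq \sigma(n-1)+1$, which forces our $t$ to have $x_n$-degree $j\geq 1$. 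Writing $t=x_im\,x_n^j$ with $i<n$, $m\in\widetilde{B}_j$, the equality $\widetilde{B}_{j-1}=\widetilde{B}_j$ from item~1 (valid since $j\leq \mu$) gives $m\,x_n^{j-1}\in B$, so $t/x_n=x_i(m\,x_n^{j-1})$ is an element of the form $x_i\epsilon'$ with $\epsilon'\in B$; since $\widetilde{B}_{j-1}=\widetilde{B}_j$ also forces $x_im\notin \widetilde{B}_{j-1}$, we conclude $t/x_n\in F$.

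Next I would observe that multiplication by $x_n$ on any basis element $\epsilon_l=m_l\,x_n^{j_l}\in B$ with $j_l<\mu$ is trivial: flatness gives $m_l\in\widetilde{B}_{j_l+1}$, hence $x_n\epsilon_l=m_l\,x_n^{j_l+1}\in B$ and $\NF{drl}{x_n\epsilon_l}=x_n\epsilon_l$ requires no arithmetic operation. Because DRL reduction does not increase total degree, $\NF{drl}{t/x_n}$ is supported on basis elements of total degree $\leq \deg(t)-1\leq \mu-1$, and therefore of $x_n$-degree strictly less than $\mu$. Thus the identity
\[
\NF{drl}{t}=\NF{drl}{x_n\cdot\NF{drl}{t/x_n}}=\sum_l \alpha_l\,\NF{drl}{x_n\epsilon_l}
\]
reduces to a pure reindexing of the coefficient vector of $\NF{drl}{t/x_n}$.

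Finally I would process the frontier monomials in increasing total degree. For each $t\in F$ with $\deg(t)\leq\sigma(n-1)+1$, $\NF{drl}{t}$ is given by hypothesis; for each $t\in F$ with $\sigma(n-1)+1<\deg(t)\leq\mu$, $\NF{drl}{t/x_n}$ has been computed at a previous iteration (or is given), and $\NF{drl}{t}$ is obtained by shifting this vector of length $D$, at cost $O(D)$ operations. Summing over $|F|\leq nD$ monomials yields the announced bound $O(nD^2)$.

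The main obstacle I anticipate is the careful bookkeeping in the first step: verifying that every candidate factorization $t=x_n\,t'$ stays in $F$ (and not merely in $F\cup B$), and that the maximal degree of $\widetilde{B}_0$ is exactly $\sigma(n-1)$ rather than a larger quantity that would shrink the range in which we can iterate cheaply. Both facts are delicate consequences of items~1 and~2 of Proposition~\ref{Moreno} that need to be combined; once they are in place, the cost analysis follows immediately.
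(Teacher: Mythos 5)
Your proof is correct and follows the paper's strategy closely: you bound $\max\deg\widetilde{B}_0\le\sigma(n-1)$ via item~2 of Moreno-Socias and then exploit the flatness $\widetilde{B}_0=\cdots=\widetilde{B}_\mu$ (item~1) so that multiplication by $x_n$ acts as a free shift on the degree-$\le\mu$ part of $B$, yielding the $O(D)$ cost per frontier monomial. The only differences are cosmetic: the paper factors out $x_n^\alpha$ in a single step (landing directly on a monomial whose normal form is given by hypothesis, using item~4 where you use item~1), whereas you peel off one power of $x_n$ at a time and recurse through previously computed degrees; you also leave implicit the small check that the frontier decomposition $t = x_i\epsilon$ can always be taken with $i<n$, which follows because $t/x_n\notin B$ by the very flatness argument you invoke.
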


Suppose that we know the normal form of the monomials of the forms
$\epsilon_i x_j$ of degree less than $\mu$ which are not divisible by
$x_n$. From these normal forms, the idea of the proof is to show that
the normal form of all the monomials of the form $\epsilon_i x_j$ of
degree less than $\mu$ and of degree $\alpha_n > 0$ in $x_n$ is given by
$x_n^{\alpha_n}\NF{drl}{t}$ where $\NF{drl}{t}$ is assumed to be known.

\begin{proof}
Let $t \in F$ of degree less or equal to $\mu$. First, assume that
$x_n$ does not divide $t$. As $\id{I}$ is zero dimensional, there
exists $\eta_1,\ldots,\eta_{n-1} \in \N$ such that $x_i^{\eta_i}$ is a
leading term of a polynomial in $\gdrl$. Moreover, from
Proposition~\ref{Moreno}, $\eta_i \leq \bar{\sigma}$. Hence, for all
$\epsilon \in \widetilde{B}_0$, $\deg(\epsilon) \leq \sigma
(n-1)$. The monomials in $F$ not divisible by $x_n$ are all of the
form $x_i \epsilon$ with $i=1,\ldots,n-1$ and
$\epsilon \in \widetilde{B}_0$. Thus $\deg(t) \leq \sigma(n-1)+1$ and
by hypothesis, its normal form is known.

Suppose now that $x_n$ divides $t$ and $t$ is of type~\ref{type2} of
Proposition~\ref{types}. We can write $t = x_n^{\alpha} t'$ where
$\alpha \in \N^*$ such that $x_n \nmid t'$. From
Proposition~\ref{Moreno} item \eqref{notinG}, $t'$ is a leading term
of a polynomial in $\Id{\gdrl}$. Moreover, $t \in F$ so $t = x_i
\epsilon$ with $\epsilon \in B$. Suppose that $i = n$ hence,
$\frac{t}{x_n} = \epsilon = x_n^{\alpha-1} t' \in \Id{\gdrl}$ which is
impossible. Thus, $i \ne n$ and we have, $t' = \frac{t}{x_n^{\alpha}}
= x_i \epsilon'\in F$ with $\epsilon' = \frac{\epsilon}{x_n^\alpha}
\in B$. Therefore, from the first part of this proof, $\NF{drl}{t'} =
\sum_{i=1}^s \alpha_i \epsilon_i$, $\alpha_i \in \K$ is
known. Finally, $\NF{drl}{t} = \sum_{i=1}^s \alpha_i
\NF{drl}{x_n^\alpha \epsilon_i}$ with $\deg(x_n^\alpha \epsilon_i)
\leq \mu$. Let $k_i$ be such that $x_n^{k_i} | \epsilon_i$ and
$x_n^{k_i+1} \nmid \epsilon_i$ as $\widetilde{B}_{k_i} =
\widetilde{B}_{k_i+\alpha}$ then $x_n^\alpha \epsilon_i \in B$ and
$\NF{drl}{t} = \sum_{i=1}^s \alpha_i x_n^\alpha \epsilon_i$.  

By consequence, computing the normal form of $t$ can be done
in less than $D$ arithmetic operations. As usual, we can bound the size of $F$
by $nD$ which finishes the proof.
\end{proof}

One can notice that Algorithm~\ref{algchgord} -- which computes
univariate polynomial representation -- takes as input only the
multiplication matrix by the smallest variable. Thus in the proof of
Theorem~\ref{thm:main1} we did not fully take advantage of this
particularity. Hence, the next section is devoted to study if this
matrix can be computed more efficiently than computing all the
multiplication matrices. By studying the structure of the basis of the
$\K$-vector space $\A/\id{I}$ we will show that, up to a linear change
of variables, $T_n$ can be deduced from
$\gdrl$. In the previous results, the algorithm restricting the order
of magnitude of the degrees of the equations is
Algorithm~\ref{algmultmat} to compute the multiplication
matrices. Since, we need only $T_n$ which can be computed very
efficiently, the impact of such a result is that there exists a Las
Vegas algorithm extending the result of Theorem~\ref{thm:main1} to
polynomial systems whose equations have non fixed degree.

\section{Polynomial equations with non-fixed degree: the wild case}\label{sec:wild}

%-*- coding: utf-8 -*-

In this section, in order to obtain our main result, we
consider \emph{initial} and \emph{generic} ideals. The initial ideal
of $\id{I}$, denoted $\text{in}_{<}(\id{I})$, is defined by
$\text{in}_{<}(\id{I}) = \{ \text{LT}_{<}(f)\ |\ f \in \id{I} \}$. A minimal
set of generators of $\text{in}_{<}(\id{I})$ is denoted $\stair{\id{I}}$,
and is given by the leading terms of the polynomials in the Gröbner
basis of $\id{I}$ w.r.t. the monomial ordering $<$. To compute the
multiplication matrix $T_n$ we need to compute the normal forms of all
monomials $\epsilon_i x_n$ for $i=1,\ldots,D$ with $\epsilon_i \in
B$. As mentioned in Section~\ref{sec:matrix} a monomial of the form
$\epsilon_i x_n$ can be either in $B$ or in $\stair{\id{I}}$ or in
$\text{in}_{<}(\id{I}) \setminus \stair{\id{I}}$. As previously shown, the
difficulty to compute $T_n$ lies in the computation of the normal
forms of monomials $\epsilon_i x_n$ that are in
$\text{in}_{<}(\id{I}) \setminus \stair{\id{I}})$. In this section, thanks
to the study of the stairs, \textit{i.e.} $B$, of generic ideals by
Moreno-Socias, see Section~\ref{sec:worst}, we first show that for
generic ideals, \textit{i.e.}  ideals generated by generic systems (as
defined in Section~\ref{sec:compreg}), all monomials of the form
$\epsilon_i x_n$ are in $B$ or in $\stair{\id{I}})$. Hence, the
multiplication matrix $T_n$ can be computed very efficiently. Then, we
show that, up to a linear change of variables, this result can be
extended to any ideal. According to these results, we finally propose
an algorithm for solving the PoSSo problem whose complexity allows to
obtain the second main result of this paper.

\subsection{Reading directly $T_n$ from the Gr\"obner basis}\label{Tnfree}

In the sequel, the arithmetic
operations will be the addition or the multiplication of two operands
in $\K$ that are different from $\pm 1$ and $0$. In particular we do
not consider the change of sign as an arithmetic operation.

\begin{proposition}\label{height1}
Let $\id{I}$ be a generic ideal.  Let $t$ be a monomial in
$\stair{\id{I}}$ \textit{i.e.} a leading term of a polynomial in the DRL
Gröbner basis of $\id{I}$. If $x_n$ divides $t$ then for all $k \in 
\{1,\ldots,n-1\}$,  $\frac{x_kt}{x_n} \in \inId{drl}{\id{I}}$.
\end{proposition}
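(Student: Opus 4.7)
The plan is to reduce the claim to a single degree comparison by invoking the Moreno--Socias description (Proposition~\ref{Moreno}) of leading terms of $\gdrl$ for generic ideals, which precisely controls those leading terms that are divisible by $x_n$.

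First I would write $t = m\, x_n^{\alpha}$ with $\alpha \geq 1$ (since $x_n \mid t$) and $x_n \nmid m$, so that $m$ is a monomial in $x_1,\ldots,x_{n-1}$ and $\alpha$ is the exact $x_n$-degree of $t$. Because $t \in \stair{\id{I}}$ is a leading term of some polynomial in $\gdrl$ whose degree in $x_n$ is at least one, item~\eqref{notinG} of Proposition~\ref{Moreno} rules out $1 \leq \alpha \leq \mu$, $\alpha > \delta+1$, and $\mu \leq \alpha \leq \delta$ with $\alpha \equiv \delta \pmod 2$; hence $\mu < \alpha \leq \delta+1$ and $\alpha \not\equiv \delta \pmod 2$. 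Item~\eqref{leadterm} of the same proposition then forces $m \in \widetilde{B}_{\alpha-1}$ and
\[
\deg(m) \;=\; d \;:=\; \max\bigl\{\deg(m') \;\big|\; m' \in \widetilde{B}_{\alpha-1}\bigr\}.
\]

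Next, fix $k \in \{1,\ldots,n-1\}$ and examine the monomial
\[
\frac{x_k\, t}{x_n} \;=\; x_k\, m\, x_n^{\alpha-1}.
\]
Since $k < n$ and $x_n \nmid m$, the product $x_k m$ is a monomial in $x_1,\ldots,x_{n-1}$ of total degree $d+1 > d$. By maximality of $d$ within $\widetilde{B}_{\alpha-1}$, this gives $x_k m \notin \widetilde{B}_{\alpha-1}$, which by the very definition of $\widetilde{B}_{\alpha-1}$ means $x_k\, m\, x_n^{\alpha-1} \notin B$. Since the monomials of $\A$ partition into $B$ and $\inId{drl}{\id{I}}$, I conclude that $\frac{x_k t}{x_n} \in \inId{drl}{\id{I}}$, as required.

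The only real obstacle is extracting the right consequence from Proposition~\ref{Moreno}: once one sees that a leading term of $\gdrl$ divisible by $x_n$ must come from a monomial $m$ achieving the maximum degree inside $\widetilde{B}_{\alpha-1}$, the statement follows from a one-line degree comparison, with no arithmetic input required. Note also that the argument uses $k < n$ essentially only to ensure that $x_k m$ remains a monomial in $x_1,\ldots,x_{n-1}$, which is exactly the ambient set in which $\widetilde{B}_{\alpha-1}$ lives.
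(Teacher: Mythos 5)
Your proof is correct and follows essentially the same route as the paper's: write $t = m\,x_n^\alpha$, invoke items (4) and then (3) of Proposition~\ref{Moreno} to get $\mu<\alpha\le\delta+1$ with $\alpha\not\equiv\delta\pmod 2$ and to identify $\deg(m)$ as the maximal degree in $\widetilde{B}_{\alpha-1}$, and conclude $x_km\notin\widetilde{B}_{\alpha-1}$, hence $x_km\,x_n^{\alpha-1}\in\inId{drl}{\id{I}}$. The only difference is cosmetic: you make explicit the partition of monomials into $B$ and $\inId{drl}{\id{I}}$, which the paper leaves implicit.
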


\begin{proof}
This result is deduced from the shape of the stairs of $\id{I}$ (see
Figure~\ref{stairs} for a representation in dimension $2$).  Let $t =
x_1^{\alpha_1}\cdots x_n^{\alpha_n}$ be a leading term of a polynomial
in $\gdrl$ divisible by $x_n$ \textit{i.e.} $\alpha_n > 0$ and $m =
x_1^{\alpha_1}\cdots x_{n-1}^{\alpha_{n-1}}$. We use the same
notations as in Proposition~\ref{Moreno}.

From Proposition~\ref{Moreno} item~\eqref{notinG}, since $t \in
E(\id{I})$ and $\alpha_n > 0$ we have $\alpha_n > \mu$ and
$\alpha_n \not \equiv \delta \mod 2$.
Then, from Proposition~\ref{Moreno} item \eqref{leadterm},
$\deg(m)$ is the maximal degree reached by the monomials in
$\widetilde{B}_{\alpha_{n-1}}$. Thus $x_k m \notin
\widetilde{B}_{\alpha_{n-1}}$ for all $k \in \{1,\ldots,n-1\}$. As a
consequence, for all $k \in \{1,\ldots,n-1\}$ we have
$\frac{x_kt}{x_n} \in \inId{drl}{\id{I}}$.
\end{proof}

Consequently, from the previous proposition, we obtain the following
result.

\begin{theorem}\label{thm:Tnopt1}
  Given $\gdrl$ the DRL Gröbner basis of a generic ideal $\id{I}$, the
  multiplication matrix $T_n$ can be read from \(\gdrl\) with no
  arithmetic operation.
\end{theorem}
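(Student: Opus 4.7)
The plan is to reduce the theorem to showing that, for every $\epsilon \in B$, the product $\epsilon x_n$ is either already in $B$ or is the leading term of some $g \in \gdrl$. If that dichotomy holds, then the column of $T_n$ indexed by $\epsilon$ can be filled in one of two ways: either by placing a single $1$ in the row indexed by $\epsilon x_n \in B$, or by copying and negating the non-leading coefficients of the unique $g \in \gdrl$ with $\LT{drl}{g} = \epsilon x_n$, so that the column equals $\psi(\epsilon x_n - g)$. Under the convention fixed at the start of the section that operations involving $0$ or $\pm 1$ are not counted, filling the matrix then requires no arithmetic operation.

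The key step is therefore to rule out the possibility that $\epsilon x_n \in \inId{drl}{\id{I}} \setminus \stair{\id{I}}$, which I would handle by a divisibility chase whose decisive ingredient is Proposition~\ref{height1}. Assume for contradiction that $\epsilon \in B$ and that $\epsilon x_n$ lies in $\inId{drl}{\id{I}}$ but is not a minimal generator; pick $t \in \stair{\id{I}}$ with $t \mid \epsilon x_n$ and $t \neq \epsilon x_n$. If $x_n \nmid t$ then $t \mid \epsilon$, contradicting $\epsilon \in B$. Otherwise write $t = x_n t'$, so $t' \mid \epsilon$; the case $t' = \epsilon$ is excluded because it gives $t = \epsilon x_n$. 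Hence $\epsilon = t' s$ with $s$ a monomial of positive degree. If $x_n$ divided $s$ then $t = x_n t'$ would divide $\epsilon$, again contradicting $\epsilon \in B$. So $s$ involves some variable $x_k$ with $k < n$, and Proposition~\ref{height1} applied to $t$ yields $x_k t' = \frac{x_k t}{x_n} \in \inId{drl}{\id{I}}$. Since $x_k t' \mid \epsilon$, we obtain $\epsilon \in \inId{drl}{\id{I}}$, the desired contradiction.

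The main obstacle is precisely this exclusion of the ``bad'' case; the rest is bookkeeping. Once it is established that $\epsilon x_n \in B \cup \stair{\id{I}}$ for every $\epsilon \in B$, the theorem follows by iterating over the $D$ basis monomials and populating each column of $T_n$ as described above, reading entries directly from the stored polynomials of $\gdrl$ or from the indexing of $B$, so that no genuine arithmetic is performed.
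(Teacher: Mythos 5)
Your proposal is correct and takes essentially the same approach as the paper's proof: both rule out the possibility that $\epsilon x_n$ lies in $\inId{drl}{\id{I}} \setminus \stair{\id{I}}$ by factoring it as a monomial times a minimal generator of the initial ideal divisible by $x_n$, then invoking Proposition~\ref{height1} to push a factor $x_k$ ($k<n$) past $x_n$ and conclude $\epsilon \in \inId{drl}{\id{I}}$, a contradiction. Your divisibility chase is a slightly more careful unpacking of the paper's step ``$t = m\,\LT{drl}{g}$ with $x_n \nmid m$'', and your handling of the column-filling (a single $1$, or sign-flipped trailing coefficients of the unique $g$ with matching leading term) matches the paper's remark about changing signs and removing the leading term.
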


\begin{proof}
Suppose that there exists $i \in \{1,\ldots,D\}$ such that $t = x_n
\epsilon_i$ is of type~\eqref{type2}. Hence, $t = m\, \LT{drl}{g}$ for
some $g \in \gdrl$ and $\deg(m) > 1$ with $x_n \nmid m$ (otherwise
$\epsilon_i \notin B$). Then, there exists $k \in \{1,\ldots,n-1\}$
such that $x_k \mid m$.  By consequence,  from
Proposition~\ref{height1}, we have $\epsilon_i
= \frac{m}{x_k} \cdot \frac{x_k \LT{drl}{g}}{x_n} \in \inId{drl}{\id{I}}$
which yields a contradiction. Thus, all monomials $t = x_n \epsilon_i$
are either in $B$ or in $E(\id{I})$ and their normal forms are known
and given either by $t$ (if $t \in B$) or by changing the sign of some
polynomial $g \in \gdrl$ and removing its leading term. Note that by
using a linked list representation (for instance), removing the
leading term of a polynomial does not require arithmetic operation.
\end{proof}

Thanks to the previous theorem, Algorithm~\ref{algchgord} can be used
to compute the LEX Gröbner basis of a generic ideal:

\begin{corollary}
Let \(\id{I}\) be a generic ideal in \textit{Shape Position}. From the
DRL Gröbner basis\(\gdrl\) of $\id{I}$, its LEX Gröbner basis $\glex$
can be computed in $O(\log_2(D)(D^\omega + n\log_2(D)D))$ arithmetic
operations with a probabilistic algorithm or $O(\log_2(D)D^\omega +
D^2(n+\log_2(D)\log_2(\log_2(D))))$ arithmetic operations with a
deterministic algorithm.
\end{corollary}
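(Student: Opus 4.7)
The plan is to chain together the results already established in the paper: first extract $T_n$ for free, then feed $T_n$ together with $\gdrl$ into the change of ordering algorithm, and finally read off the complexity from the existing analyses.

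\textbf{Step 1: Extract $T_n$ at no arithmetic cost.} Since $\id{I}$ is a generic ideal, Theorem~\ref{thm:Tnopt1} applies: every monomial of the form $x_n\epsilon_i$ (with $\epsilon_i\in B$) is either in $B$ (in which case its normal form is itself) or it is a leading term of some $g\in\gdrl$ (in which case its normal form is obtained by discarding the leading term of $g$ and flipping the sign of the remaining coefficients). In both cases the column $\psi(\NF{drl}{x_n\epsilon_i})$ can be assembled by inspection of $\gdrl$ with no arithmetic operation in $\K$. Hence we obtain $T_n$ for free.

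\textbf{Step 2: Run the univariate polynomial representation algorithm.} With $T_n$ and $\gdrl$ in hand, and because $\id{I}$ is by assumption in \textit{Shape Position}, we invoke Algorithm~\ref{algchgord}. In the probabilistic setting Proposition~\ref{prop:probaalgo} gives the cost $O(\log_2(D)(D^\omega+n\log_2(D)D))$ arithmetic operations, which dominates the cost of Step~1. For the deterministic variant we invoke the algorithm of Remark~\ref{rem:deterministic}, whose complexity is bounded in Proposition~\ref{cor:det} by $O(\log_2(D)D^\omega+D^2(n+\log_2(D)\log_2(\log_2(D))))$ arithmetic operations; here the radicality hypothesis needed by Proposition~\ref{cor:det} is precisely the assumption that all solutions are simple, built into Problem~\ref{pb:posso}.

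\textbf{Step 3: Combine.} Adding the (zero-cost) reading of $T_n$ to the two bounds above yields the two announced complexities. There is no genuine obstacle here: the real content of the corollary sits in Theorem~\ref{thm:Tnopt1}, which removes the expensive call to Algorithm~\ref{algmultmat}; Propositions~\ref{prop:probaalgo} and~\ref{cor:det} then do the rest of the bookkeeping.
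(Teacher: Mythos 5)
Your proof is correct and takes essentially the same route as the paper: the corollary is an immediate consequence of Theorem~\ref{thm:Tnopt1} (which removes the cost of building $T_n$) combined with the complexity bounds of Proposition~\ref{prop:probaalgo} and Proposition~\ref{cor:det}, and the paper indeed presents it as such without an explicit proof. Your remark about radicality being needed for the deterministic variant and being supplied by the standing ``simple solutions'' hypothesis is a correct and useful clarification of a point the paper leaves implicit.
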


However, polynomial systems coming from applications are usually not
generic. Nevertheless, this difficulty can be bypassed by applying a
linear change of variables. Let $g \in \text{GL}(\K,n)$ the ideal
$g \cdot \id{I}$ is defined as follows $g \cdot \id{I} = \{ f(g \cdot
X)\ |\ f \in \id{I} \}$ where $X$ is the vector $[x_1,\ldots,x_n]$.
By studying the structure of the \emph{generic initial ideal} of
$\id{I}$ -- that is to say, the initial ideal of $g \cdot \id{I}$ for
a generic choice of $g$ -- we will show that results of
Proposition~\ref{height1} and Theorem~\ref{thm:Tnopt1} can be
generalized to non generic ideals, up to a random linear change of
variables. Indeed, in \cite{Gal73} Galligo shows that for the
characteristic zero fields, the generic initial ideal of any ideal
satisfies a more general property than
Proposition~\ref{height1}. Later, Pardue~\cite{Par94} extends this
result to the fields of positive characteristic.

\begin{definition}
Let $\K$ be an infinite field and $\id{I}$ be an homogeneous ideal of
 $\K[x_1,\ldots,x_n]$. There exists a Zariski open set
 $U \subset \text{GL}(\K,n)$ and a monomial ideal $\id{J}$ such that
 $\inId{drl}{g \cdot \id{I}} = \id{J}$ for all $g \in U$.  The generic
 initial ideal of $\id{I}$ is denoted $\text{Gin}(\id{I})$ and is
 defined by $\id{J}$.
\end{definition}

The next result, is a direct consequence of \cite{Gal73,BS87,Par94}
and summarized in \cite[p.351--358]{Eis95}. This result allows to
extend, up to a linear change of variables, Proposition~\ref{height1}
to non generic ideals.

\begin{theorem}\label{borel}
Let $\K$ be an infinite field of characteristic $p \geq 0$. Let
$\id{I}$ be an homogeneous ideal of $\K[x_1,\ldots,x_n]$ and $\id{J}
= \text{Gin}(\id{I})$. For the DRL ordering, for all generators $m$ of
$\id{J}$, if $x_i^t$ divides $m$ and $x_i^{t+1}$ does not divide $m$
then for all $j < i$, the monomial $\frac{x_j}{x_i}m$ is in $\id{J}$
if $t \not \equiv 0 \mod p$.
\end{theorem}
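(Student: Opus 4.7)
The plan is to appeal to the classical theorem of Galligo--Bayer--Stillman--Pardue that $\id{J} := \text{Gin}(\id{I})$ is fixed by the Borel subgroup $B \subset \text{GL}(\K, n)$ of upper-triangular matrices, and then extract the stated combinatorial condition via a one-parameter-subgroup argument on a minimal generator. The heart of the Borel-fixedness statement is that an upper-triangular change of variables fixes the DRL leading monomial of any homogeneous polynomial: with $x_1 > \cdots > x_n$, the substitution $x_i \mapsto b_{ii} x_i + \sum_{j > i} b_{ij} x_j$ only adds cross-terms involving more copies of DRL-smaller variables, hence does not affect leading monomials. Combined with the observation that if $g$ lies in the Zariski-open Galligo set $U$ and $b \in B$, then the composition $bg$ still lies generically in $U$, this forces $b \cdot \id{J} = \id{J}$; the detailed argument, delicate in positive characteristic, is carried out in \cite{Gal73, BS87, Par94} and summarized in \cite{Eis95}.

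Granting Borel-fixedness, I would fix a minimal generator $m$ of $\id{J}$ with $x_i^t$ exactly dividing $m$, and write $m = x_i^t m'$ where $x_i \nmid m'$. For $j < i$ and a parameter $c \in \K$, let $b_c \in B$ be the elementary upper-triangular element sending $x_i \mapsto x_i + c x_j$ and fixing every other variable. Since $b_c$ does not act on the variables appearing in $m'$, we obtain
\[
b_c \cdot m \;=\; (x_i + c x_j)^{t} \, m' \;=\; \sum_{k=0}^{t} \binom{t}{k}\, c^{k}\, x_j^{k}\, x_i^{t-k}\, m' \;\in\; \id{J} \quad\text{for every } c \in \K.
\]
Since $\K$ is infinite, viewing the right-hand side as a polynomial in $c$ with coefficients in the $\K$-subspace $\id{J}$, each coefficient must separately lie in $\id{J}$ (a standard Vandermonde argument on distinct values of $c$).

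Extracting the coefficient of $c^{1}$ gives $\binom{t}{1}\, x_j x_i^{t-1}\, m' \;=\; t \cdot \frac{x_j}{x_i}\, m \in \id{J}$, and under the hypothesis $t \not\equiv 0 \pmod{p}$ the scalar $t$ is a unit in $\K$, so $\frac{x_j}{x_i}\, m \in \id{J}$, which is the claim. The main obstacle is not this coefficient extraction but the Borel-fixedness of $\id{J}$ itself: in characteristic zero it follows fairly directly from the connectedness of $B$ and a genericity argument, whereas in positive characteristic the classical one-parameter-subgroup technique has to be replaced by Pardue's more careful descending induction on degree, which is exactly the source of the $t \not\equiv 0 \pmod{p}$ hypothesis.
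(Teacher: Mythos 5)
Your proposal takes the same route as the paper itself, which states this theorem without proof as a direct consequence of \cite{Gal73,BS87,Par94} (as summarized in \cite{Eis95}): you delegate the Borel-fixedness of $\text{Gin}(\id{I})$ to those references and then perform the standard extraction of the combinatorial condition. That extraction (expand $(x_i+cx_j)^t m'$, use that $\K$ is infinite and $\id{J}$ is a monomial ideal to isolate the coefficient of $c$, and observe that the hypothesis $t\not\equiv 0 \bmod p$ is exactly the invertibility of $\binom{t}{1}$) is correct, and indeed since $\id{J}$ is a monomial ideal one nonzero value of $c$ already suffices.

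One caveat worth fixing: the heuristic you give in the first paragraph for \emph{why} $\text{Gin}$ is Borel-fixed points at the wrong subgroup. The substitutions $x_i \mapsto b_{ii}x_i + \sum_{j>i} b_{ij}x_j$ (adding DRL-\emph{smaller} variables) do preserve leading monomials of all polynomials, hence satisfy $\inId{drl}{b\cdot \id{K}} = \inId{drl}{\id{K}}$ for \emph{every} ideal $\id{K}$; but $\text{Gin}(\id{I})$ is in general \emph{not} fixed as an ideal by these elements (already $(x_1)=\text{Gin}((x_1))$ is not fixed by $x_1\mapsto x_1+cx_2$), and if it were, your coefficient extraction would yield $\frac{x_j}{x_i}m\in\id{J}$ for $j>i$ — the reverse, and false, inclusion. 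The subgroup under which $\text{Gin}$ is fixed is the opposite one, generated by $x_i\mapsto x_i+cx_j$ with $j<i$ (adding \emph{larger} variables), which is exactly what you use in your second paragraph; proving that fixedness is the genuinely hard content (flat limits along one-parameter subgroups of the torus, and Pardue's refinement in characteristic $p$), and it is right to leave it to the cited literature rather than to the leading-monomial argument sketched in your first paragraph.
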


Let $f = \sum_{i=0}^d f_i$ be an affine polynomial of degree \(d\) of
$\A$ where $f_i$ is an homogeneous polynomial of degree $i$. The
homogeneous component of highest degree of $f$, denoted $f^{h}$, is
the homogeneous polynomial $f_d$.  Let $\id{I}$ be an affine
ideal \textit{i.e.} generated by a sequence of affine polynomials.  In
the next proposition we highlight an homogeneous ideal having the same
initial ideal than $\id{I}$. This allows to extend the result of
Theorem~\ref{borel} to affine ideals. 

\begin{proposition}\label{affine}
Let $\id{I} = \Id{f_1,\ldots,f_s}$ be an affine ideal. If
$(f_1,\ldots,f_s)$ is a regular sequence, then there exists a Zariski
open set $U_a \subset \text{GL}(\K,n)$ such that for all $g \in U_a$,
$\stair{g \cdot \id{I}} = \stair{\text{Gin} \left (\id{I}^h \right
)}$.
\end{proposition}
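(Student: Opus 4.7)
The plan is to reduce the affine case to the homogeneous case by exploiting a classical compatibility between the DRL initial ideal of an affine ideal and that of the ideal generated by its top-degree forms when the generating sequence is regular.

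First I would establish the key reduction lemma: if $(f_1,\ldots,f_s)$ is a regular affine sequence and $\id{I}^h := \Id{f_1^h,\ldots,f_s^h}$, then for the DRL ordering
\[
\inId{drl}{\id{I}} \;=\; \inId{drl}{\id{I}^h}.
\]
This is a classical Bayer--Stillman / Lazard style statement: under DRL the total degree is the primary criterion, so the leading term of any element of $\id{I}$ is already captured by the leading term of its top-degree homogeneous component; regularity of the sequence $(f_1^h,\ldots,f_s^h)$ guarantees that every homogeneous syzygy at the top-degree level lifts to a syzygy at the affine level, so no spurious leading monomials appear from cross-cancellations of lower-degree parts. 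Equivalently, the two ideals share the same Hilbert function in every degree, which forces the initial ideals to coincide.

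Next I would apply a linear change of variables $g \in \text{GL}(\K,n)$. Since $g$ only substitutes homogeneous linear forms for the variables, it preserves the grading by total degree; in particular $(g \cdot f_i)^h = g \cdot f_i^h$ for each $i$. Because $(g \cdot f_1^h,\ldots,g \cdot f_s^h)$ is obtained from a regular homogeneous sequence by an invertible linear change of variables, it is again regular. Applying the lemma of the previous paragraph to the twisted sequence yields
\[
\inId{drl}{g \cdot \id{I}} \;=\; \inId{drl}{\Id{g \cdot f_1^h,\ldots,g \cdot f_s^h}} \;=\; \inId{drl}{g \cdot \id{I}^h}.
\]

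Finally, by the very definition of the generic initial ideal, there exists a Zariski open set $U_a \subset \text{GL}(\K,n)$ such that $\inId{drl}{g \cdot \id{I}^h} = \text{Gin}(\id{I}^h)$ for every $g \in U_a$ (one can take $U_a$ to be the intersection of the Galligo open set for $\id{I}^h$ with the open set of invertible $g$ for which regularity of the twisted sequence is preserved, which is open by the cited results of Galligo and Pardue). Combining the two equalities gives $\inId{drl}{g \cdot \id{I}} = \text{Gin}(\id{I}^h)$ for all $g \in U_a$, and since $E(\cdot)$ is a function of the initial ideal alone we conclude $\stair{g \cdot \id{I}} = \stair{\text{Gin}(\id{I}^h)}$.

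The main obstacle is the first step: proving rigorously that $\inId{drl}{\id{I}} = \inId{drl}{\id{I}^h}$ for a regular affine sequence. The delicate point is that affine reductions can involve terms of strictly smaller total degree than the top forms, and one has to argue that regularity prevents any such reduction from producing a leading monomial not already lying in $\inId{drl}{\id{I}^h}$. The cleanest way to do this is via Hilbert-function comparisons: regularity pins down the Hilbert function of $\K[x_1,\ldots,x_n]/\id{I}^h$ in every degree, the DRL filtration yields the same Hilbert function for $\K[x_1,\ldots,x_n]/\id{I}$, and the monomial basis of the quotient (i.e.\ the complement of the stairs) is determined by this numerical data together with the inclusion $\inId{drl}{\id{I}^h} \subseteq \inId{drl}{\id{I}}$ coming from leading-term-of-top-form arguments.
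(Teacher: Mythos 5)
Your proposal follows the same three-step reduction as the paper: first establish $\inId{drl}{\id{I}} = \inId{drl}{\id{I}^h}$ for the regular affine sequence, then transfer this through an arbitrary $g \in \text{GL}(\K,n)$ using $(g\cdot f_i)^h = g\cdot f_i^h$ and the fact that regularity is preserved under invertible linear substitution, and finally invoke the Galligo/Pardue open set defining $\text{Gin}(\id{I}^h)$. The only genuine variation is in how you would prove the key lemma $\inId{drl}{\id{I}} = \inId{drl}{\id{I}^h}$: you propose a Hilbert-function comparison (show $\inId{drl}{\id{I}^h} \subseteq \inId{drl}{\id{I}}$ by the easy top-form argument, then force equality because both sides have the same Hilbert function, which regularity pins down via the Koszul complex), whereas the paper argues more directly that any $f \in \id{I}$ admits a representation $f = \sum h_i f_i$ with $\deg(h_i f_i) \le \deg(f)$ (a consequence of regularity), so that the DRL leading term of $f$ is already the leading term of $\sum h_i f_i^h \in \id{I}^h$. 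Both routes are standard and correct; the paper's is more elementary in that it avoids Hilbert-series bookkeeping, while yours makes the mechanism by which regularity enters somewhat more conceptual. Either way the structure of the proof, and the open set $U_a$, are identical to the paper's.
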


\begin{proof}
Let $f$ be a polynomial. We denote by $f^h$ the homogeneous component
of highest degree of $f$ and $f^a = f - f^h$. Let
$t \in \inId{drl}{\id{I}}$, there exists $f \in \id{I}$ such that
$\LT{drl}{f} = t$. Since, $f \in \id{I}$ and $(f_1^h,\ldots,f_s^h)$ is
assumed to be a regular sequence then there exist
$h_1,\ldots,h_s \in \K[x_1,\ldots,x_n]$ such that $f = \sum_{i=1}^s
h_if_i = \sum_{i=1}^s h_if_i^h + \sum_{i=1}^s h_if_i^a$ with
$\deg(h_if_i) \leq \deg(f)$ for all $i \in \{1,\ldots,s\}$ and there exists
$j \in \{1,\ldots,s\}$ such that $\deg(h_jf_j) = \deg(f)$.  By
consequence, $0 \ne \sum_{i=1}^s h_if_i^h \in \id{I}^h$
where \(\id{I}^h\) is the ideal generated by $\{f_1^h,\ldots,f_s^h\}$
and $\LT{drl}{f} = \LT{drl}{\sum_{i=1}^s h_if_i^h}$. Thus,
$\inId{drl}{\id{I}} \subset \inId{drl}{\id{I}^h}$. It is
straightforward that $\inId{drl}{\id{I}^h} \subset \inId{drl}{\id{I}}$
hence $\inId{drl}{\id{I}^h} = \inId{drl}{\id{I}}$.

For all $g \in \text{GL}(\K,n)$, since $g$ is invertible the sequence
$(g \cdot f_1,\ldots,g\cdot f_s)$ is also regular. Indeed, if there
exists $i \in \{1,\ldots,s\}$ such that $g \cdot f_i$ is a divisor of 
zero in $\K[x_1,\ldots,x_n]/\Id{g \cdot f_1,\ldots, g \cdot f_i}$ then
$f_i$ is a divisor of zero in
$\K[x_1,\ldots,x_n]/\Id{f_1,\ldots,f_i}$.
Hence,
$$
\inId{drl}{g \cdot \id{I}} = \inId{drl}{(g \cdot \id{I})^h}\,.
$$
Moreover, $g$ is a linear change of variables thus it preserves the
degree. Hence, for all $f \in \id{I}$, we have $(g \cdot f)^h
= g \cdot f^h$.  Finally, let $U_a$ be a Zariski open subset of
$\text{GL}(\K,n)$ such that for all $g \in U_a$, we have the equality
$\inId{drl}{g \cdot \id{I}^h} = \text{Gin}(\id{I}^h)$. Thus, for all
$g \in U_a$, we then have $\inId{drl}{g\cdot \id{I}}
=\inId{drl}{(g \cdot \id{I})^h}=\inId{drl}{g \cdot \id{I}^h}
= \text{Gin}(\id{I}^h)$.
\end{proof}

Hence, from the previous proposition, for a random linear change of
variables $g \in \text{GL}(\K,n)$ we have $\inId{drl}{g \cdot \id{I}}
= \text{Gin} \left (\id{I}^h\right )$. Thus from Theorem~\ref{borel},
for all generators $m$ of $\inId{drl}{g \cdot \id{I}}$ (\textit{i.e.}
$m$ is a leading term of a polynomial in the DRL Gröbner basis of
$g \cdot \id{I}$) if $x_n^t$ divides $m$ and $x_n^{t+1}$ does not
divide $m$ then for all $j < n$ we have
$\frac{x_j}{x_n}m \in \inId{drl}{g \cdot \id{I}}$ if $t \not \equiv
0 \mod p$. Therefore, in the same way as for generic ideals, the
multiplication matrix $T_n$ of $g \cdot \id{I}$ can be read from its
DRL Gröbner basis. This is summarized in the following corollary.

\begin{corollary}\label{cor:Tnopt2}
Let $\K$ be an infinite field of characteristic $p \geq 0$. Let
$\id{I}$ be a radical ideal of $\K[x_1,\ldots,x_n]$. There exists a
Zariski open subset \(U\) of $\text{GL}(\K,n)$ such that for all
$g \in U$, the arithmetic complexity of computing the multiplication
matrix by $x_n$ of $g \cdot \id{I}$ given its DRL Gröbner basis can be
done without arithmetic operation.  If $p > 0$ this is true only if
$\deg_{x_n}(m) \not \equiv 0 \mod p$ for all
$m \in \stair{g \cdot \id{I}}$. Consequently, under the same
hypotheses, computing the LEX Gröbner basis of $g \cdot \id{I}$ given
its DRL Gröbner basis can be bounded by $O(\log_2(D)(D^\omega +
n \log_2(D)D))$ arithmetic operations.
\end{corollary}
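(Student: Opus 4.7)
The plan is to combine Proposition~\ref{affine} (relating the initial ideal of a generic translate $g\cdot\id{I}$ to the generic initial ideal of the homogenization $\id{I}^h$) with Theorem~\ref{borel} (the Galligo--Pardue Borel-fixed property of $\text{Gin}(\id{I}^h)$), and then replay the argument of Theorem~\ref{thm:Tnopt1} in this setting. Since $\id{I}$ is radical, up to a further generic linear change of variables its LEX Gröbner basis is in \textit{Shape Position} by the Shape Lemma cited in Section~\ref{sec:notations}, so Algorithm~\ref{algchgord} is applicable and produces $\glex$ from $\gdrl$ together with $T_n$.

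First I would define $U$ as the intersection of the Zariski open set $U_a$ from Proposition~\ref{affine}, the open set granting a LEX Gröbner basis in Shape Position, and (in positive characteristic) the open set on which $\inId{drl}{g\cdot\id{I}}=\text{Gin}(\id{I}^h)$ has no generator $m$ with $\deg_{x_n}(m)\equiv 0\bmod p$. For $g\in U$, Proposition~\ref{affine} gives $\stair{g\cdot\id{I}}=\stair{\text{Gin}(\id{I}^h)}$, and Theorem~\ref{borel} applied with $i=n$ tells us that for every generator $m$ of this monomial ideal, if $x_n^t\| m$ (with $t\not\equiv 0\bmod p$ in characteristic $p>0$), then $\tfrac{x_j}{x_n}m\in\inId{drl}{g\cdot\id{I}}$ for every $j<n$. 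This is exactly the analogue of Proposition~\ref{height1} for $g\cdot\id{I}$ without the genericity assumption on $\id{I}$ itself.

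Next I would copy the proof of Theorem~\ref{thm:Tnopt1} verbatim: for any $\epsilon_i\in B$, the monomial $x_n\epsilon_i$ is either in $B$, or in $\stair{g\cdot\id{I}}$, or of type~\eqref{type2}, i.e.\ of the form $m\cdot\LT{drl}{g_0}$ with $\deg(m)>1$ and $x_n\nmid m$; in the last case, picking any $x_k\mid m$ and using the Borel-type conclusion from the previous paragraph gives $\epsilon_i=\tfrac{m}{x_k}\cdot\tfrac{x_k\LT{drl}{g_0}}{x_n}\in\inId{drl}{g\cdot\id{I}}$, contradicting $\epsilon_i\in B$. Hence every column of $T_n$ is either a single $1$-entry (when $x_n\epsilon_i\in B$) or is read off, up to a sign, from the tail of the unique Gröbner basis element with leading term $x_n\epsilon_i$; neither operation counts as an arithmetic operation in $\K$. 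The final complexity statement then follows immediately by feeding $\gdrl$ and this free-of-cost $T_n$ into Algorithm~\ref{algchgord} and invoking the bound $O(\log_2(D)(D^\omega+n\log_2(D)D))$ of Proposition~\ref{prop:probaalgo}.

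The main obstacle I anticipate is the positive characteristic case: Theorem~\ref{borel} only yields the divisibility conclusion when $t=\deg_{x_n}(m)\not\equiv 0\bmod p$, so the Borel-fixed argument can fail on generators whose $x_n$-exponent is a multiple of $p$. I would handle this by explicitly excluding this phenomenon in the hypothesis (as the statement already does) and by observing that the Zariski open set $U$ can still be chosen so that the equality $\inId{drl}{g\cdot\id{I}}=\text{Gin}(\id{I}^h)$ holds; whether the exponent condition is then satisfied depends on the particular $\text{Gin}(\id{I}^h)$ and not on $g$, which is why the corollary phrases this as a side condition rather than as a generic event. Apart from this caveat, the remainder of the argument is a direct transcription of the generic case treated in Theorem~\ref{thm:Tnopt1} and its corollary.
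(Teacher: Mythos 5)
Your argument is correct and is essentially the paper's own: combine Proposition~\ref{affine} with Theorem~\ref{borel} to obtain the analogue of Proposition~\ref{height1} for $g\cdot\id{I}$, replay the proof of Theorem~\ref{thm:Tnopt1} to read off $T_n$, and invoke the complexity bound of Proposition~\ref{prop:probaalgo}. Your explicit remarks that the Shape Position open set must be intersected into $U$ for the final complexity claim, and that in characteristic $p>0$ the exponent condition is a property of $\text{Gin}(\id{I}^h)$ rather than a generic event in $g$, are accurate and match how the paper treats these points.
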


Following this result, we propose another algorithm for polynomial
systems solving. 

\subsection{Another algorithm for polynomial systems solving}\label{newposso}

Let $\mathcal{S} \subset \K[x_1,\ldots,x_n]$ be a polynomial system
 generating a radical ideal denoted $\id{I}$.  For any
 $g \in \text{GL}(\K,n)$, from the solutions of $g \cdot \id{I}$ one
 can easily recover the solutions of $\id{I}$. Let $U$ be the Zariski
 open subset of $\text{GL}(\K,n)$ such that for all $g \in U$,
 $\inId{drl}{g \cdot \id{I}} = \text{Gin}(\id{I}^h)$. If $g$ is chosen
 in $U$ then the multiplication matrix $T_n$ can be computed very
 efficiently. Indeed, from Section~\ref{Tnfree} all monomials of the
 form $\epsilon_i x_n$ for $i = 1,\ldots,D$ are in $B$ or in
 $\stair{g \cdot \id{I}}$ and their normal are easily known.
 Moreover, as mentioned in Section~\ref{sec:notations}, there exists
 $U'$ a the Zariski open subset of $\text{GL}(\K,n)$ such that for all
 $g \in U'$ the ideal $g \cdot \id{I}$ admits a LEX Gröbner basis
 in \textit{Shape Position}.  If $g$ is also chosen in $U'$ then we
 can use Algorithm~\ref{algchgord} to compute the LEX Gröbner basis of
 $g \cdot \id{I}$.  Hence, we propose in Algorithm~\ref{new_posso} a
 Las Vegas algorithm to solve the PoSSo problem. A Las Vegas algorithm
 is a randomized algorithm whose output (which can be \emph{fail}) is
 always correct.  The end of this section is devoted to evaluate its
 complexity and its probability of failure \textit{i.e.} when the
 algorithm returns \emph{fail}.

\begin{algorithm}[!ht]
\SetKwData{Left}{left}\SetKwData{This}{this}\SetKwData{Up}{up}
\SetKwFunction{Union}{Union}\SetKwFunction{FindCompress}{FindCompress}
\SetKwInOut{Input}{Input}\SetKwInOut{Output}{Output} \Input{A
  polynomial system $\mathcal{S} \subset \K[x_1,\ldots,x_n]$
  generating a radical ideal.}  \Output{$g$  in
  $\text{GL}(\K,n)$ and the LEX Gröbner basis of $\langle
  g \cdot \mathcal{S}\rangle$ \textit{i.e.} a univariate
  parametrization of the solutions of
  $\mathcal{S}$ or \emph{fail}.}  \caption{Another algorithm for
  PoSSo.\label{new_posso}} 
  Choose randomly $g$ in $\text{GL}(\K,n)$\;
  Compute $\gdrl$ the DRL Gröbner basis of $g \cdot \mathcal{S}$\;
  \If{$T_n$ can be read from $\gdrl$}{  
  Extract $T_n$ from $\gdrl$\; 
  From $T_n$ and $\gdrl$ compute $\glex$
  using Algorithm~\ref{algchgord}\; 
  \textbf{if} Algorithm~\ref{algchgord} succeeds \textbf{then} \Return $g$ and $\glex$\;
  \textbf{else} \Return \emph{fail}\;
}
  \textbf{else return} \emph{fail}\;
\end{algorithm}

Algorithm~\ref{new_posso} successes if the three following conditions
 are satisfied
\begin{enumerate}
\item\label{radical} $g \in \text{GL}(\K,n)$ is chosen in a non empty Zariski open set $U'$ such that for all $g \in U'$, $g \cdot \id{I}$ has a LEX Gröbner basis in \textit{Shape Position};  
\item\label{zariski} $g \in \text{GL}(\K,n)$ is chosen in a non empty Zariski open set $U$ such that for all $g \in U$, $\inId{drl}{g \cdot \id{I}} = \text{Gin}(\id{I}^h)$;
\item\label{degree} $p = 0$ or $p > 0$ and for all $m \in E(g \cdot \id{I})$, $\deg_{x_n}(m) \not \equiv 0 \mod p$.
\end{enumerate}

The existence of the non empty Zariski open subset $U'$ is proven
in \cite{GiMo89}. Conditions~\eqref{radical} and \eqref{zariski} are
satisfied if $g \in U \cap U'$. Since, $U$ and $U'$ are open and
dense, $U \cap U'$ is also a non empty Zarisky open set.

\subsubsection{Probability of failure of Algorithm~\ref{new_posso}}\label{sec:prob}

Usually, the coefficient field of the polynomials is the field of
rational numbers or a finite field.  For fields of characteristic
zero, if $g$ is chosen randomly then the probability that the
condition~\eqref{radical} and \eqref{zariski} be satisfied is 1. By
consequence, the probability of failure of Algorithm~\ref{new_posso},
in case of field of characteristic zero, is 0.

For finite fields $\F_q$, the Schwartz-Zippel lemma \cite{Sch80,Zip79}
allows to bound the probability that the conditions~\eqref{radical}
and \eqref{zariski} do not be satisfied by $\frac{d}{q}$ where $d$ is
the degree of the polynomial defining $U \cap U'$.  Thus, in order to
bound this failure probability we recall briefly how are constructed
$U$ and $U'$.

\paragraph{Construction of $U'$.}
Let $\id{I} = \Id{f_1,\ldots,f_n}$ be a radical ideal of
$\K[x_1,\ldots,x_n]$. Since $\id{I}$ is radical, all its solutions
are distinct. Therefore, let $\{ a_i =
(a_{i,1},\ldots,a_{i,n}) \in \overline{\K}^n\ |\ f_j(a_1,\ldots,a_n) =
0,\ j=1,\ldots,n\}$ be the set of solutions of $\id{I}$ (recall that
its cardinality is $D$).   Let $g$ be a given matrix in $\text{GL}(\K,n)$. We
denote by $v_i = (v_{i,1},\ldots,v_{i,n})$ the point obtained after
transformation of $a_i$ by $g$, \textit{i.e} $v_i = g \cdot a_i^t$.  To
ensure that $g \cdot \id{I}$ admits a LEX Gröbner basis
in \textit{Shape Position}, $g$ should be such that $v_{i,n} \ne
v_{j,n}$ for all couples of integers $(i,j)$ verifying $1 \leq j <
i \leq D$.  Hence, let $\mathbf{g} = (\mathbf{g}_{i,j})$ be a
$(n\times n)$ matrix of unknowns, the polynomial $P_{U'}$ defining the
Zariski open subset $U'$ is then given as the determinant of the
Vandermonde matrix associated to $\mathbf{v}_{i,n}$ for 
$i=1,\ldots,D$ where $\mathbf{v}_i =
(\mathbf{v}_{i,1},\ldots,\mathbf{v}_{i,n}) = \mathbf{g} \cdot
a_i^t$. Therefore, we know exactly the degree of $P_{U'}$ which is
$\frac{D(D-1)}{2}$.

\paragraph{Construction of $U$.}
The Zariski open subset $U$ is constructed as the intersection of
Zariski open subsets $U_1,\ldots,U_\delta$ of $\text{GL}(\K,n)$ where
$\delta$ is the maximum degree of the generators of
$\text{Gin}(\id{I}^h)$. Let $d$ be a fixed degree.  Let
$\K[x_1,\ldots,x_n]_d = R_d$ be the set of homogeneous polynomials of
degree $d$ of $\K[x_1,\ldots,x_n]$. Let $\{f_1,\ldots,f_{t_d}\}$ be
a vector basis of $\id{I}^h_d = \id{I}^h \cap R_d$.
Let $\mathbf{g} = (\mathbf{g}_{i,j})$ be a $(n\times n)$ matrix of
unknowns and let $M$ be a matrix representation of the map
$\id{I}^h_d \rightarrow\mathbf{g} \cdot \id{I}^h_d$
defined
as follow:
$$
\begin{array}{c|ccc|c}
 \multicolumn{1}{c}{} & \multicolumn{1}{c}{m_1} & \cdots & \multicolumn{1}{c}{m_N} \\  
\cline{2-4}
& \star & \cdots & \star & \mathbf{g} \cdot f_1\\
M = (M_{i,j}) = & \vdots & \ddots & \vdots & \vdots\\ 
& \star & \cdots & \star & \mathbf{g} \cdot f_{t_d}\\
\cline{2-4}
\end{array}
$$ where $M_{i,j}$ is the coefficient of $m_j$ in $\mathbf{g} \cdot
f_i$ and $\{m_1,\ldots,m_N\}$ is the set of monomials in $R_d$.
In \cite{BS87,Eis95}, the polynomial $P_{U_d}$ defining $U_d$ is
constructed as a particular minor of size $t_d$ of $M$. Since each
coefficient in $M$ is a polynomial in
$\K[\mathbf{g}_{1,1},\ldots,\mathbf{g}_{n,n}]$ of degree $d$, the
degree of $P_{U_d}$ is $d\cdot t_d$.  Finally, since $U_d$ is open and
dense for all $d=1,\ldots,\delta$ we deduce that $U
= \cap_{i=1}^\delta U_d$ is a non empty Zariski open set whose
defining polynomial, $P_U$, is of degree $\sum_{d=1}^\delta d \cdot
t_d \leq \delta \sum_{i=1}^\delta t_d$. Moreover, $D
= \operatorname{dim}_\K(\K[x_1,\ldots,x_n]/\id{I}^h)
= \sum_{d=0}^\delta \operatorname{dim}_\K(R_d/\id{I}^h_d)$. Thus,
$\sum_{d=0}^\delta \operatorname{dim}_\K(\id{I}^h_d)
= \sum_{d=0}^\delta \operatorname{dim}_\K(R_d) - D
= \binom{n+\delta}{n}-D$.  By consequence,
$\deg(P_U) \leq \delta \left ( \binom{n+\delta}{n} - D \right )$.

\paragraph{}
For ideals generated by a regular sequence $(f_1,\ldots,f_n)$, thanks
to the Macaulay's bound, $\delta$ can be bounded by
$\sum_{i=1}^n(\deg(f_i)-1)+1$. Note that the Macaulay's bound gives
also a bound on $\deg_{x_n}(m)$ for all $m \in E(g \cdot \id{I})$.  To
conclude, if $p > \sum_{i=1}^n(\deg(f_i)-1)+1$ then
condition~\eqref{degree} is satisfied and for any $p$ the probability
that conditions~\eqref{radical} and \eqref{zariski} be satisfied is
greater than $$1 - \frac{1}{q}\left ( \frac{D(D-1)}{2} + \left
(\sum_{i=1}^n(\deg(f_i)-1)+1\right )\left
( \binom{\sum_{i=1}^n\deg(f_i)+1}{n} - D \right )\right )\,.$$

\subsubsection{Complexity of Algorithm~\ref{new_posso}}

As previously mentioned, the matrix $T_n$ can be read from $\gdrl$
(test in Line~3 of Algorithm~\ref{new_posso}) if all the monomials of
the form $\epsilon_i x_n$ are either in $B$ or in $E(\Id{\gdrl})$. Let
$F_n = \{\epsilon_i x_n\ |\ i = 1,\ldots,D\}$, the test in Line~3 is
equivalent to test if $F_n \subset B \cup \stair{\Id{\gdrl}}$. Since
$F_n$ contains exactly $D$ monomials and $B \cup \stair{\Id{\gdrl}}$
contains at most $(n+1)D$ monomials; in a similar way as in
Lemma~\ref{lem:BF} testing if $F_n \subset B \cup \stair{\Id{\gdrl}}$
can be done in at most $O(nD^2)$ elementary operations which can be
decreased to $O(D)$ elementary operations if we use a hash table.
Hence, the cost of computing $B$, $F_n$ (see Lemma~\ref{lem:BF}) and
the test in Line~3 of Algorithm~\ref{new_posso} are negligible in
comparison to the complexity of Algorithm~\ref{algchgord}. Hence, the
complexity of Algorithm~\ref{new_posso} is given by the complexity of
$F_5$ algorithm to compute the DRL Gröbner basis of $g \cdot \id{I}$
and the complexity of Algorithm~\ref{algchgord} to compute the LEX
Gröbner basis of $g \cdot \id{I}$.  From \cite{Laz83}, the
complexities of computing the DRL Gröbner basis of $g \cdot \id{I}$ or
$\id{I}$ are the same.  Since it is straightforward to see that the
number of solutions of these two ideals are also the same we obtain
the second main result of the paper.

\begin{theorem}\label{thm:main2}
Let \(\K\) be the rational field $\Q$ or a finite field $\F_q$ of sufficiently large characteric $p$. Let \(\mathcal{S} = \{f_1,\ldots,f_n\}\subset
\K[x_1,\ldots,x_n]\) be a polynomial system generating a radical ideal $\id{I} = \langle\mathcal{S}\rangle$ of degree \(D\). If the sequence $(f_1,\ldots,f_n)$ is a
regular sequence such that the degree of each polynomial is uniformly
bounded by a fixed or non fixed parameter $d$ then there exists a Las
Vegas algorithm which computes the univariate polynomial
representation of the roots of \(\mathcal{S}\) in $O(nd^{\omega n} +
\log_2(D)(D^\omega+n\log_2(D)D))$ arithmetic operations.
\end{theorem}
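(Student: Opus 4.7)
The plan is to certify Algorithm~\ref{new_posso} as a Las Vegas algorithm realizing the claimed arithmetic complexity. Since the algorithm either returns $\mathit{fail}$ or returns a pair $(g, \glex)$ from which the solutions of $\mathcal{S}$ are immediately recovered via $g^{-1}$, correctness of a successful run is inherited from the correctness of Algorithm~\ref{algchgord} together with that of the Gröbner basis engine. The work therefore splits into (i) bounding the cost of each substantive step of a successful run, and (ii) showing that the three success conditions listed in Section~\ref{newposso} hold simultaneously with positive probability.

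I would analyze the three cost contributions in order. The random draw of $g \in \mathrm{GL}(\K, n)$ involves no arithmetic operations on the ideal. For the computation of $\gdrl$ of $g \cdot \mathcal{S}$, I would argue, exactly as in the proof of Theorem~\ref{thm:main1}, that invertibility of $g$ preserves both the regularity of the sequence (an observation already used in the proof of Proposition~\ref{affine}) and the total degree of each polynomial, so the Macaulay bound gives $d_{\max} \leq n(d-1)+1$ for the DRL Gröbner basis. The $F_5$ complexity estimate of \cite{BFSY05,Laz83} then yields $O\!\left(n \binom{n + d_{\max}}{n}^\omega\right) = O(n d^{\omega n})$ arithmetic operations for this step. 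Corollary~\ref{cor:Tnopt2} is then applied to read $T_n$ directly from $\gdrl$ at no arithmetic cost, and the feasibility test of Line~3 of Algorithm~\ref{new_posso}, together with the construction of $B$ and $F_n$, contributes only the $O(nD^2)$ elementary (non-arithmetic) operations of Lemma~\ref{lem:BF}. Finally, feeding $T_n$ and $\gdrl$ into Algorithm~\ref{algchgord} and invoking Proposition~\ref{prop:probaalgo} provides $\glex$ in $O(\log_2(D)(D^\omega + n \log_2(D) D))$ arithmetic operations. Summing these contributions gives the stated bound.

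The main obstacle is the probability analysis that legitimises the Las Vegas terminology. Over $\Q$ I would note that the shape-position open $U'$ and the open $U$ on which $\inId{drl}{g \cdot \id{I}} = \mathrm{Gin}(\id{I}^h)$ are both nonempty Zariski-dense in $\mathrm{GL}(\Q, n)$, and the characteristic condition (3) of Section~\ref{newposso} is vacuous in characteristic $0$, so a random $g$ succeeds with probability $1$. Over $\F_q$ of characteristic $p$ I would combine the Schwartz--Zippel lemma with the explicit degree bounds for the defining polynomials $P_{U'}$ and $P_U$ that were derived in Section~\ref{sec:prob}, together with the Macaulay bound $\delta \leq n(d-1)+1$ on the degrees of generators of $\mathrm{Gin}(\id{I}^h)$, to bound the probability that $g \notin U \cap U'$ by a quantity of order $\frac{1}{q}\big(D^2 + \delta(\binom{n+\delta}{n} - D)\big)$; condition (3) is in turn ensured as soon as $p > n(d-1) + 1$, again by the Macaulay bound on $\deg_{x_n}(m)$ for $m \in \stair{g \cdot \id{I}}$. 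Under the hypothesis that $p$ is sufficiently large, success probability is therefore bounded away from $0$ and tends to $1$ as $q$ grows, completing the proof.
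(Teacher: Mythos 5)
Your proposal follows the paper's own argument essentially step for step: you decompose a successful run of Algorithm~\ref{new_posso} into the $F_5$ computation of $\gdrl$ for $g\cdot\id{I}$ (bounded via the Macaulay bound and \cite{BFSY05,Laz83}, exactly as the paper does, noting the invariance under the linear change of variables), the free extraction of $T_n$ via Corollary~\ref{cor:Tnopt2}, the negligible non-arithmetic cost of building $B$, $F_n$ and testing Line~3, and a call to Algorithm~\ref{algchgord} costed by Proposition~\ref{prop:probaalgo}; and your probability discussion reproduces the Schwartz--Zippel analysis of Section~\ref{sec:prob} with the same degree bounds on $P_{U'}$ and $P_U$ and the same threshold $p > \sum_{i}(\deg f_i - 1) + 1$ for condition~(3). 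This is the same route the paper takes, with the intermediate bounds made slightly more explicit.
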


As previously mentioned, the Bézout bound allows to bound the number
of solutions $D$ by the product of the degrees of the input
equations. Since this bound is generically reached we get the
following corollary.

\begin{corollary}
Let \(\K\) be the rational field $\Q$ or a finite field $\F_q$ of sufficiently large characteric $p$. Let \(\mathcal{S} = \{f_1,\ldots,f_n\}\subset
\K[x_1,\ldots,x_n]\) be a generic polynomial system generating an ideal $\id{I} = \langle\mathcal{S}\rangle$ of degree \(D\). If the degree of each polynomial in $\mathcal{S}$ is uniformly
bounded by a fixed or non fixed parameter $d$ then there exists a Las
Vegas algorithm which computes the univariate polynomial
representation of the roots of \(\mathcal{S}\) in
$\widetilde{O}(D^\omega)=\widetilde{O}(d^{\omega n})$ arithmetic
operations where the notation $\widetilde{O}$ means that we neglect
logarithmic factors in $D$ and polynomial factors in $n$.
\end{corollary}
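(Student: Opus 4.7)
The plan is to derive this corollary as a direct specialization of Theorem~\ref{thm:main2}, the hard work having already been done there; the only task is to check that the genericity assumption is strong enough to invoke that theorem and to convert the complexity bound into the simpler form $\widetilde{O}(D^\omega)=\widetilde{O}(d^{\omega n})$.

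First I would verify that the hypotheses of Theorem~\ref{thm:main2} are met. A generic system $(f_1,\ldots,f_n)$ is regular: regularity is defined (Definition~\ref{def:regular}) via the homogeneous components of highest degree, and the set of regular sequences is a non-empty Zariski open subset of the space of $n$-tuples of polynomials of prescribed degrees. Similarly, genericity ensures that $\id{I}=\langle \mathcal{S}\rangle$ is zero-dimensional, radical, and (as recalled in Section~\ref{sec:notations}) admits a LEX Gröbner basis in Shape Position after, in fact, no change of coordinates needed in the generic case. Hence Theorem~\ref{thm:main2} applies and yields a Las Vegas algorithm with arithmetic complexity
\[
O\bigl(nd^{\omega n} + \log_2(D)\bigl(D^{\omega}+n\log_2(D)D\bigr)\bigr).
\]

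Next I would invoke the Bézout bound to relate $d$ and $D$. The bound gives $D\le \prod_{i=1}^{n}\deg(f_i)\le d^{n}$, and for generic systems this is an equality $D=\prod_{i=1}^{n}\deg(f_i)$ (which equals $d^{n}$ when all input degrees are exactly $d$, and is bounded by $d^{n}$ in general). Consequently $d^{\omega n}\ge D^{\omega}$, while the summand $\log_2(D)(D^{\omega}+n\log_2(D)D)$ is absorbed into $\widetilde{O}(D^{\omega})$ since $\widetilde{O}$ neglects polynomial factors in $n$ and logarithmic factors in $D$. Combining, the total cost is $\widetilde{O}(d^{\omega n})=\widetilde{O}(D^{\omega})$, which is precisely the announced bound.

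There is essentially no obstacle here; the only subtlety is to keep track of what ``generic'' buys us. Specifically, one must note that the genericity condition simultaneously implies (i) regularity of the sequence, (ii) radicality of the ideal, (iii) the Shape Position property, and (iv) saturation of the Bézout bound. Each of these is a Zariski open condition (the intersection of finitely many such being still non-empty and dense), so a generic system satisfies all four at once, legitimizing the application of Theorem~\ref{thm:main2} and the Bézout equality in a single stroke.
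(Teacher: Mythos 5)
Your proposal is correct and follows essentially the same route as the paper: the corollary is obtained by specializing Theorem~\ref{thm:main2} to generic systems (which are regular, generate radical ideals, and can be put in Shape Position) and then using the fact that the Bézout bound is generically attained to rewrite $d^{\omega n}$ as $D^{\omega}$ up to the factors neglected by $\widetilde{O}$. Your parenthetical caveat that $D=\prod\deg(f_i)$ equals $d^{n}$ only when all input degrees are exactly $d$ is a fair point that the paper itself glosses over, but it does not affect the validity of the argument as intended.
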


\section{Acknowledgments}

The authors would like to thanks André Galligo and Daniel Lazard for
fruitful discussions about generic initial ideals and Vanessa Vitse
for providing us with an example for which the randomization is
required to \textit{``freely''} construct the multiplication matrix
$T_n$.

\bibliographystyle{abbrv}
\bibliography{orderChange}

\newpage

\appendix

\section{Impact of Algorithm~\ref{new_posso} on the practical resolution of the PoSSo problem in the worst case}\label{appendix:bench}

%-*- coding: utf-8 -*-

In this appendix we discuss about the impact of
Algorithm~\ref{new_posso} on the practical resolution of the PoSSo
problem. Note that Algorithm~\ref{algchgord} to compute the LEX
Gröbner basis given the multiplication matrix $T_n$ is of theoretical
interest. Hence, in practice we use the sparse version of Faugère and
Mou \cite{FaMo11}. In Table~\ref{tab}, we give the time to compute the
LEX Gröbner basis using the usual algorithm
(Algorithm~\ref{usual_posso}) and Algorithm~\ref{new_posso}. This time
is divided into three steps, the first is the time to compute the DRL
Gröbner basis using $F_5$ algorithm, the second is the time to compute
the multiplication matrix $T_n$ and the last part is the time to
compute the LEX Gröbner basis given $T_n$ using the algorithm
in \cite{FaMo11}. Since, this algorithm takes advantage of the
sparsity of the matrix $T_n$ we also give its density.  We also give
the number of normal forms to compute (\textit{i.e.}  the number of
terms of the form $\epsilon_i x_n$ that are not in $B$ or in
$\stair{\id{I}}$ (or in $\stair{g \cdot \id{I}}$).

The experiments are performed on a worst case for our algorithm in the
sense that the system in input is already a DRL Gröbner basis. Thus,
while the usual algorithm does not have to compute the DRL Gröbner
basis, our algorithm need to compute the DRL Gröbner basis of
$g \cdot \id{I}$. The system in input is of the form $\mathcal{S}
= \{f_1,\ldots,f_n\} \subset \F_{65521}[x_1,\ldots,x_n]$ with
$\LT{drl}{f_i} = x_i^2$. Hence, the monomials in the basis $B$ are all
the monomials of degree at most one in each variable. The degree of
the ideal $D$ is then $2^n$. The monomials $\epsilon_i x_n$ that are
not in $B$ or in $\stair{\Id{\mathcal{S}}}$ are of the form $x_n^2m$
where $m$ is a monomial in $x_1,\ldots,x_{n-1}$ of total degree
greater than zero and linear in each variable. By consequence, using
the usual algorithm we have to compute $2^{n-1}-1$ normal forms to
compute only $T_n$.

\begin{table}[!h]
\centering
\begin{tabular}{|c|c|c|c|c|c|c|c|c|}
\hline
\multirow{2}{*}{n} & \multirow{2}{*}{D} & \multirow{2}{*}{Algorithm} & First & Build & \multirow{2}{*}{$\#$ NF} & \multirow{2}{*}{Density} &
Compute & Total\\
& & & GB & $T_n$ & & & $h_1,\ldots,h_n$ & PoSSo \\
\hline
\multirow{2}{*}{7} & \multirow{2}{*}{128} & usual & 0s & 0s & 63 & 34.20\% & 0s & 0s\\
& & This work & 0s & 0s & 0 & 26.57\% & 0s & 0s\\
\hline
\multirow{2}{*}{9} & \multirow{2}{*}{512} & usual & 0s & 13s & 255 & 32.81\% & 0s & 13s\\
& & This work & 0s & 0s & 0 & 23.68\% & 0s & 0s\\
\hline
\multirow{2}{*}{11} & \multirow{2}{*}{2048} & usual & 0s & 7521s & 1023 & 31.93\% & 23s & 7544s \\
& & This work & 5s & 0s & 0 & 21.53\% & 0s & 5s\\
\hline
\multirow{2}{*}{13} & \multirow{2}{*}{8192} & usual & 0s & $> 2$ days & 4095 & & & $> 2$ days \\
& & This work & 157s & 2s & 0 & 19.86\% & 26s & 185s\\
\hline
\multirow{2}{*}{15} & \multirow{2}{*}{32768} & usual & 0s & $> 2$ days & 16383 & & & $> 2$ days \\
& & This work & 5786s & 46s & 0 & 18.52\% & 1886s & 7718s\\
\hline
\multirow{2}{*}{16} & \multirow{2}{*}{65536} & usual & 0s & $> 2$ days & 32767 & & & $> 2$ days \\
& & This work & 38067s & 195s & 0 & 18.33\% & 14297s & 52559s\\
\hline
\end{tabular}
\caption{A worst case example: comparison of the usual algorithm for solving the PoSSo problem and Algorithm~\ref{new_posso}, the proposed algorithm. Computation with FGb on a 3.47 GHz Intel Xeon X5677 CPU.\label{tab}}
\end{table}

One can note that in the usual algorithm the bottleneck of the
resolution of the PoSSo problem is the change of ordering due to the
construction of the multiplication matrix $T_n$. Since our algorithm
allows to compute very efficiently the matrix $T_n$ (for instance for
$n=11$, 0 seconds in comparison to 7544 seconds for the usual
algorithm), the most time consuming step becomes the computation of
the DRL Gröbner basis. However, the total running time of our
algorithm is far less than that of the usual algorithm. For instance,
for $n=13$ the PoSSo problem can now be solved in approximately three
minutes whereas we are not allow to solve this instance of the PoSSo
problem using the usual algorithm.

Moreover, using Algorithm~\ref{new_posso} the density of the
matrix $T_n$ is decreased (which implies that the running time of
Faugère and Mou algorithms is also decreased). This can be explained
by the fact that the dense columns of the matrix $T_n$ comes from
monomials of the form $x_n\epsilon_i$ that are not in
$B$ \textit{i.e.} in the frontier. Since
Algorithm~\ref{new_posso} allows to ensure that the monomials
$x_n\epsilon_i$ are either in $B$ or in $\stair{g \cdot \id{I}}$ then
the number of dense columns in $T_n$ is potentially decreased.

\end{document}